\documentclass[aoas]{imsart}
\def\journal@name{}

\RequirePackage{amsthm,amsmath,amsfonts,amssymb,bm}
\RequirePackage[authoryear]{natbib}
\RequirePackage[colorlinks,citecolor=blue,urlcolor=blue]{hyperref}
\RequirePackage{graphicx}
\RequirePackage{tabularx,booktabs,multirow,longtable}
\RequirePackage{rotating}
\RequirePackage{physics}
\RequirePackage{placeins}
\RequirePackage[nameinlink]{cleveref}
\RequirePackage{algorithm, algpseudocode}

\RequirePackage{bibunits}
\defaultbibliographystyle{imsart-nameyear}
\graphicspath{{./}{appendix/}}
\pdfoutput=1

\RequirePackage{lmodern}
\startlocaldefs
\numberwithin{equation}{section}

\newtheorem{prop}{Proposition}

\theoremstyle{definition}
\newtheorem{remark}{Remark}
\newcommand{\be}{\bm{e}}

\newcommand{\bx}{\bm{x}}

\newcommand{\bB}{\mathbf{B}}
\newcommand{\bD}{\mathbf{D}}
\newcommand{\bE}{\mathbf{E}}

\newcommand{\bI}{\mathbf{I}}
\newcommand{\bK}{\mathbf{K}}
\newcommand{\bM}{\mathbf{M}}

\newcommand{\bP}{\mathbf{P}}
\newcommand{\bQ}{\mathbf{Q}}
\newcommand{\bS}{\mathbf{S}}
\newcommand{\bT}{\mathbf{T}}
\newcommand{\bU}{\mathbf{U}}
\newcommand{\bW}{\mathbf{W}}

\newcommand{\bOmega}{\bm{\Omega}}
\newcommand{\bomega}{\bm{\omega}}

\newcommand{\bdelta}{\bm{\delta}}
\newcommand{\bGamma}{\bm{\Gamma}}
\newcommand{\bbeta}{\bm{\beta}}

\newcommand{\EE}{\mathbb{E}}
\newcommand{\RR}{\mathbb{R}}

\newcommand{\PP}{\mathbb{P}}

\newcommand{\mcA}{\mathcal{A}}
\newcommand{\mcB}{\mathcal{B}}

\newcommand{\mcS}{\mathcal{S}}

\newcommand{\mcX}{\mathcal{X}}

\newcommand{\red}[1]{\textcolor{red}{#1}}
\newcommand{\blu}[1]{\textcolor{blue}{#1}}

\endlocaldefs

\begin{document}

\begin{bibunit}

\begin{frontmatter}
\title{Bayesian modeling and prediction of \\ generalized contact matrices}
\runtitle{Inferring and Predicting Generalized Social Contact Matrices}

\begin{aug}
\author[A]{\fnms{Shozen}~\snm{Dan}\ead[label=e1]{shozen.dan21@imperial.ac.uk}},
\author[A]{\fnms{David A.}~\snm{van Dyk}},
\author[B]{\fnms{Zhi}~\snm{Ling}},
\author[B,C]{\fnms{Swapnil}~\snm{Mishra}}
\and
\author[A]{\fnms{Oliver}~\snm{Ratmann}\ead[label=e3]{oliver.ratmann@imperial.ac.uk}}
\address[A]{Imperial College London, Department of Mathematics \\ \printead{e1,e3}}
\address[B]{National University of Singapore, Saw Swee Hock School of Public Health}
\address[C]{National University of Singapore, Institute of Data Science}
\end{aug}

\begin{abstract}
Social contact matrices are essential tools in infectious disease epidemiology as they quantify close-range human contact patterns which directly drive the transmission of airborne infectious diseases. In this work we propose a Bayesian modeling framework for inferring generalized contact matrices which stratify contact matrices beyond contemporary age dimensions. The model is designed to satisfy fundamental structural assumptions of contacts while leveraging tensor structures and smoothing constraints to make high-dimensional matrix estimation computationally feasible and statistically stable. We discover a link between multi-dimensional matrix stratification subject to structural constraints with the theory of contingency tables. This enables us to approach a challenging missing-data problem commonly encountered in real-world analysis where feature information on the contacts is unobserved. We benchmark the framework against existing methods through simulation studies and illustrate the framework’s practical utility through two real-world datasets: BICS (United States) and COVIMOD (Germany). Our models are implemented in an open-source Python package to facilitate adoption in the wider scientific community.
\end{abstract}

\begin{keyword}
\kwd{Infectious diseases}
\kwd{Social contacts}
\kwd{Bayesian Modeling}
\kwd{Missing Data}
\end{keyword}

\end{frontmatter}

\section{Introduction}

This article develops the \emph{generalized Bayesian social mixing} (\texttt{g-mix}) model, a principled framework for inferring human social contact patterns by features beyond the age of contacting and contacted individuals (e.g., gender, race and ethnicity, poverty, or socioeconomic status) in the form of \emph{generalized contact matrices}. 
Understanding such patterns of social interaction is particularly important in infectious disease epidemiology, because many pathogens such as coronaviruses and influenza transmit in close-range proximity \citep{anderson_infectious_1991}. 
Indeed, the COVID-19 pandemic highlighted how strongly transmission dynamics can be modulated by demographic and socioeconomic factors beyond age \citep{buckee_thinking_2021, manna_generalized_2024,liu_impact_2021, li_temporal_2021, malani_seroprevalence_2021}, adding to evidence from tuberculosis \citep{pelissari_tuberculosis_2017}, Ebola \citep{brainard_ebola_2016}, Mpox \citep{endo_mpox_2022}, and other epidemics \citep{mamelund_influenza_2021}.
This is spurring efforts to understand how contact patterns vary across these dimensions and to incorporate this understanding into transmission models \citep{manna_generalized_2024, di_domenico_individual_2025}. 

Social contact patterns can be captured through various data streams, including mobile phone telemetry \citep{rudiger_predicting_2021}, digital contact tracing logs \citep{cattuto_dynamics_2010, stopczynski_measuring_2014} and social contact surveys \citep{mossong_social_2008, hoang_systematic_2019, liu_rapid_2021, mousa_social_2021}.
Among these, social contact surveys serve as the cornerstone approach as they capture disease-relevant interactions, provide rich contextual detail about both the respondents and their contacts, and face fewer privacy barriers compared to digital tracing methods. 
Data from social contact surveys are typically summarized as \emph{social contact matrices}. Each cell of the matrix encodes the \emph{contact intensity}: the average number of contacts a person in a specific demographic group has with members of another group. 
These matrices serve as inputs to downstream tasks, improving the estimation of reproduction numbers \citep{gimma_changes_2022, monod_age_2021}, the monitoring of intervention impacts on disease-relevant behavior \citep{coletti_comix_2020, tomori_individual_2021, willem_socrates_2020}, or the design of optimal vaccine allocation strategies \citep{medlock_optimizing_2009}. 

Our work is inspired by model-based spatial statistics~\citep{banerjee_hierarchical_2014}, as contact intensities by age of respondent and contact can be thought of as smooth, potentially non-isotropic surfaces over a compact domain akin to longitude and latitude, and additional features result in variations of the age-age surface. 
Crucially, in a closed population, the true unknown number of contacts between two groups A and B is the same, regardless if measured from people in A or B.
This \emph{reciprocity} property provides symmetries on latent model parameters that can be exploited for statistically efficient inference, especially when we wish to infer generalized contact matrices by features from sparse data. 
We adopt a Bayesian framework that naturally accounts for finite sample sizes, enables principled modeling of marginal and interaction effects across multiple features beyond age and gender, and unlocks machinery for feature selection, uncertainty quantification, and prediction.
Missing data is a particularly relevant challenge, since respondents know their household size, income, or educational attainment but rarely those of their contacts, and thus the contact intensities between groups defined by these features are only partially observable.
We show it is possible to derive bounds for these unobserved contact intensities in the same way as bounds for inner cells in contingency tables can be derived from fixed margins \citep{dobra_bounds_2000}, and further it is possible to incorporate external information within the Bayesian predictive distribution when available.
Scientifically, this moves us closer to understanding how contact patterns vary across dimensions of social stratification beyond age, and thus to better understanding the drivers of infectious disease disparities and improving the design of interventions to mitigate them.

Earlier work has focused on inferring contact intensities between age groups \citep{mossong_social_2008, wallinga_using_2006, goeyvaerts_estimating_2010, prem_projecting_2017, kassteele_efficient_2017, dan_estimating_2023}. 
The popular \texttt{socialmixr} package \citep{funk_socialmixr_2024} infers contact intensities using sample means and quantifies uncertainty using the bootstrap \citep{efron_bootstrap_1979}. In this approach, reciprocity is enforced with a post-hoc correction, where the final contact matrix estimate is a weighted sum of the original contact intensity matrix and its transpose.
\citet{wallinga_using_2006} pioneered the model-based approach that exploits reciprocity.
A key idea of their work is to model contact counts as over-dispersed Negative Binomial random variables, and to enforce reciprocity on the latent population-level contacts by maximizing a composite likelihood that pools information from both perspectives of the contact matrix. 
This allows the method to derive a consensus estimate for the contact intensity between any two groups. 
The neighboring cells of a contact matrix often exhibit strong autocorrelation. Recognizing this, \citet{goeyvaerts_estimating_2010, mossong_social_2008, hens_mining_2009, ogunjimi_using_2009} employed tensor-product splines to model contact intensity matrices with one-year age bins as continuous surfaces. Within this framework, reciprocity is enforced using a smooth-then-constrain approach \citep{mammen_general_2001}, where the surface is first estimated and subsequently adjusted to satisfy reciprocity.
\citet{prem_projecting_2017} developed a Bayesian approach for contact matrix inference, allowing for context-specific estimates by household, work, school, and community settings. They utilized first-order intrinsic Gaussian Markov random field (IGMRF) priors to borrow information across neighboring age groups, but did not intrinsically enforce reciprocity. In the IGMRF context, \citet{kassteele_efficient_2017} realized that reciprocity by age and gender can be enforced with Kronecker product operations on lower triangular matrices, which also brings computational gains by cutting the parameter space approximately in half and placing regularising priors on these. \citet{dan_estimating_2023} showed how reciprocity can be leveraged to reconstruct contact patterns when there is information loss due to contact age information being reported in wide age ranges.
While effective for estimating age-by-age contact matrices, these methods lack a principled mechanism for handling reciprocity in the generalized case, and therefore do not readily extend to the inference of contact patterns by additional features.

In parallel with our work, \citet{di_domenico_individual_2025} extended the \texttt{socialmixr} approach to the generalized case. To this, a model-based approach adds the ability to handle the many zeros in sparse data by features, model marginal and interaction effects, estimate effect sizes, and utilize standard model selection and prediction techniques.

The remainder of this paper is organized as follows. Section 2 introduces basic notation and fundamental concepts, followed by our methodological contributions. Section 3 outlines the prior distributions for the parameters in our model. Section 4 describes the numerical inference procedure and predictive inference for fully-stratified intensities from partially-stratified data. Section 5 describes the model selection procedure. Section 6 validates and benchmarks our framework against existing methods through simulations. Section 7 evaluates predictive accuracy when contact feature information is missing for reported contacts. Section 8 illustrates model-based inference of generalized contact matrices on two empirical datasets. Finally, Section 9 discusses our findings, limitations, and directions for future research.
\section{Methodology}\label{sec:methodology}

\subsection{Notation, Definitions, and Key Properties}

To introduce the basic notation, key definitions, and the fundamental properties, we begin by considering the simple case of inferring contact patterns by the age of survey respondents and the age of individuals that they contact. We extend this to the generalized case in \Cref{sec:inferring-generalised-matrices}.
Let $\mcA$ denote the set of 1-year age groups (e.g., $\{0,1,\ldots,83,84\}$) or disjoint age ranges (e.g., $\{\text{0-4},\text{5-9},\ldots,\text{75-79},\text{80-84}\}$). 
For $a,b \in \mcA$, $Y_{a,b}$ denotes the random variable for the total number of contacts reported by $N_a$ respondents of age $a$ had with individuals of age $b$ during a fixed time window.
Next, let $Z_{a,b}$ be the unknown population-level equivalent of the observed $Y_{a,b}$ in the sample, i.e., the random variable for the total number of contacts by $P_a$ individuals of age $a$ to the individuals of age $b$ in the entire population during the same time window. We consider $Z_{a,b}$ to be a quantity that exists in principle but is not observed. In this work, the population size $P_a$ and sample size $N_a$ are considered to be fixed quantities and we condition on them throughout.

The main quantity of scientific interest is the \emph{contact intensity} which is defined as the average number of contacts from a person of age $a$ in the population to all people of age $b$ in the population:
\begin{equation}\label{eq:contact-intensity-ab}
    m_{a,b} = \frac{\EE[Z_{a,b}]}{P_a}, \quad \forall a,b \in \mathcal{A}.
\end{equation}
Throughout, we assume that contact behavior of respondents are representative of the contact behavior of the population, that is:
\begin{equation} \label{eq:EY}
    m_{a,b} = \frac{\EE[Z_{a,b}]}{P_a} = \frac{(N_a/P_a)\EE[Z_{a,b}]}{N_a} = \frac{\EE[Y_{a,b}]}{N_a}, \quad \forall a,b \in \mathcal{A}.
\end{equation}
As social contacts are reciprocal events,
\begin{equation}
Z_{a,b} = Z_{b,a}, \quad \forall a,b \in \mathcal{A},
\end{equation}
which we refer to as the \emph{fundamental reciprocity property}. From Eq.~\eqref{eq:contact-intensity-ab}, it is clear that contact intensities are not reciprocal. However, the contact rates, defined as
\begin{equation}\label{eq:contact-rates-ab}
    \gamma_{a,b} = m_{a,b} / P_b, \quad \forall a,b \in \mathcal{A},
\end{equation}
are reciprocating, since $\EE[ Z_{a,b} / (P_a P_b)] = \EE[ Z_{b,a} / (P_b P_a)]$ and so $\gamma_{a,b} = \gamma_{b,a}$ for all $a,b\in\mathcal{A}$.
We can intuit the contact rate as follows. Letting $P$ be the total population size, $P = \sum_a P_a$, we have $\EE[Z_{a,b}] = [ \gamma_{a,b} (P_a/P) (P_b/P) ] \cdot P^2$. This shows that if the contact rate were constant across all $a$ and $b$, then the contact pattern would be completely determined by the demographic structure. This gives the contact rate its physical meaning: it captures the departure from purely random mixing, in this case for specific age pairs.

The contacts intensities and rates have particular aggregation properties, and it is helpful to understand them from a simple example. For any two groups $a_1, a_2 \in \mcA$ and $b_1, b_2 \in \mcA$, we have $Z_{a_1 \cup a_2, b} = Z_{a_1,b} + Z_{a_2,b}$ and $Z_{a, b_1 \cup b_2} = Z_{a,b_1} + Z_{a,b_2}$. Eqs.~(\ref{eq:contact-intensity-ab},\ref{eq:contact-rates-ab}) then give:
\begin{subequations}\label{eq:agg-props}
\begin{align}
& m_{a_1 \cup a_2, b} = \frac{P_{a_1}}{P_{a_1 \cup a_2}} m_{a_1,b} + \frac{P_{a_2}}{P_{a_1 \cup a_2}}m_{a_2,b}, \quad m_{a, b_1 \cup b_2} = m_{a,b_1} +  m_{a,b_2}, \\
& \gamma_{a_1 \cup a_2, b} = \frac{P_{a_1}}{P_{a_1 \cup a_2}} \gamma_{a_1,b} + \frac{P_{a_2}}{P_{a_1 \cup a_2}} \gamma_{a_2,b}, \quad\gamma_{a, b_1 \cup b_2} = \frac{P_{b_1}}{P_{b_1 \cup b_2}} \gamma_{a,b_1} + \frac{P_{b_2}}{P_{b_1 \cup b_2}} \gamma_{a,b_2},
\end{align}
\end{subequations}
where $P_{a \cup b} = P_{a} + P_{b}$.

\subsection{Bayesian Modeling of Generalized Contact Matrices}\label{sec:inferring-generalised-matrices}
We develop a Bayesian model for inferring generalized contact matrices, which we refer to as the \texttt{g-mix} model, and describe its key components in this section.

Let us consider inferring contact matrices by age and one additional, categorical feature (e.g., sex, socioeconomic status, or occupation) with a total of $K$ strata indexed by $k,\ell \in \mcX := \{1,\ldots,K\}$. 
For the moment, we assume that we also have information for the same categorical feature for the contacts, and refer to this as the \emph{complete-data} setting. We consider the case where the feature information is only available for respondents in \Cref{sec:partially-stratified-case}.
We let $Y^{k,\ell}_{a,b}$ denote the observed number of contacts from respondents of stratum $k$ and age $a$ in the sample to members of stratum $\ell$ and age $b$ in the population, and extend our previous notation analogously to $P^k_a$, $N^k_a$, $Z^{k,\ell}_{a,b}$, $m^{k,\ell}_{a,b}$, and $\gamma^{k,\ell}_{a,b}$. 
Note that the reciprocity and aggregation properties extend naturally to the generalized case.
We consider $Y^{k,\ell}_{a,b}$ as realizations of the following Negative Binomial model
\begin{equation}\label{eq:fully-stratified-model} 
Y^{k,\ell}_{a,b} \sim \text{NegBin}\Big(\EE[Y^{k,\ell}_{a,b}], \varphi \Big), \quad \forall k,\ell \in \mcX,\ \forall a,b \in \mcA,
\end{equation}
where $\varphi$ is a global overdispersion parameter such that such that $\operatorname{Var}[Y] = \EE[Y] + \EE[Y]^2/\varphi$.
Other count distributions such as the Poisson or Beta Negative Binomial~\citep{ling_contact_2026} can also be used. Generalizing (\ref{eq:EY}) and (\ref{eq:contact-rates-ab}), we specify a log linear model on the expected number of total reported contacts,
\begin{subequations}\label{eq:fully-stratified-linear-predictor}    
\begin{align}
    \log \EE[Y^{k,\ell}_{a,b}] & = \log m^{k,\ell}_{a,b} + \log N^{k}_a \\
    \log m^{k,\ell}_{a,b} & = \log \gamma_{a,b} + \log \delta^{k,\ell}_{a,b} + \log P^\ell_b.
\end{align}
\end{subequations} 
Here, $\gamma_{a,b}$ is the baseline age-specific contact rate, and we refer to the $\delta^{k,\ell}_{a,b}$ as the \emph{complete contact rate modifiers}, as they modify the age-specific contact rate in both $k$ and $\ell$ to account for heterogeneity given the additional feature. Thus, if $\delta^{k,\ell}_{a,b} > 1$ for some $k$ and $\ell$, then the corresponding population groups mix more than expected given the age-specific contact rate, and if $\delta^{k,\ell}_{a,b} < 1$, then they mix less than expected. We assume the population sizes $P^\ell_b$ are known for all $\ell$ and $b$. We refer to this hierarchical Bayesian model, Eqs.~\eqref{eq:fully-stratified-model}--\eqref{eq:fully-stratified-linear-predictor}, as the \texttt{g-mix} model.

To ensure the estimated contact patterns are valid, our framework imposes structural constraints.
Following from $\EE[Z^{k,\ell}_{a,b}] = \EE[Z^{\ell,k}_{b,a}]$ and  Eqs.~\eqref{eq:agg-props}, the model parameters must satisfy:
\begin{subequations}
\begin{align}
    \text{(age reciprocity)} \quad\quad & \gamma_{a,b} = \gamma_{b,a}, \quad \forall a,b \in \mcA 
    \label{property:age-reciprocity} 
    \\    
    \text{(modifier reciprocity)} \quad\quad  & \delta^{k,\ell}_{a,b} = \delta^{\ell,k}_{b,a}, \quad \forall k,\ell \in \mcX, \ \forall a,b \in \mcA 
    \label{property:full-reciprocity} 
    \\
    \text{(marginal consistency)} \quad\quad  & \sum_{k \in \mcX} \sum_{\ell \in \mcX} \delta^{k,\ell}_{a,b} (P^k_a/P_a) (P^\ell_b/P_b) = 1 \quad \forall a,b \in \mcA.
    \label{property:full-consistency}
\end{align}
\end{subequations}
The benefits of these constraints are: they (1) ensure that the estimated contact patterns are valid and consistent with the underlying population structure; (2) allow us to cut the number of free parameters in approximately half and thereby reduce sample size requirements on contact surveys; and (3) enable us to deduce unseen contact patterns by multiple features. Proofs of all properties, constraint equations, and propositions in this section are provided in Section~A of the Supplement~\citep{dan_supplement_2026}.

We are able to satisfy all $A^2(2+K^2)$ constraint equations through a combination of parameter permutations and simplex transformations.
We can meet age reciprocity, Eq.~\eqref{property:age-reciprocity}, by considering only the lower triangular part of the $\gamma_{a,b}$ matrix and then filling in the upper triangular part as in~\citep{kassteele_efficient_2017, dan_estimating_2023}. 
Next, the terms that constitute the sum in \eqref{property:full-consistency} must sum to one over $(k,\ell)$, and thus lie in a simplex space for each fixed age pair $(a,b)$. 
Since optimization or Monte Carlo sampling across many simplexes is computationally difficult, we want to construct the actual model parameters in an unconstrained latent real space and then use one of many possible link functions to transform into all of our simplexes.
Of these link functions, we want one that is permutation equivariant, and thus preserves modifier reciprocity, Eq.~\eqref{property:full-reciprocity}. 

To develop this technically, it is helpful to collect the modifiers and the product population size proportions into $K^2$-dimensional vectors for each age pair $(a,b)$,
\begin{equation*}
\bdelta_{a,b} := \text{vec}\left( \big\{\delta^{k,\ell}_{a,b}: k,\ell \in \mcX\big\} \right) \in \RR^{K^2}, \ \ \bm{s}_{a,b} := \text{vec}\left( \bigg\{\frac{P^k_a}{P_a} \frac{P^\ell_b}{P_b}: k,\ell \in \mcX\bigg\} \right) \in \RR^{K^2}.
\end{equation*}
Reciprocity corresponds to a particular permutation of the elements of $\bdelta_{a,b}$. We define the permutation matrix $\Pi \in \RR^{K^2 \times K^2}$, which swaps the indices corresponding to $(k,\ell)$ and $(\ell,k)$. 
%
%
Specifically, the matrix is constructed as $\Pi_{(\ell-1)K + k, (k-1)K + \ell} = 1$ for $k,\ell \in \{1,\ldots,K\}$, with other entries set to zero. This allows us to express modifier reciprocity as $\bdelta_{b,a} = \Pi \bdelta_{a,b}$, and consistency as $\bdelta_{a,b}^\top \bm{s}_{a,b} = 1$ for all $a,b\in\mcA$.
%
%
We now relate $\bdelta_{a,b}$ to an unconstrained latent parameter vector $\bomega_{a,b}$ via a link function that maps into the $K^2$-simplex space $\mathbb{S}^{K^2}$, $h: \RR^{K^2} \mapsto \mathbb{S}^{K^2}$: $\bdelta_{a,b} = h(\bomega_{a,b}) \oslash \bm{s}_{a,b}$,  where $\oslash$ denotes element-wise division. 
To satisfy Eq.~\eqref{property:full-reciprocity}, $h$ must be permutation equivariant, $\Pi h(\bomega) = h(\Pi \bomega)$. Under this condition, if we enforce symmetry on the latent parameters such that $\bomega_{b,a} = \Pi \bomega_{a,b}$, the symmetry of the resulting modifiers $\bdelta_{b,a} = \Pi \bdelta_{a,b}$ is automatically guaranteed. 
There are three commonly used simplex maps: the additive log-ratio transform \citep{aitchison_statistical_1982}, centered log-ratio transform (a.k.a. the inverse softmax), and the isometric log-ratio transform \citep{egozcue_isometric_2003}. Among them, the additive log-ratio transform and the centered log-ratio transform are permutation equivariant (see section A of \citet{dan_supplement_2026}). We chose the softmax as it is easier to implement in practice:
\begin{equation}\label{eq:deriving-delta-vector}
    \bdelta_{a,b} = h(\bomega_{a,b}) \oslash \bm{s}_{a,b}
    = \Bigg[ \frac{\exp(\omega^{1,1}_{a,b})}{\sum_{k,\ell} \exp(\omega^{k,\ell}_{a,b})} \big/ \frac{P_a^1 P_b^1}{P_a P_b}, 
    \cdots, 
    \frac{\exp(\omega^{K,K}_{a,b})}{\sum_{k,\ell} \exp(\omega^{k,\ell}_{a,b})} \big/ \frac{P_a^K P_b^K}{P_a P_b} \Bigg].
\end{equation}

Figure~\ref{fig:schematic} visualizes the transformation process of the unconstrained latent parameters $\bomega_{a,b}$ in our model, which we organize into  a 3D tensor $\bOmega$ of dimensions $K^2 \times A \times A$. We can view this tensor in two distinct ways, handling the constraints and regularization in the context of sparse data. 
First, we have fibers handling constraints. If we fix the age pair $(a,b)$ and look down through the tensor, we have a vector (or fiber) containing the values for all $(k,\ell)$ pairs at that specific age pair. The softmax function is applied fiber-wise. It takes this vector of unconstrained values, exponentiates then, normalizes the results relative to its sum, and produces valid proportions.
Second, we have slices handling regularization. If we fix a specific stratification pair $(k,\ell)$, e.g., the contacts between male and female individuals, we obtain an $A \times A$ matrix $\bOmega^{k,\ell}$. Since contact patterns typically vary smoothly with age, we place a spatial smoothing prior (see \Cref{sec:priors} for details) on each $\bOmega^{k,\ell}$ slice. This allows the model to borrow strength across adjacent age groups.

Due to reciprocity, we can reduce the dimension of the tensor by approximately half. We can classify contacts into two types: between-strata (interactions between different groups, $k \neq \ell$) and within-stratum (interactions within the same group, $k = \ell$). In the between-strata case, reciprocity requires that the matrix for $(k, \ell)$ be the transpose of the matrix for $(\ell,k)$. Consequently, we only need to estimate one of these. In the within-stratum case ($k = \ell$), reciprocity requires that the matrix itself be symmetric. Therefore, we only need to estimate the lower triangular elements including the diagonal.

\begin{figure}
    \centering
    \includegraphics[width=\linewidth]{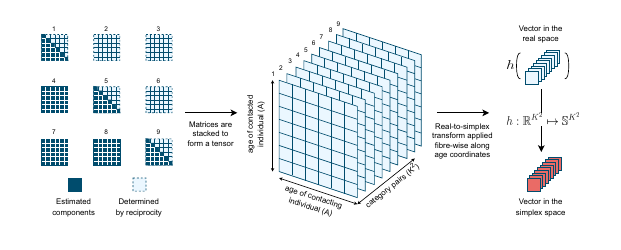}
    \caption{Schematic of the estimation procedure for enforcing structural constraints. The diagram illustrates the workflow for a scenario with $A=6$ age groups and a single stratification feature with $K=3$ strata. (Left) We start by estimating the individual matrices of dimension $A \times A$ that make up a tensor $\Omega$ of dimension $K^2 \times A \times A$. For contacts between members of different strata (between-strata, $k\neq\ell$), we estimate a single matrix (matrices 4, 7, 8); the reverse matrix is determined via transposition (matrices 2, 3, 6). For contact between members of the same stratum (within-stratum, $k=\ell$), we estimate only the lower-triangular elements of the matrices, mirroring them to populate the upper triangle (matrices 1, 5, 9). (Middle) These matrices are stacked to form the tensor $\Omega$. (Right) The fiber at each age coordinate $(a,b)$ is passed through a softmax transformation. This projects the estimates onto a $K^2$-simplex, ensuring that the resulting vector sums to 1.}
    \label{fig:schematic}
\end{figure}

\subsection{Requirements on Model Flexibility}
Allowing the rate modifiers $\delta^{k, \ell}_{a,b}$ to depend on $a$ and $b$ results in a highly parameterized model. A potential simplifying assumption sets $\delta^{k,\ell}_{a,b} = \delta^{k,\ell}$ for all $a,b\in \mcA$, which drastically reduce the parameter space. However, this simplification is often too restrictive as we show in the following proposition.

\begin{prop}[The Rank Condition]\label{proposition:infeasibility} Assume age-constant modifiers $\delta^{k, \ell}_{a,b} = \delta^{k, \ell}$ for all $a,b\in\mcA$. Define the $A^2 \times K^2$ matrix $\bS$ with elements, $\bS_{(a,b),(k,\ell)} = P^k_a P^\ell_b / (P_a P_b)$.
Modifier consistency~\eqref{property:full-consistency} can be expressed as $\bS\bdelta = \bm{1}_{A^2}$ where $\bdelta \in \mathbb{R}^{K^2}$ is the vector of age-constant modifiers and $\bm{1}_A^2$ is a $A^2$ vector of ones. The following hold:
\begin{enumerate}
    \item[(1)] If $\operatorname{rank}(\bS) = K^2$ and $A^2 \geq K^2$, then $\bdelta = \bm{1}_{K^2}$ is the unique solution.
    \item[(2)] If $\operatorname{rank}(\bS) < K^2$, infinitely many solutions exist.
\end{enumerate}
\end{prop}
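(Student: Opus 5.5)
The plan is to reduce both claims to elementary linear algebra on the system $\bS\bdelta = \bm{1}_{A^2}$. The first step is to check that the reformulation is correct. Under age-constant modifiers, the consistency constraint~\eqref{property:full-consistency} for a fixed age pair $(a,b)$ reads $\sum_{k,\ell}\delta^{k,\ell}(P^k_a/P_a)(P^\ell_b/P_b)=1$. This is exactly the $(a,b)$-th row of $\bS\bdelta$, so stacking the rows over all $A^2$ age pairs gives $\bS\bdelta=\bm{1}_{A^2}$.

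The second step is to show the system is always consistent, with $\bdelta=\bm{1}_{K^2}$ as an explicit solution. Because $\sum_k P^k_a = P_a$ for every $a$, each row sum of $\bS$ factorises:
\begin{equation*}
\sum_{k,\ell} \frac{P^k_a P^\ell_b}{P_a P_b} = \Big(\sum_k \frac{P^k_a}{P_a}\Big)\Big(\sum_\ell \frac{P^\ell_b}{P_b}\Big) = 1 .
\end{equation*}
Hence $\bS\bm{1}_{K^2}=\bm{1}_{A^2}$, and the solution set is the affine space $\bm{1}_{K^2}+\ker(\bS)$.

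For claim (1), rank--nullity applied to $\bS$ as a map $\RR^{K^2}\to\RR^{A^2}$ gives $\dim\ker\bS = K^2-\operatorname{rank}(\bS)=0$. The solution set therefore collapses to the single point $\bm{1}_{K^2}$. The hypothesis $A^2\ge K^2$ only ensures that full column rank $K^2$ is attainable, since $\operatorname{rank}(\bS)\le\min(A^2,K^2)$.

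For claim (2), if $\operatorname{rank}(\bS)<K^2$, then $\ker\bS$ contains a nonzero vector $\bm v$. Every $\bm{1}_{K^2}+t\bm v$ with $t\in\RR$ is then a solution, which gives infinitely many.

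It is also worth checking compatibility with modifier reciprocity and positivity. Under age-constancy, reciprocity becomes $\delta^{k,\ell}=\delta^{\ell,k}$. One can show that $\Pi$-symmetrised kernel vectors remain in the kernel, because $\bS_{(b,a),\cdot}=\bS_{(a,b),\cdot}\Pi$. So $\tfrac12(\bm v+\Pi\bm v)$ is in the kernel, and if it vanishes one can use the reciprocal structure instead. For small $|t|$ the solutions stay positive. These remarks are not strictly needed for the statement as written.

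There is no real obstacle here. The only point requiring care is the existence of a solution, and the explicit solution $\bm{1}_{K^2}$ from the row-sum identity settles it. Without that identity, the rank-deficient case could in principle have no solutions at all.
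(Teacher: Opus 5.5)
Your proposal is correct and follows the paper's argument. Both exhibit $\bm{1}_{K^2}$ as a particular solution via $\bS\bm{1}_{K^2}=\bm{1}_{A^2}$ and read off uniqueness or non-uniqueness from the null space of $\bS$; the paper phrases uniqueness through $(\bS^\top\bS)^{-1}\bS^\top\bm{1}_{A^2}$ rather than rank--nullity, which makes no difference, and you additionally derive the row-sum identity from $\sum_k P^k_a = P_a$, which the paper simply asserts.
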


In most practical settings where $A$ is moderately large and different strata have distinct age distributions, the matrix $\bS$ achieves full column rank with $A^2 \geq K^2$; this falls into the first case where the unique solution is $\delta^{k,\ell} = 1$ for all $k,\ell$. This defeats the objective: the model is insufficiently flexible, must always return $1$'s regardless of the data, and is therefore unable to answer the scientific question.
If one wishes to use age-constant modifiers and is prepared to reduce the number of age groups $A$, say to 3 or 4 age groups, the rank condition must be checked: when $\operatorname{rank}(\bS) < K^2$ (the second case), the simplification is feasible and infinitely many non-trivial modifier vectors exist. If $\bS$ retains full column rank despite the reduction in $A$, then $A$ must be reduced further or the modifiers must be allowed to vary by age.

\subsection{Modeling of Partially Observed Data}\label{sec:partially-stratified-case}
In practice, survey respondents can provide accurate information on their own income, vaccination status, or occupation, but they rarely know such detailed information for their contacts. We refer to this as the \emph{partial-data} setting.
In this case, we only have data $Y^{k,.}_{a,b}$ on the total number of contacts from respondents of stratum $k$ and age $a$ to members of age $b$ in the population.
The corresponding model on the partial data follows analogously from the complete-data model, 
\begin{subequations}\label{eq:partially-stratified-model}
\begin{align}
    Y^{k,.}_{a,b} & \sim \text{NegBin}(\EE[Y^{k,.}_{a,b}], \varphi)     
    \\
    \log \EE[Y^{k,.}_{a,b}] & = \log\gamma_{a,b} + \log \delta^{k,.}_{a,b} + \log P_b + \log N^{k}_a
    \\
    \bdelta_{a,b} & 
    = \Bigg[ \frac{\exp(\omega^{1,.}_{a,b})}{\sum_{k} \exp(\omega^{k,.}_{a,b})} \big/ \frac{P_a^1}{P_a}, 
    \cdots, 
    \frac{\exp(\omega^{K,.}_{a,b})}{\sum_{k} \exp(\omega^{k,.}_{a,b})} \big/ \frac{P_a^K}{P_a} \Bigg].\label{eq:partially-stratified-model-c}
\end{align}
\end{subequations}
The corresponding consistency constraint is $\sum_{k \in \mcX} \delta^{k,.}_{a,b} (P^k_a/P_a) = 1$ for all $a,b\in\mcA$, and is met automatically by the softmax transformation in Eq.~\eqref{eq:partially-stratified-model-c}. The partial modifier is related to the complete modifier by $\delta^{k,.}_{a,b} = \sum_{\ell} \delta^{k,\ell}_{a,b} P^\ell_b / P_b$, i.e., the partial-data model marginalizes over the unobserved contact strata.
Correspondingly, we need to consider the $A \times K$ matrix $\bS'$ with elements $\bS'_{a,k} = P^k_a / P_a$, and replace in the rank conditions $K^2$ with $K$. 

Crucially, because of reciprocity, feature information on the respondents alone provides insights beyond the partial contact intensities $m^{k,.}_{a,b}$. To see this, imagine a contingency table with margins $Z_{a,b}^{k,.}$ and $Z_{a,b}^{.,\ell}$ for fixed $a,b$ and a table over $k,\ell \in \mathcal{X}$. First, if we learn the margins $m^{k,.}_{a,b}$ with our model Eq.~\eqref{eq:partially-stratified-model}, the other margin is determined by
\begin{equation}
    P_b m^{.,\ell}_{b,a} = \EE[Z_{b,a}^{.,\ell}] = \EE[Z_{a,b}^{\ell,.}] = P^\ell_a m^{\ell,.}_{a,b}, \quad \forall \ell \in \mcX, \forall a,b \in \mcA.
\end{equation}
Next, given fixed margins, we can place bounds on the inner cells of the contingency table due to the aggregation properties for each $a,b$. Any posterior predictive distribution on the complete contact intensities $m^{k,\ell}_{a,b}$ must satisfy these bounds, and additionally the constraints that arise from the reciprocity in the complete-data contacts $Z_{a,b}^{k,\ell}$.  

A central quantity that characterizes pathogen spread is the reproduction number $R_0$, the expected number of infections that one person generates in an entirely susceptible population. 
To estimate $R_0$ in structured populations stratified by $a,b,k,\ell$, we need the next-generation matrix (NGM)
$\bK$, which quantifies the expected number of infections that one infectee in $(a,k)$ generates among people in $(b,l)$~\citep{diekmann_mathematical_2000}. 
The standard notation is recipient-first, such that $\bK_{b,a}^{\ell,k} = \beta^{k,\ell}_{a,b} m^{k,\ell}_{a,b} D^k_a (S^{\ell}_b / P^{\ell}_b)$, where $\beta^{k,\ell}_{a,b}$ are per-contact transmission rates, $m^{k,\ell}_{a,b}$ are the complete contact intensities, $D^k_a$ is the mean duration of infection in one infectee in $(a,k)$,  and $S^{\ell}_b$ are the number of susceptible people in $(b,l)$. 
The reproduction number $R_0$ is the dominant eigenvalue of $\bK$. Thus, small changes in $\bK$ can disproportionally affect $R_0$ when they align with the dominant eigenvector. 
In particular, different levels of assortativity in the complete contact intensities can have a disproportionate impact, even when the partial intensities $m^{k,.}_{a,b}$ remain unchanged.
This motivates the following target quantity for estimation, which we refer to as the \emph{attributable fractions}:
\begin{equation}
    \eta_{a,b}^{k,\ell}  =
    \frac{m^{k,\ell}_{a,b}}{m^{k,.}_{a,b}},  
    \quad \text{such that} \quad \sum_\ell \eta_{a,b}^{k,\ell} = 1.\label{eq:eta} 
\end{equation}
Intuitively, $\eta_{a,b}^{k,\ell}$ is the unknown proportion of contacts that members of stratum $k$ age $a$ have with members of stratum $\ell$ age $b$ and so if $\eta_{a,b}^{k,k} > 1/K$, then within-group contact mixing is assortative, either due to demographics or preferential contact rates. 
We have the following result, which follows from~\citep{dobra_bounds_2000}:\footnote{The bounds in Eq.~\eqref{eq:feasible-mixing-limits} are instances of the classical Fr\'{e}chet--Hoeffding bounds, which provide the sharpest possible bounds on a joint distribution given fixed marginal distributions~\citep{dobra_bounds_2000}. In our setting, the joint distribution is over the attributable fractions and the marginals are determined by the two contact-count margins $\EE[Z^{k,.}_{a,b}]$ and $\EE[Z^{\ell,.}_{b,a}]$ recovered from the partially-stratified model via reciprocity.}
\begin{prop}[Feasible Mixing Bounds]
For any $k,\ell\in\mcX$ and $a,b \in \mcA$, a \emph{sharp} (tightest) bound for $\eta^{k,\ell}_{a,b}$ is given by: 
\label{proposition:frechet-hoeffding-bounds}
\begin{equation}\label{eq:feasible-mixing-limits}
    \max \left(0, \ \frac{\EE[Z^{\ell,.}_{b,a}]}{\EE[Z^{k,.}_{a,b}]} - \sum_{t\neq k} \frac{\EE[Z^{t,.}_{a,b}]}{\EE[Z^{k,.}_{a,b}]} \right) \leq \eta^{k,\ell}_{a,b} \leq \min \left(1, \ \frac{\EE[Z^{\ell,.}_{b,a}]}{\EE[Z^{k,.}_{a,b}]} \right)
\end{equation}
\end{prop}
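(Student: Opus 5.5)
Fix $a,b\in\mcA$ and consider the $K\times K$ table of expected complete contact counts $x_{k,\ell} := \EE[Z^{k,\ell}_{a,b}] \ge 0$. The plan has four steps: identify the row and column margins, reduce to a statement about nonnegative tables with fixed margins, prove validity of the bounds, and then prove sharpness by construction.

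\textbf{Margins.} By the aggregation properties, the row sums are $\sum_\ell x_{k,\ell} = \EE[Z^{k,.}_{a,b}] =: r_k$. By fundamental reciprocity, $Z^{k,\ell}_{a,b} = Z^{\ell,k}_{b,a}$, so the column sums are
\[
\sum_k x_{k,\ell} = \EE[Z^{\ell,.}_{b,a}] =: c_\ell.
\]
Both sets of margins sum to the total $T = \EE[Z_{a,b}] = \EE[Z_{b,a}]$. Since $P^k_a$ cancels in the ratio, $\eta^{k,\ell}_{a,b} = m^{k,\ell}_{a,b}/m^{k,.}_{a,b} = x_{k,\ell}/r_k$. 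The claim therefore reduces to the classical statement that, over nonnegative $K\times K$ tables with row margins $r$ and column margins $c$, the cell $x_{k,\ell}$ ranges exactly over
\[
\bigl[\max(0,\, c_\ell - (T - r_k)),\ \min(r_k, c_\ell)\bigr],
\]
with $T - r_k = \sum_{t\neq k} r_t$. Dividing by $r_k$ gives \eqref{eq:feasible-mixing-limits}. This is the two-way case treated in \citep{dobra_bounds_2000}.

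\textbf{Validity.} The upper bound holds because $x_{k,\ell} \le \sum_{\ell'} x_{k,\ell'} = r_k$ and $x_{k,\ell} \le \sum_{k'} x_{k',\ell} = c_\ell$. For the lower bound, write $c_\ell = x_{k,\ell} + \sum_{t\ne k} x_{t,\ell}$ and use $x_{t,\ell} \le r_t$. This gives $x_{k,\ell} \ge c_\ell - \sum_{t\neq k} r_t$, and nonnegativity supplies the other term of the max.

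\textbf{Sharpness.} This is the step needing care. For each endpoint I will exhibit a nonnegative table with the given margins attaining it. I will use a northwest-corner (greedy) construction.
\begin{itemize}
\item For the upper endpoint, set $x_{k,\ell} = \min(r_k, c_\ell)$. Then fill the residual table, with rows $r_k - x_{k,\ell}$ and $r_t$ for $t \ne k$, and columns $c_\ell - x_{k,\ell}$ and $c_{\ell'}$ for $\ell' \ne \ell$, by the northwest-corner rule. This is feasible because the residual margins are nonnegative and share a common total.
\item For the lower endpoint, first saturate column $\ell$ with rows $t\neq k$ as much as possible, taking $x_{t,\ell} \le r_t$ up to total $\min(c_\ell, T - r_k)$. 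Put the remainder $\max(0, c_\ell - (T-r_k))$ in cell $(k,\ell)$. Then complete the rest greedily, again using nonnegative equal-total residual margins.
\end{itemize}
Since the feasible set is convex, every intermediate value is also attained, so the interval is exactly the attainable range.

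\textbf{Main obstacle.} The one subtlety is whether additional structure restricts attainability. The within-age reciprocity $\EE[Z^{k,\ell}_{a,b}] = \EE[Z^{\ell,k}_{b,a}]$ only links the $(a,b)$ table to the $(b,a)$ table, which is its transpose with swapped margins. Hence each $(a,b)$ table can be chosen freely and the $(b,a)$ table defined as its transpose. When $a=b$, the table must additionally be symmetric; there I would check that the greedy construction can be symmetrized by averaging $x$ and $x^\top$. Averaging preserves the margins because $r=c$ when $a=b$, and it preserves the value of the $(k,\ell)$ cell only when $k=\ell$. For $k\neq\ell$ with $a=b$, I will need an explicit symmetric construction, or else note that the statement concerns the bound given the margins as in \citep{dobra_bounds_2000}.
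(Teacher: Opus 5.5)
Your argument is correct. The reduction to a $K\times K$ table with row margins $r_k=\EE[Z^{k,.}_{a,b}]$ and column margins $c_\ell=\EE[Z^{\ell,.}_{b,a}]$ matches the paper, and your validity step is exactly its Proof 1, which also gets the lower bound from $c_\ell=x_{k,\ell}+\sum_{t\neq k}x_{t,\ell}$ and $x_{t,\ell}\le r_t$. You differ on sharpness. The paper's Proof 2 treats the fixed-$(a,b)$ table as a decomposable model with cliques $\{k,(a,b)\}$ and $\{(a,b),\ell\}$ and cites Theorem 4 of Dobra and Fienberg (2000). You instead exhibit northwest-corner completions attaining each endpoint and use convexity for the interior. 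Your route is self-contained and makes explicit which constraints the feasible set carries. The paper's route is shorter and ties the result to the multi-way bound theory it later invokes for modular survey designs.

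The sub-case you leave open is easy to close, and the paper never addresses it: its sharpness is only relative to the margins, which is your fallback. If $a=b$, then $r=c$ and the table must be symmetric. For $k\neq\ell$ one has $T\ge r_k+r_\ell$, so the lower endpoint is $\max(0,-\sum_{t\notin\{k,\ell\}}r_t)=0$, attained by $\mathrm{diag}(r)$. For the upper endpoint, assume $r_k\le r_\ell$ and take $x_{k,\ell}=x_{\ell,k}=r_k$, $x_{\ell,\ell}=r_\ell-r_k$, $x_{t,t}=r_t$ for $t\notin\{k,\ell\}$, and zeros elsewhere. Symmetric nonnegative tables with margins $r$ form a convex set, so the whole interval is attained under within-age reciprocity too. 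Combined with your averaging argument for $k=\ell$ and your transposition remark for $a\neq b$, this shows the bounds are sharp for the reciprocity-constrained problem, not just for unconstrained tables with these margins.
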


\Cref{proposition:frechet-hoeffding-bounds} may allow us, depending on the values of the inferred modifiers and known population sizes, to constrain the attributable fractions to lie in an interval tighter than $[0,1]$. This can provide information on posterior predictions of the complete contact intensities via $m^{k,\ell}_{a,b} = m^{k,.}_{a,b}\eta^{k,\ell}_{a,b}$, the elements of $\bK$, and $R_0$ under assumptions on the transmission rates and infection durations.

\subsection{Extension to Multiple Features}\label{sec:multi-variable-extension}
In most applications, we wish to stratify by multiple features. 
Let $\mcX_1, \ldots, \mcX_J$ denote $J$ distinct sets of stratification features, where the $j$-th feature has $K_j$ categories. The complete set of strata is the Cartesian product $\mcS = \mcX_1 \times \ldots \times \mcX_J$. For instance, if stratifying by gender and income, an element $s \in \mcS$ represents a specific tuple such as (Female, High Income). The total number of unique strata in $\mcS$ is  $K_* = \prod^J_{j=1} K_j$. To distinguish this multi-feature setting from the single-feature case while keeping the notation simple, we index elements of $\mcS$ using $s$ (for source) and $t$ (for target).

The structural constraints derived previously expand naturally to this setting: we still require a tensor $\bOmega$ that satisfies reciprocity, which is then mapped to the simplex via the softmax function. However, dimensionality poses a significant challenge, as $\bOmega$ is of dimension $K^2_* \times A \times A$, and $K_*$ grows exponentially with $J$. We thus assume that the tensor can be decomposed into the sum of contributions from each individual stratification feature. This assumption is equivalent to a no-interaction log-linear model on the latent parameters $\omega$, implying that the contact rate modifier for a stratum tuple $(s,t)$ is a product of the per-feature modifier contributions. For this practically important case, we can construct the global tensor $\bOmega$ as Kronecker sums of component tensors $\{\bOmega^{(j)}\}^J_{j=1}$ each associated with a single feature $\mcX_j$:
\begin{equation}\label{eq:omega_kronecker_sum}
    \bOmega = \bOmega^{(1)} \oplus_1 \bOmega^{(2)} \oplus_1 \cdots \oplus_1 \bOmega^{(J)}
\end{equation}
where $\oplus_1$ denotes the Kronecker sum on the first mode (feature stratification dimension) of the tensor. The following result confirms that this approach preserves reciprocity: 

\begin{prop}[Preservation of Reciprocity Under Kronecker sums]
\label{proposition:kronecker-symm}
For each $j = 1,\dots,J$, let $\bOmega^{(j)} \in \mathbb{R}^{K_j^2 \times A \times A}$ denote the tensor associated with the $j$-th feature. Assume that, for each $j$, there exists a permutation $\pi_j : \{1,\dots,K_j^2\} \to \{1,\dots,K_j^2\}$ corresponding to the transpose of a $K_j \times K_j$ matrix (under a fixed identification of the first mode with matrix entries) such that the following reciprocity condition holds:
$$
\bigl(\bOmega^{(j)}\bigr)_{k,a,b}
=
\bigl(\bOmega^{(j)}\bigr)_{\pi_j(k),\,b,a}
\quad\text{for all } k \in \{1,\dots,K_j^2\},\ a,b \in \{1,\dots,A\}.
$$
Define the global tensor
\[
\bOmega
=
\bOmega^{(1)} \oplus_1 \cdots \oplus_1 \bOmega^{(J)}.
\]
Let $K_* = \prod_{j=1}^J K_j$ so that the first mode of $\bOmega$ has dimension $K_*^2$. Let $\pi : \{1,\dots,K_*^2\} \to \{1,\dots,K_*^2\}$ be the permutation induced by the $\pi_j$s via the Kronecker product of their permutation matrices, and corresponding to the transpose of a $K_* \times K_*$ matrix under the analogous identification of the first mode. Then $\bOmega$ satisfies the same reciprocity symmetry:
$$
\bOmega_{s,a,b}
=
\bOmega_{\pi(s),\,b,a}
\quad\text{for all } s \in \{1,\dots,K_*^2\},\ a,b \in \{1,\dots,A\}.
$$
\end{prop}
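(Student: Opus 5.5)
The plan is to make the Kronecker sum explicit entrywise and then reduce the statement to a fact about transposing a Kronecker product of matrices. Identify the first mode of each $\bOmega^{(j)}$ with pairs $(k_j,\ell_j)\in\{1,\dots,K_j\}^2$, so that $\pi_j(k_j,\ell_j)=(\ell_j,k_j)$. Identify the first mode of $\bOmega$ with pairs $(s,t)$, where $s=(k_1,\dots,k_J)$ and $t=(\ell_1,\dots,\ell_J)$ are tuples in $\mcS$. For each fixed age pair $(a,b)$, the slice $\bOmega^{(j)}_{\cdot,a,b}$ is a $K_j\times K_j$ matrix $M_j(a,b)$. The Kronecker sum on the first mode is, by definition,
\[
\bOmega_{\cdot,a,b}=M_1(a,b)\oplus\cdots\oplus M_J(a,b)=\sum_{j=1}^J \bI_{K_1}\otimes\cdots\otimes M_j(a,b)\otimes\cdots\otimes \bI_{K_J}.
\]
Entrywise this gives
\[
\bOmega_{(s,t),a,b}=\sum_{j=1}^J \Bigl(\prod_{i\neq j}\mathbb{1}[k_i=\ell_i]\Bigr)\,\bOmega^{(j)}_{(k_j,\ell_j),a,b}.
\]
I would state this formula first, since the rest is bookkeeping from it.

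Next, I need to check that the induced permutation $\pi$ is exactly $(s,t)\mapsto(t,s)$. The permutation matrix $\Pi$ of the transpose on $K_*\times K_*$ matrices is the commutation matrix. Under the mixed-radix ordering of tuples, it is the appropriately reshuffled Kronecker product $\Pi_1\otimes\cdots\otimes\Pi_J$. The cleanest route is to avoid matrix index arithmetic and argue that swapping $(s,t)$ is the same as swapping every coordinate pair $(k_j,\ell_j)$ simultaneously. This holds because $\mathrm{vec}(A_1\otimes\cdots\otimes A_J)^\top$ corresponds to $\mathrm{vec}(A_1^\top\otimes\cdots\otimes A_J^\top)$, using $(A\otimes B)^\top=A^\top\otimes B^\top$. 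I expect this identification of $\pi$ to be the main obstacle, because the statement defines $\pi$ somewhat loosely through the permutation matrices. I would settle it by proving $(A\otimes B)^\top=A^\top\otimes B^\top$ on elementary matrices $E_{k\ell}\otimes E_{k'\ell'}$ and extending by linearity and induction on $J$.

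With these two ingredients the verification is direct. Apply the entrywise formula at $((t,s),b,a)$. The indicator $\prod_{i\neq j}\mathbb{1}[\ell_i=k_i]$ is symmetric under swapping $s$ and $t$. By the hypothesis on each $\bOmega^{(j)}$, $\bOmega^{(j)}_{(\ell_j,k_j),b,a}=\bOmega^{(j)}_{(k_j,\ell_j),a,b}$. Summing over $j$ gives $\bOmega_{\pi(s,t),b,a}=\bOmega_{(s,t),a,b}$.

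Equivalently, in matrix form, for each $(a,b)$ we have $M_j(b,a)=M_j(a,b)^\top$. The transpose distributes over each Kronecker product term and fixes the identity factors, so $\bigl(\bigoplus_j M_j(a,b)\bigr)^\top=\bigoplus_j M_j(b,a)$. This is precisely the claimed reciprocity.
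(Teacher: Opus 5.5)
Your swap argument is sound, but you apply it to a tensor that is not the paper's $\bOmega$. You read $\oplus_1$ as the matrix Kronecker sum of the reshaped $K_j\times K_j$ slices, $\sum_j \bI_{K_1}\otimes\cdots\otimes M_j(a,b)\otimes\cdots\otimes\bI_{K_J}$. That is what produces the indicator factors $\prod_{i\neq j}[k_i=\ell_i]$.

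In the paper the Kronecker sum acts on the first mode itself, i.e.\ on the length-$K_j^2$ fibres. One forms $\bD^{(j)}_{a,b}=\mathrm{diag}(\bomega^{(j)}_{a,b})$, takes $\bD_{a,b}=\sum_j \bI_{K_1^2}\otimes\cdots\otimes\bD^{(j)}_{a,b}\otimes\cdots\otimes\bI_{K_J^2}$, and reads off the diagonal. The global fibre is therefore $\bomega^{(1)}_{a,b}\otimes\bm{1}+\bm{1}\otimes\bomega^{(2)}_{a,b}$ (iterated), and
\[
\bOmega_{(s,t),a,b}=\sum_{j=1}^J\bOmega^{(j)}_{(k_j,\ell_j),a,b}\quad\text{for all } s,t\in\mcS,
\]
with no indicators. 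This is the no-interaction log-linear model the construction is meant to encode: the modifier for $(s,t)$ is a product of per-feature contributions.

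Under your formula, entries with $s,t$ differing in two or more features are identically zero. Entries differing in one feature lose the other features' within-stratum terms. So you have proved reciprocity for a different tensor. In your matrix picture, the identity $\bI_{K_j^2}$ at the diagonal-matrix level reshapes to the all-ones matrix $\bm{1}_{K_j}\bm{1}_{K_j}^\top$, not to $\bI_{K_j}$. The correct object is $\sum_j \bm{1}\bm{1}^\top\otimes\cdots\otimes M_j(a,b)\otimes\cdots\otimes\bm{1}\bm{1}^\top$.

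The repair is immediate. Drop the indicators (or use $(\bm{1}\bm{1}^\top)^\top=\bm{1}\bm{1}^\top$ in place of $\bI^\top=\bI$), and your verification goes through verbatim, since each summand is invariant under simultaneously swapping $(k_j,\ell_j)$ and $(a,b)$.

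The correct reading also removes what you flagged as the main obstacle. Order the first mode as $\RR^{K_1^2}\otimes\cdots\otimes\RR^{K_J^2}$, i.e.\ index it by tuples $(i_1,\dots,i_J)$ with $i_j\leftrightarrow(k_j,\ell_j)$. Then the plain product $\bP_1\otimes\cdots\otimes\bP_J$ acts coordinatewise, $(i_1,\dots,i_J)\mapsto(\pi_1(i_1),\dots,\pi_J(i_J))$, which is exactly $(s,t)\mapsto(t,s)$. No reshuffled commutation matrix is needed, and the fact that your ordering requires one is a symptom of the mismatch.

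The paper concludes via Kronecker conjugation, $\bP\bD_{b,a}\bP^\top=\sum_j \bI\otimes\cdots\otimes\bP_j\bD^{(j)}_{b,a}\bP_j^\top\otimes\cdots\otimes\bI=\bD_{a,b}$. Your corrected entrywise check is an equivalent and somewhat more transparent route to the same conclusion.
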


In short, \Cref{proposition:kronecker-symm} tells us that if the tensor for every feature satisfies reciprocity, then the global tensor constructed via mode 1 Kronecker sums will also satisfy reciprocity. This result enables us to build models incrementally in an additive fashion, considering the contribution of each feature separately and where necessary consider interaction effects between multiple features, all while maintaining valid contact intensities that satisfy reciprocity constraints.
To model interactions between two features, we merge their categories into a single composite feature (e.g., combining sex and SES into a joint stratum) and treat it as one feature within the additive framework in Eq.~\eqref{eq:omega_kronecker_sum}.
\section{Priors}\label{sec:priors}

In this section, we describe our prior specifications, which build primarily on Bayesian penalized splines for inference, and truncated Dirichlet priors for predictive inference.

\subsection{Bayesian Penalized Splines}
Bayesian P-splines \citep{lang_bayesian_2004} are particularly effective when the contact patterns exhibit non-isotropic features. We represent a vectorized matrix $\bm{f} \in \mathbb{R}^{A^2}$ as a linear combination of cubic B-spline basis functions. Let $\bm{\Phi}_1 \in \RR^{A \times M_1}$ and $\bm{\Phi}_2 \in \RR^{A \times M_2}$ be matrices of basis functions evaluated at each age where $M_1$ and $M_2$ are the number of basis functions. Let $\bm{\Phi} = \bm{\Phi}_1 \otimes \bm{\Phi}_2 \in \RR^{A^2 \times M_1 M_2}$ be the Kronecker product of the marginal basis matrices. Overfitting is mitigated by introducing difference penalties through a second order IGMRF prior on the basis coefficients $\boldsymbol{\beta}$, resulting in the induced prior,
\begin{equation}\label{prior:adaptive-splines}
\bm{f} = \bm{\Phi} \bm{\xi}, \quad \bm{\xi} \sim \text{MVNormal}\left(0, (\tau\bQ)^{-1}\right)
\end{equation}
where $\tau > 0$ is a precision parameter and $\bQ = \bQ_0 \oplus \bQ_0$ is constructed from the second-order difference matrix $\bQ_0 = \bD^\top\bD$; (for the exact form of $\bD$ see \citet{dan_supplement_2026}).

\subsection{Prior for the Baseline Contact Rates} We decompose the baseline age-by-age log contact rate into a baseline parameter and a smooth function evaluated on the age grid that captures deviations from this baseline: $\log \gamma_{a,b} = \beta_0 + f(a,b)$. 
The prior for $f(a,b)$ is the Bayesian P-Spline with $\tau \sim \text{Gamma}(2.0, 0.01)$, constrained to the space of symmetric matrices to meet the reciprocity requirements, which we achieve by evaluating the basis functions solely on the lower-triangular age pairs $\{(a,b) : a \ge b\}$.
The prior for $\beta_0$ is a centred Gaussian prior $\beta_0 \sim \text{Normal}(\bar{\beta}_0, 2.5^2)$ where $\bar{\beta}_0 = -(A^2K^2_*)^{-1} \sum_{\ell,b} \log P^\ell_b$ for the fully stratified model and $\bar{\beta}_0 = -A^{-2} \sum_b \log P_b$ for the partially stratified model. 
In our applications, the population sizes are large, typically ranging from thousands to millions. The centering neutralizes the effect of the offset terms $\log P^\ell_b$ (or $\log P_b$), allowing $f(a,b)$ to focus on capturing variations in contact rates, rather than expending flexibility to absorb the scale of the population sizes.

\subsection[Prior for the omega tensor]{Prior for the $\bOmega$ Tensor}
We place Bayesian P-Spline priors on the latent parameters $\bomega_{a,b}$ for all $a,b$. The description below is presented for a single stratification variable for notational clarity. In the multi-feature setting (\Cref{sec:multi-variable-extension}), the strategy and the corresponding Bayesian P-Spline priors are applied identically and independently to each component tensor associated with the $j$-th stratification variable.

The priors on the latent parameters must be centered appropriately to handle extreme compositional variations within the demographic data. When certain strata constitute a minute fraction of the overall population, such as racial and ethnic minority groups, or the active workforce within older age brackets, or when there is an abrupt demographic shift between neighboring ages, such as the discrete boundary between school-aged children and adults, the inverse population proportions $P^k_a P_b^\ell / (P_a P_b)$ (full stratification) or $P^k_a / P_a$ (partial stratification) can be very large or contain sharp boundaries that overwhelm the smoothing priors.
By anchoring the splines to the inverse softmax of the population proportion inverses, we ensure that if the smoothing function evaluates to zero, the resulting modifiers evaluate to exactly 1:
$$
\mathbb{E}[\delta_{a,b}^{k,\ell}] = \mathbb{E}\left[\{h(\bomega_{a,b})\}_{k,\ell}\right] \frac{P{a}}{P_{a}^{k}}\frac{P_{b}}{P_{b}^{\ell}} = 1.
$$
This centering strategy effectively neutralizes the influence of small population sizes and sharp demographic boundaries, allowing the splines to focus entirely on capturing true structural variations in mixing behavior.
To achieve this, we first define $\mathbf{U}_{k,\ell} \in \mathbb{R}^{A \times A}$ as the matrix containing the product of the population proportions $P_{a}^{k}P_{b}^{\ell}/(P_{a}P_{b})$. Let $\mathcal{U}$ be the $K^2 \times A \times A$ tensor where the matrices $\mathbf{U}_{k,\ell}$ are stacked along the first dimension. We apply the inverse softmax function (equivalent to the centered log-ratio transform) to obtain the centering tensor $\mathcal{W}$. We denote this with $\mathcal{W} = h^{-1}(\mathcal{U})$ where $h^{-1}$ is understood to operate fiber-wise along the first dimension of the tensor (i.e., it is applied independently to each $K^2$-dimensional vector $\mathcal{U}_{\cdot, a, b}$ for all age pairs $a, b \in \mathcal{A}$).
Now, let $\bOmega_i$ denote the $A \times A$ matrix slice of the latent tensor $\bOmega \in \RR^{K^2 \times A \times A}$ corresponding to the $i$-th category pair, where $i = 1, \dots, K^2$. Let $\mathbf{W}_i$ be the corresponding matrix slice from the centering tensor $\mathcal{W}$. For each $i$, we place the following prior on the vectorized matrix $\text{vec}(\bomega_i)$:
\begin{equation*}
\text{vec}(\bomega_i) = \text{vec}(\mathbf{W}_i) + \bm{\Phi} \bbeta, \quad \bbeta \sim \text{MVNormal}(0, (\tau \bQ)^{-1}), \quad \tau \sim \text{Gamma}(2.0, 0.01).
\end{equation*}

\subsection[Predictive Prior for Eta]{Predictive Prior for $\bm{\eta}$}
In the case when feature information is only available for the respondents, we aim to predict the fully stratified contact intensities. The attributable fractions $\eta_{b,a}^{\ell,k}$ must sum to one across $\ell$, and should respect their feasible mixing limits. 
\citet{fienberg_iterative_1970} proposed the iterative proportional fitting algorithm (IPF) for such tasks; however IPF is a deterministic optimization algorithm that does not integrate well into the Bayesian framework, is computationally expensive to re-run for every posterior sample, and can be sensitive to starting values. 
We thus choose a truncated Dirichlet prior: 
\begin{equation}\label{eq:trunc-dir}
    \{\eta^{k,\ell}_{a,b}\}^K_{\ell=1} \sim \text{TruncDirichlet}\left(\alpha\left\{\frac{P^\ell_b}{P_b}\right\}^K_{\ell=1}; \bm{l}^k_{a,b}, \bm{u}^k_{a,b} \right),
\end{equation}
where $\alpha$ serves as a concentration parameter, and the vectors $\bm{l}^k_{a,b}$ and $\bm{u}^k_{a,b}$ contain the lower and upper bounds of \Cref{proposition:frechet-hoeffding-bounds}. A priori, we prefer to 
center our predictions on the null of proportionate mixing. For this reason, we centered the prior on the demographic proportions, i.e., $\mathbb{E}[\eta^{k,\ell}_{a,b}] = P^\ell_b / P_b$.
Reciprocity enforces a deterministic relationship between $\eta^{k,\ell}_{a,b}$ and $\eta^{\ell,k}_{b,a}$:
\begin{equation}\label{eq:eta-reciprocity-relation}
\EE[Z^{\ell,.}_{b,a}] \eta^{\ell,k}_{b,a} = \EE[Z^{\ell,k}_{b,a}] = \EE[Z^{k,\ell}_{a,b}] = \EE[Z^{k,.}_{a,b}]\eta^{k,\ell}_{a,b},
\end{equation}
and therefore we restrict prediction to only half the age pairs, where $a \geq b$. Crucially, it is guaranteed that if $\eta^{k,\ell}_{a,b}$ satisfies its feasible mixing limits, the reciprocal fraction derived via \eqref{eq:eta-reciprocity-relation} also satisfies its own bounds; see Section A of \citet{dan_supplement_2026}.
\section{Numerical inference and predictions}\label{sec:model-selection}

We estimate a mean-field variational family that approximates the joint posterior density of all latent and model parameters with stochastic variational inference (SVI; \citet{hoffman_stochastic_2013}) as implemented in the probabilistic programming library NumPyro \citep{phan_composable_2019}. We optimize the evidence lower bound with the Adam optimizer \citep{kingma_adam_2017}, modulated by a one-cycle learning rate scheduler \citep{smith_super-convergence_2018} with a maximum learning rate of 0.01. Across all simulations and applications, the SVI algorithm is run for 20,000 iterations for convergence against the objective function. Once the optimized variational family is obtained, we generate 3,000 independent samples from the variational posterior to compute posterior expectations and quantile-based 95\% credible intervals. 

When the model is fitted to partial data, we also predict the attributable fractions $\bm{\eta}$. This is done as follows: for each sample from the variational posterior, we calculate $\EE[Z_{a,b}^{k,\ell}]$, derive the mixing bounds, and use the conditional sampler of \citet{ng_truncated_2011} to obtain one sample from the truncated Dirichlet Eq.~\eqref{eq:trunc-dir}.

Our implementation is available open-source in the Python package \texttt{cntmosaic}, \url{https://github.com/ShozenD/cntmosaic}.
\section{Model Selection}

Our model-based framework enables us to use well-established Bayesian machinery for model comparison, and for selecting the additive or interaction effects of features that best capture heterogeneity in contact patterns. 
We proceed in two stages. 
First, for every candidate model, we use 5-fold cross-validation to determine the optimal number of penalized spline basis functions in terms of Expected Log Predictive Density (ELPD). We started from 40 spline basis functions for the age-age contact rate surface and the contact rate modifier surfaces, and then iteratively reduced these by 5.
Second, following knot calibration, we compare the predictive performance of candidate models with e.g., different features using Leave-One-Out Cross-Validation (LOO-CV)~\citep{vehtari_practical_2017}. 
In this study, we conduct all comparisons using the common maximal stratification determined by the union of all of the covariates under consideration as in~\citet{dellaportas_markov_1999}. This ensures that coarser models were comparable against models with a higher degree of stratification and that all models were scored on their ability to predict the data at its finest granularity. 
%
%
When models exhibit similar predictive performance, we adhere to the parsimony principle and select the model with fewer parameters.
As LOO-CV applied to variational posteriors can be unreliable when the variational approximation is poor, we monitored Pareto-$\hat{k}$ diagnostics~\citep{vehtari_practical_2017} and found no problematic values in our applications.
%
\section{Validation and benchmarking}\label{sec:validation-benchmarking}
We validated and benchmarked the \texttt{g-mix} model of ~\Cref{sec:methodology} against established methods on simulated social contact data of increasingly many features. 
We tested on 5 scenarios, ranging from a scenario with only age stratification of respondents and contacts, to a scenario with three binary features in addition to age (Table~\ref{tab:experiment-III}).
We limited ourselves to a sample size of $N=1500$ survey respondents, which is typical of contemporary social contact surveys~\citep{mossong_social_2008, feehan_quantifying_2021, liu_rapid_2021}, and assumed that survey data were complete, i.e., that respondents reported the age and features of themselves and of their contacts.

\begin{table}
    \begin{tabularx}{\textwidth}{ c c c c >{\raggedright\arraybackslash}X } 
    \toprule
    Scenario & Cov. Structure & Matrices $K^2$ & $N/K^2$ & Rationale \\ 
    \midrule
    A & 1 & 1 & 1500 & Age by age stratification only. \\
    B & 2 & 4 & 375 & Basic reciprocal pair (e.g., Male-Female).\\
    C & 4 & 16 & 93.8 & Granular reciprocal matrix (e.g. Income).\\
    D & $2 \times 3$ & 36 & 41.7 & Cross-stratification (e.g., Sex $\times$ Educ.).\\
    E & $2 \times 2 \times 2$ & 64 & 23.4 & Multi-variable stratification. \\
    \bottomrule
    \end{tabularx}

    \vspace{0.1cm} 

    \begin{tabular*}{\textwidth}{@{\extracolsep{\fill}}llccccc@{}} 
    \toprule
    Metric & Model & \multicolumn{5}{c}{Scenario, mean (std.)} \\ 
    \cmidrule(lr){3-7} 
     & & A & B & C & D & E \\ 
    \midrule
    \multirow{7}{*}{MAPE} 
    & socialmixr & 22.9 (6.6) & -- & -- & -- & -- \\
    & socialmixr ext. & 22.9 (6.6) &\red{32.5} (9.5) & \red{53.7} (13.6) & \red{75.4} (17.2) & \red{97.1} (22.0) \\
    & Prem & \red{37.5} (20.3) & -- & -- & -- & -- \\
    & Prem-ext. & \red{37.5} (20.3) & 30.2 (10.0) & 39.8 (12.7) & 48.2 (15.1) & 56.1 (16.8) \\
    & vdK INLA & -- & 16.9 (4.5) & -- & -- & -- \\
    & vdK SVI & -- & 29.2 (3.6) & -- & -- & -- \\
    & vdK-ext. & 17.5 (2.7) &29.2 (3.6) & 42.2 (10.4) & 62.0 (25.9) & 81.1 (40.6) \\
    & g-mix & \blu{13.6} (4.3) & \blu{14.9} (4.4) & \blu{19.1} (4.9) & \blu{23.6} (5.7) & \blu{24.6} (5.5) \\
    \midrule
    \multirow{7}{*}{\parbox{1cm}{Interval\\score}} 
    & socialmixr & 0.10 (0.02) & -- & -- & -- & -- \\
    & socialmixr-ext. & 0.10 (0.02) & 0.09 (0.02) & 0.08 (0.01) & 0.08 (0.01) & 0.08 (0.02)\\
    & Prem & \red{0.25} (0.30) & -- & -- & -- & -- \\
    & Prem ext. & \red{0.25} (0.30) & 0.11 (0.02) & 0.10 (0.02) & \red{0.10} (0.02) & \red{0.10} (0.02)\\
    & vdK INLA & -- & 0.08 (0.02) & -- & -- & -- \\
    & vdK SVI & -- &\red{0.18} (0.02) & -- & -- & --\\
    & vdK-ext. & 0.13 (<0.01) &\red{0.18} (0.02) & \red{0.10} (0.01) & 0.07 (0.01) & 0.05 (0.01)\\
    & g-mix & \blu{0.04} (0.01) & \blu{0.04} (<0.01) & \blu{0.03} (<0.01) & \blu{0.02} (<0.01) & \blu{0.02} (<0.01) \\
    \midrule
    \multirow{7}{*}{\parbox{1cm}{95\%\\Coverage}} 
    & socialmixr & 65.0 (4.8) & -- & -- & -- & -- \\
    & socialmixr-ext. & 65.0 (4.8) &\red{81.2} (3.4) & \red{80.5} (5.9) & 75.4 (17.2) & \red{60.1} (6.9) \\
    & Prem & \red{50.3} (19.1) & -- & -- & -- & -- \\
    & Prem ext. & \red{50.3} (19.1) & 87.6 (2.5) & \blu{93.5} (2.1) & \blu{95.5} (1.9) & \blu{96.5} (1.7) \\
    & vdK INLA & -- & \blu{96.1} (1.2) & -- & -- & -- \\
    & vdK SVI & -- & 88.0 (1.9) & -- & -- & -- \\
    & vdK-ext. & \blu{86.4} (3.0) &88.0 (1.9) & 80.8 (3.3) & \red{71.8} (5.7) & 65.9 (7.4) \\
    & g-mix & 69.2 (9.2) & 90.5 (3.5) & 92.5 (3.0) & 86.4 (5.5) & 86.6 (4.9) \\
    \midrule
    \multirow{7}{*}{\parbox{1cm}{Inference\\Time (secs)}} 
    & socialmixr & \blu{10} (0.5) & -- & -- & -- & -- \\
    & socialmixr ext. & \blu{10} (0.5) & 22 (0.7) & \blu{42} (1.3) & \blu{64} (2) & \blu{87} (3) \\
    & Prem & 24 (2) & -- & -- & -- & -- \\
    & Prem-ext. & 24 (2) & 40 (5) & 68 (12) & 102 (16) & 148 (22) \\
    & vdK INLA & -- & \blu{18} (2) & -- & -- & -- \\
    & vdK SVI & -- &\red{3095} (324) & -- & -- & -- \\
    & vdK-ext. & \red{159} (87) &\red{3095} (324) & \red{3597} (640) & \red{4185} (572) & \red{5655} (981) \\
    & g-mix & 63 (13) & 235 (64) & 630 (110) & 1139 (186) & 1530 (206) \\
    \bottomrule
    \end{tabular*}
    \vspace{0.1cm}
    \caption{We benchmarked \texttt{g-mix} performance on estimating generalised contact matrices against existing methods under various scenarios in the complete data setting. The top panel outlines the experimental scenarios considered, with the number of survey respondents fixed at $N=1,500$ across all scenarios. The bottom panel details results for several benchmarking metrics. The best- and worst-performing models are highlighted in blue and red, respectively. The original implementations of socialmixr, Prem, and vdK did not natively support generalized features; we implemented extensions, denoted with the suffix "-ext". For specific scenarios, our extensions corresponded to the original. Detailed descriptions of all models and our extensions are in \citet{dan_supplement_2026}. The number of basis functions we use for each dimension in our penalized Bayesian spline prior for the contact rates and contact modifiers matrices is fixed at $15$ throughout.}
    \label{tab:experiment-III}
\end{table}

We benchmarked \texttt{g-mix} against the original \texttt{socialmixr} bootstrapping approach \citep{funk_socialmixr_2024}, the IGMRF based methods of \citet{prem_projecting_2017} and \citet{kassteele_efficient_2017}, but also extended these three methods to handle multiple features, and tested against these extended versions too. 
Conceptually, the extended \texttt{socialmixr} approach implements bootstrapping and thus represents a model-agnostic approach with no regularizing priors and no latent constraints, but enforces reciprocity post-hoc. The extended model of~\citet{prem_projecting_2017} induces regularization of the joint posterior through the IGMRF prior and hierarchical model structure, but does not intrinsically enforce reciprocity. The extended method of~\citet{kassteele_efficient_2017} enforces reciprocity and induces regularization but does not implement the consistency constraints of~\eqref{property:full-consistency}.
The extended models were implemented in NumPyro and likewise fitted with SVI, with the exception of \texttt{socialmixr} \citep{funk_socialmixr_2024}, which we re-implemented in Python (NumPy) (see sections Sections D to F in \citet{dan_supplement_2026}). For the scenario where there is one feature with two categories, we also benchmarked against the Integrated Nested Laplace Approximation (INLA) implementation in \citet{kassteele_efficient_2017}.
We first outline the data generating process for our simulation experiments and then present our benchmarking results. 

To simulate social contact data, we used a reference population (such as from the US census) and generated smooth stochastic deviations to obtain feature-specific population sizes. Survey respondents were then directly sampled from the synthetic population, corresponding to a representative survey design. 
To simulate contact data for these respondents, we first generated a baseline contact intensity matrix by mixing and transforming empirically derived contact pattern templates for different physical settings~\citep{mistry_inferring_2021}. 
Their data-driven approach provides template intensity matrices for contacts occurring in the household $\bT^{(H)}$, at school $\bT^{(S)}$, at work $\bT^{(W)}$ and in the community $\bT^{(C)}$  (\Cref{fig:simulation}, top row). From these, we simulated a single baseline age-by-age contact intensity matrix as a randomly weighted combination of these templates:
\begin{equation} \label{sim:template-mixture}
\bT = v^{(H)}\bT^{(H)} + v^{(S)}\bT^{(S)} + v^{(W)}\bT^{(W)} + v^{(C)}\bT^{(C)}
\end{equation}
where the mixing weights $(v^{(H)}, v^{(S)}, v^{(W)}, v^{(C)}) \sim \operatorname{Dirichlet}(1,1,1,1)$. Because $\bT$ does not satisfy reciprocity, we applied the reciprocity correction in~\citet{funk_socialmixr_2024}. We obtained the corresponding rate matrix by scaling the rows of $\bT$ with the synthetic population sizes.
Next, for each feature in $(s,t)$, we generated deviation matrices using the same template mixture approach,~\eqref{sim:template-mixture}, and transformed these further to enforce reciprocity and consistency (see Section C of \citet{dan_supplement_2026}) We then multiplied the age-by-age baseline contact rate matrix with the feature-specific deviation matrices and the synthetic population sizes to generate a fully-stratified contact intensity matrix $m^{s,t}_{a,b}$ (\Cref{fig:simulation}, second row).
With this, we simulated multi-feature contact data as follows. For each survey respondent $i = 1, \dots, N$ of age $a_i$ with features $s_i$, we generated the number of reported contacts with individuals of age $b$ and features $t$ from a Poisson process:
\begin{equation}\label{eq:contact-count-sim}
Y^t_{i,b} \sim \text{Poisson}( m^{s_i,t}_{a_i,b} \cdot \zeta_i ), \quad \zeta_i \sim \text{Gamma}(5, 5),
\end{equation}
where $\zeta_i$ is a mean-one multiplicative individual-level random effect that introduces the characteristic overdispersion typically observed in empirical social contact surveys. The data for the models are obtained by combining the replicates~\eqref{eq:contact-count-sim}: $Y^{s,t}_{a,b} = \sum_{a_i = a, s_i = s} Y^t_{i,b}$. We show an example of the generated empirical contact matrices in the bottom row of \Cref{fig:simulation}.

\begin{figure}
    \centering
    \includegraphics[width=\linewidth]{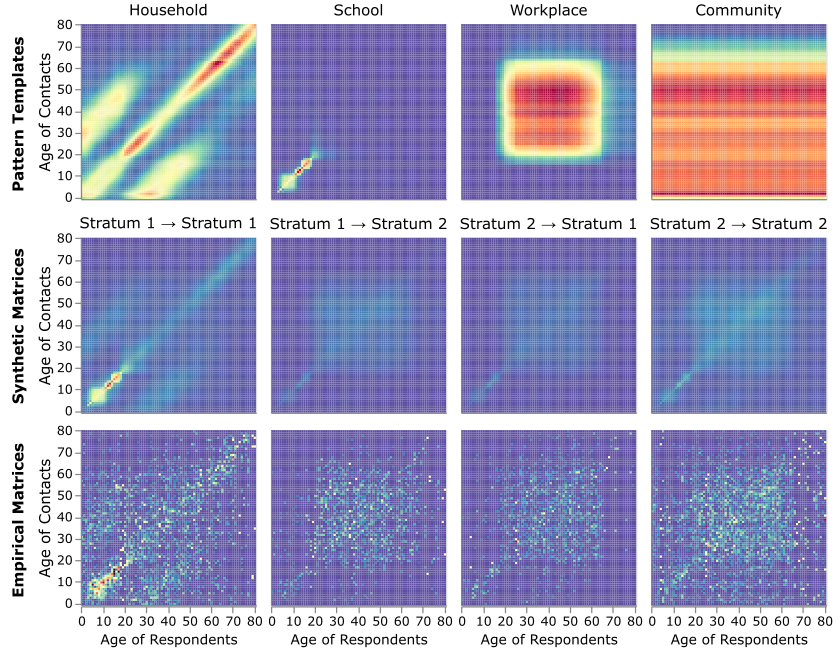}
    \caption{Illustration of the data simulation workflow for a complete data setting involving 80 1-year age bands and one binary feature (Stratum 1 and Stratum 2). The 1st row shows contact pattern templates for household, school, workplace, and community contexts obtained from \citet{mistry_inferring_2021}. These templates were used to derive the synthetic ground truth contact intensity matrices shown in the 2nd row. Synthetic contacts were then generated by a probability model under these ground truth intensities. The 3rd row shows the empirical contact intensity matrices corresponding to one synthetic social contact survey. Each cell value is calculated as $\hat{m}^{k,\ell}_{a,b} = y^{k,\ell}_{a,b}/N^k_a$ where $y^{k,\ell}_{a,b}$ are the synthetic contact counts between survey respondents in stratum $k$ and age $a$ to members of stratum $\ell$ and age $b$ in the population. $N^k_a$ is the number of respondents of stratum $k$ and age $a$. These matrices are a visual representation of what the models see as data.}
    \label{fig:simulation}
\end{figure}

\Cref{tab:experiment-III} summarizes our results.
We quantified accuracy with mean absolute percentage error (MAPE) on the complete generalized contact intensities $m^{s,t}_{a,b}$, and the quality of the uncertainty quantification via posterior 95\% coverage probability and the interval score \citep{gneiting_strictly_2007}. Inference time was measured in seconds.
We found that for typical survey sizes ($N=1,500$), the \texttt{socialmixr} approach was by far the fastest inference method but required substantial age group aggregation to avoid division-by-zero, which causes the bootstrapping to fail (see Section D of \citet{dan_supplement_2026}). This severely inflated errors for scenarios C to E. 
Prem's method \citep{prem_projecting_2017}, which is inherently specified in age ranges of 5 years, could avoid these zero-sample pitfalls without additional aggregation because the IGMRF prior provides a default estimate even when a cell contains no observations. The posterior simply reverts to the prior with no Bayesian update.
However, this is also why interval scores worsen in sparse settings: cells with little or no data are governed almost entirely by the prior, producing credible intervals that reflect prior uncertainty rather than data-informed uncertainty.

Among the two fine-age approaches (with 1-year age bins), the original model in \citet{kassteele_efficient_2017} (abbreviated vdK) achieved very fast runtimes owing to its INLA implementation.
Our SVI extension lost this speed advantage because the IGMRF priors result in dense autodifferentiable operations on the precision matrix in NumPyro, foregoing the sparse-matrix exploitations that make INLA efficient (see Section F of \citet{dan_supplement_2026}).
The extended vdK model also does not enforce the consistency constraints of~\eqref{property:full-consistency}, which likely contributed to its sharply deteriorating accuracy from scenario A to E. The overly tight credible intervals are likely an artefact of SVI rather than the structure of the model.
By contrast, \texttt{g-mix} enforces all reciprocity and consistency constraints, and its P-spline parameterization offers two further advantages: the spline prior provides stronger regularization, and inferring spline coefficients rather than full age-by-age cell values substantially reduces the effective parameter space. This explains lower error rates and faster runtimes. \texttt{g-mix} exhibited below-nominal 95\% coverage in Scenario A (69.2\%); this is a known artifact of mean-field SVI, which systematically underestimates posterior variance, and is most pronounced in the low-complexity, low-parameter regime of Scenario A. Additional experiments showed that increasing the number of basis functions in the P-Spline leads to improved posterior coverage at the cost of runtime (Table C.1 in \citet{dan_supplement_2026}).

\FloatBarrier
\section{Predictive accuracy when features of contacts are not observed}
We evaluated prediction accuracy for contact intensities when features are only observed for the respondents and not their contacts. 
We used the survey sample from Wave~1 of the Berkeley Interpersonal Contact Study (BICS)~\citep{feehan_quantifying_2021} directly, retaining the $N=2{,}627$ respondents and their demographic profiles by 1-year age groups and race/ethnicity (Hispanics, non-Hispanic whites, non-Hispanic Blacks and non-Hispanic Others). For each respondent, we simulated complete contact intensities and the corresponding contact survey data using the procedure in~\Cref{sec:validation-benchmarking} in three scenarios characterized by different degrees to which individuals in the same race/ethnicity group preferentially contacted one another: neutral, assortative, and disassortative.

First, we fitted the \texttt{g-mix} model in~\eqref{eq:fully-stratified-model} to the complete synthetic data $Y^{s,t}_{a,b}$ to obtain a best-possible reference estimate of the ground truth complete contact intensities.
Next, we withheld the race/ethnicity features of all reported contacts by setting $Y^{s,.}_{a,b} = \sum_t Y^{s,t}_{a,b}$, fitted the \texttt{g-mix} model in~\eqref{eq:partially-stratified-model} to the partial synthetic data to estimate $m^{s,.}_{a,b}$, calculated the feasible mixing bounds from the resulting posterior, predicted the attributable fractions $\bm{\eta}$, predicted the complete intensity matrix as $m^{s,.}_{a,b}\eta^{s,t}_{a,b}$, and assessed accuracy against the ground truth.
We repeated the experiment $100$ times, each time generating new contact patterns while keeping the participant demographics fixed as in BICS. 
For illustration, we present one instance of the results from the neutral assortativity scenario in~\Cref{fig:full-vs-partial}. 

We found that predicting the complete contact intensity matrices from partial data results in a substantial increase in MAPE for all three assortativity scenarios (neutral: 72.2\% vs.\ 18.1\%, assortative: 80.7\% vs.\ 21.7\%, disassortative: 78.4\% vs.\ 18.5\%). 
However, with the exception of the highly unrealistic disassortative scenario, the posterior 95\% coverage rates were good (neutral: 94.6\%, assortative: 98.5\%, disassortative: 90.4\%) indicating that predictive uncertainty, indicating well calibrated predictive uncertainty (see Tables C.2 to C.4 in \citet{dan_supplement_2026}).

\begin{figure}
    \centering
    \includegraphics[width=\linewidth]{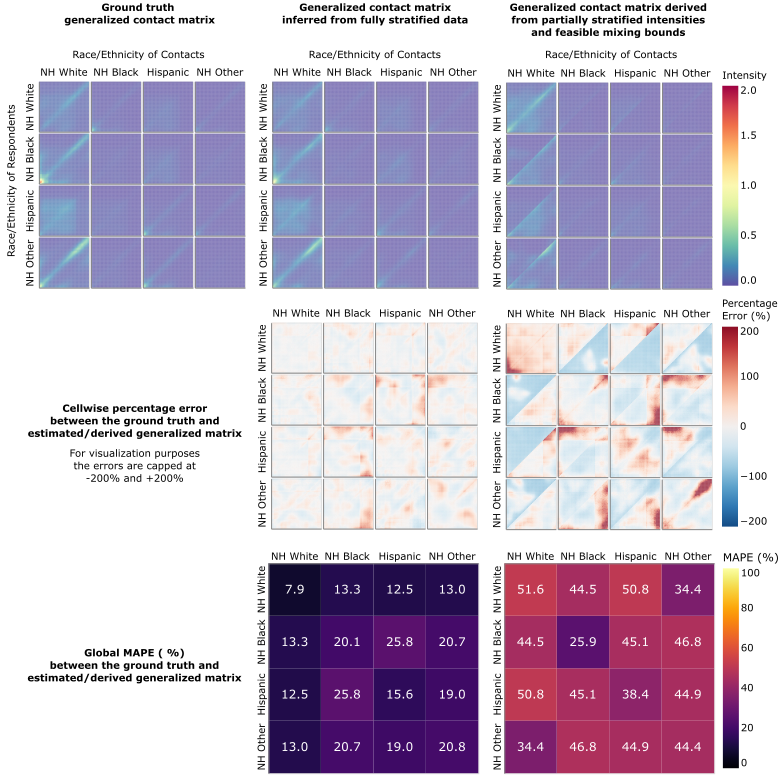}
    \caption{Synthetic case study assessing the prediction quality of the generalized contact matrix when only partial data are available. To capture realism, we used the sample size and demographic composition of respondents in the first wave of the Berkeley Interpersonal Contact Study (BICS). Population sizes were obtained from the American Community Survey. The synthetic ground truth generalized contact matrix were generated as in \Cref{sec:validation-benchmarking}. The 1st row shows the ground-truth generalized contact matrix (left), the posterior mean of the inferred generalized contact matrix when complete data were available (middle), and the posterior mean of the predicted generalized contact matrix derived from partial data. The 2nd row shows the cell-wise percentage error for the two cases calculated as $(\hat{m}^{k,\ell}_{a,b} - m^{k,\ell}_{a,b}) / m^{k,\ell}_{a,b} \times 100$ where $\hat{m}$ denotes the estimate and $m$ the ground-truth. Blue shades indicate underestimation, red shades overestimation. We capped errors at $\pm200\%$ for visualization. The 3rd row shows the global mean average percentage error (MAPE) for each age-age matrix, i.e., the average of the absolute values for all the cells in the matrices above.} 
    \label{fig:full-vs-partial}
\end{figure}
\section{Case Studies}\label{sec:case-studies}

\subsection{Time evolution of generalized contact patterns by age and SES during the COVID-19 pandemic in Germany}

\begin{figure}[!h]
    \centering
    \includegraphics[width=\linewidth]{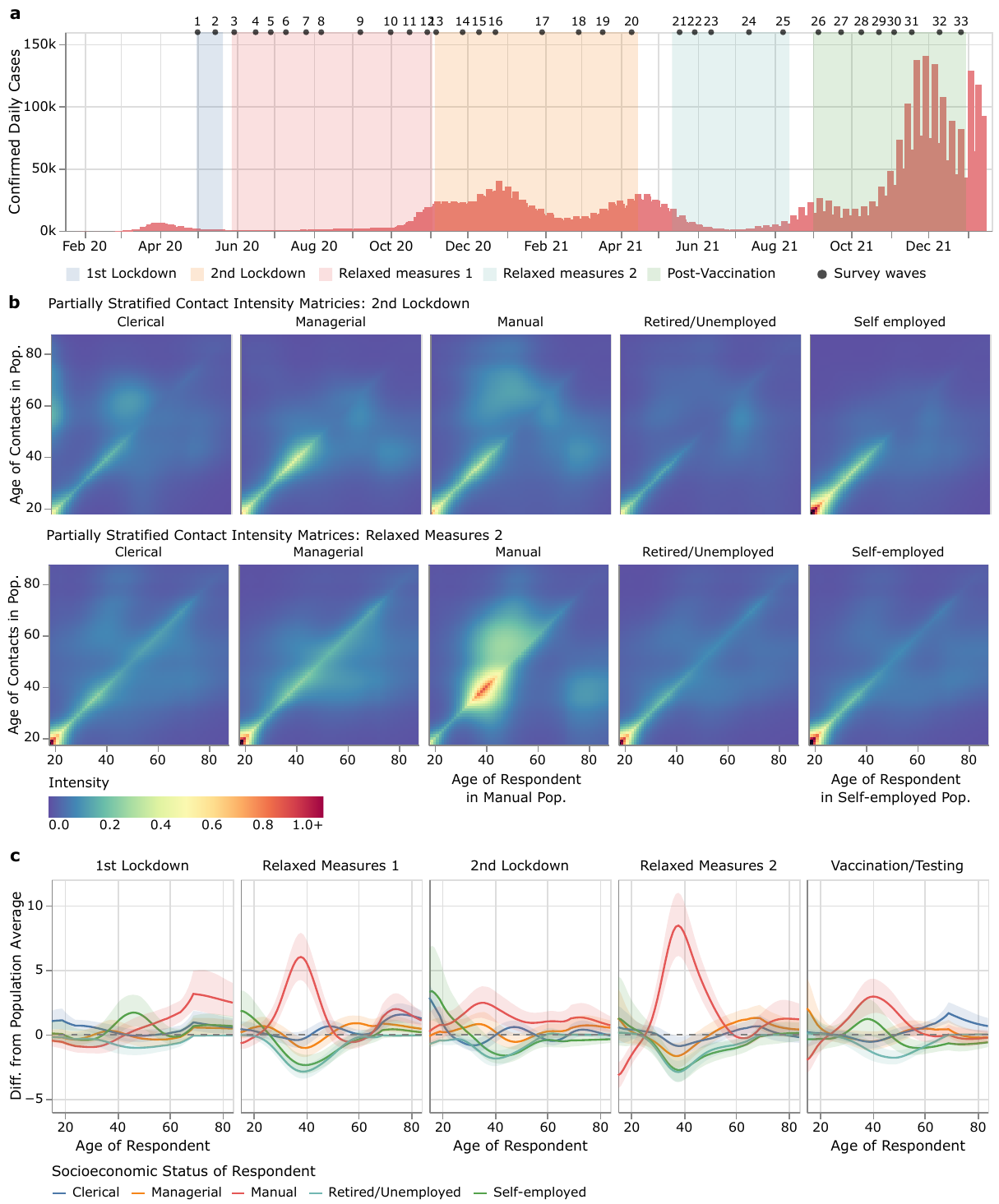}
    \caption{Time evolution of contact intensity patterns during the COVID-19 pandemic in Germany. \textbf{a}: Timeline of COVIMOD survey waves (dots), along with confirmed daily COVID-19 cases (y-axis) in Germany obtained from \citet{dong_interactive_2020}, and the five pandemic phases. 
    \textbf{b}: Posterior means of generalized non-household contact matrices for the second lockdown and relaxed measures 2 phase, by age of respondents and contacts and SES of respondents. \textbf{c}: Deviations in the intensity of non-household contacts from the population-weighted average, stratified by SES of respondents across pandemic phases. Solid lines represent posterior means, and shaded bands represent 95\% CIs.}
    \label{fig:COVIMOD}
\end{figure}

We first explored if the \texttt{g-mix} model uncovers social inequalities in contact patterns similar to a recent study by~\citet{manna_importance_2024} which uses \texttt{socialmixr} methodology. The \texttt{g-mix} model estimates contact intensities at considerably higher age resolution. It does so in a way that allows additive modeling of feature effects across all the data, rather than sub-setting the limited data by features and independently inferring separate contact matrices.
We also wanted to obtain valid generalized contact matrices that meet all reciprocity constraints across all dimensions, and track computational requirements.
For this, we applied \texttt{g-mix} to analyze differences in non-household contact patterns by socioeconomic status (SES) across distinct phases of the COVID-19 pandemic in Germany, using data from the COVIMOD study~\citep{tomori_individual_2021}. 
Part of the European CoMix consortium \citep{wong_social_2023}, the study utilized quota-based sampling to ensure broad representativeness of the German population across age, sex, and regional strata. 
Participants were recruited through the online panel, enrolling 7,851 respondents across 33 waves from April 2020 to December 2021 (\Cref{fig:COVIMOD}a). A contact was defined as direct physical contact or an in-person conversation involving at least a few words.

We fitted the \texttt{g-mix} model in~\eqref{eq:partially-stratified-model} to partial data of non-household contacts stratified by age and SES of the respondents and age of their contacts. Age was stratified by 1-year age bands from age 18 to 85, and SES  by 5 levels based on employment relations: manual, office-based clerical, managerial, self-employed, and retired/unemployed.
\Cref{fig:COVIMOD}b shows the posterior mean of the estimated, partial contact intensities $m^{s,.}_{a,b}$ for the second lockdown phase, and the subsequent second relaxed measures phase. Figures 6 and 7 in the Supplement illustrate the time evolution of these contact intensities across all phases, along with their standard deviations.
We found that during lockdown periods, the overall heterogeneity in contact patterns decreased significantly, dropping to a near-uniform baseline across all groups.
This is further substantiated in
\Cref{fig:COVIMOD}, suggesting that in Germany, stringent non-pharmaceutical interventions temporarily suppressed SES differences in contact patterns, acting as an epidemiological equalizer.

Inequalities in contact patterns by SES re-emerged during phases when non-pharmaceutical interventions were relaxed. In these phases, the manual SES group consistently recorded the highest number of contacts, likely reflecting occupational requirements. In contrast, clerical, retired, unemployed, and self-employed individuals maintained the lowest contact rates, suggesting they had greater agency in limiting their exposure risks (Figure H.1 of \citet{dan_supplement_2026}). These dynamics closely mirror recent empirical findings from other European social contact studies, which demonstrated that higher SES groups exhibited greater behavioral elasticity in response to non-pharmaceutical interventions, whereas manual or essential workers maintained rigid, constraint-driven contact networks~\citep{manna_importance_2024}. Additionally, we show the posterior mean and the width of 95\% CIs for generalized partially stratified contact matrices in Figures H.2 and H.3 of \citet{dan_supplement_2026}.

\subsection{Selecting statistically relevant features in generalized contact matrices for the United States based on ELPD} 
We explore how \texttt{g-mix} allows us to identify and select relevant features in generalized contact matrices in a way that propagates uncertainty arising from limited survey data.
We analyze non-household contact data from the Berkeley Interpersonal Contact Study (BICS), which aimed to quantify changes in close-range contacts during the COVID-19 pandemic in the United States \citep{feehan_quantifying_2021}. 
The study utilized a quota-based sampling approach through an online research panel to collect data rapidly across multiple waves, beginning with a pilot on March 22, 2020 and continuing through wave 6 in May 2021. A contact in this study was defined as a two-way conversational exchange of at least three words occurring in the physical presence of another person. We analyzed wave 1 (April 10 to May 4, 2020), wave 2 (June 17 to 23, 2020), and wave 3 (September 11 to 26, 2020) of BICS. The sample consisted of a total of 8306 (Wave 1 $N=2,627$, Wave 2 $N=2,431$, Wave 3 $N=3,248$) survey respondents. For the respondents, we include their age (1-year age band), and select relevant features from among sex (male, female), race/ethnicity (Hispanics, non-Hispanic whites, non-Hispanic Blacks and non-Hispanic Asian and Others), and education (non-high school graduate, high school graduate, some college, college graduate). For contacts, age and sex were reported, but we only consider age, limiting feature selection to respondents.

We repeatedly fitted the \texttt{g-mix} model in~\eqref{eq:partially-stratified-model} to all combinations of the 3 respondent features sex, education, and race/ethnicity while always including the age of respondents and their non-household contacts.
To select in features, we used leave-one-out ELPD as selection measure~\citep{vehtari_practical_2017}, which evaluates the log posterior predictive density at test points, and therefore promotes predictions that capture the test data with high probability mass.
\Cref{tab:feature-selection-bics} shows that \texttt{g-mix} feature selection consistently identified the model that described generalized contact intensities by age, sex and race/ethnicity of respondents as the best predictive model in all survey waves.
The differences in ELPD of this model and the age-age model or any other models were statistically significant in all waves, strongly suggesting that sex and race/ethnicity contain important information on additional heterogeneity contact patterns beyond age.

Figures G.3 in \citet{dan_supplement_2026} illustrate the deviations in the generalized contact intensities described by age, sex and race/ethnicity over time, relative to the corresponding population weighted averages. 
We observe no significant differences in contact behavior during the initial lockdown (Wave 1).
As non-pharmaceutical interventions were eased by the end of April 2020, substantial differences in generalized contact patterns emerged by sex and race/ethnicity, especially among younger age groups. 
In particular, the ``non-Hispanic Asians and Other'' group consistently had fewer contacts compared to other race/ethnicity groups, whereas Hispanics and non-Hispanic Blacks aligned towards the population average as time progressed (Figures G.3 and G.4 in \citet{dan_supplement_2026}). Additionally, we show the posterior and mean and the width of 95\% CIs of generalized partially stratified contact matrices estimated by the best model in Figures G.5 and G.6 of \citet{dan_supplement_2026}.

\begin{table}[htbp]
    \centering
    \begin{tabular}{c l l c r r}
        \toprule
        BICS & generalized Features & Features & Model & leave-one-out& Diff (SE) \\
        Wave & of respondents & of contacts & rank & ELPD (SE) & \\
        \midrule
        \multirow{8}{*}{Wave 1}
         & Age, Sex, Race/Ethnicity   & Age     & 1 & $-14605\ (179)$ & $0\ (0)$     \\
         & Age, Sex               & Age & 2 & $-14667\ (181)$ & $63\ (15)$   \\
         & Age, Race/Ethnicity              & Age & 3 & $-14711\ (181)$ & $107\ (15)$  \\
         & Age          & Age & 4 & $-14790\ (183)$ & $185\ (22)$  \\
         & Age, Sex, Education        & Age & 5 & $-15291\ (258)$ & $686\ (179)$ \\
         & Age, Sex, Education, Race/Ethnicity & Age & 6 & $-15317\ (263)$ & $713\ (186)$ \\
         & Age, Education              & Age & 7 & $-15331\ (256)$ & $726\ (176)$ \\
         & Age, Education, Race/Ethnicity       & Age & 8 & $-15428\ (261)$ & $823\ (183)$ \\
        \midrule
        \multirow{8}{*}{Wave 2}
         & Age, Sex, Race/Ethnicity        & Age & 1 & $-15199\ (183)$ & $0\ (0)$     \\
         & Age, Sex               & Age & 2 & $-15236\ (184)$ & $37\ (15)$   \\
         & Age, Race/Ethnicity              & Age & 3 & $-15299\ (184)$ & $100\ (15)$  \\
         & Age          & Age & 4 & $-15377\ (185)$ & $178\ (22)$  \\
         & Age, Sex, Education        & Age & 5 & $-15427\ (217)$ & $228\ (111)$ \\
         & Age, Sex, Education, Race/Ethnicity & Age & 6 & $-15474\ (220)$ & $275\ (116)$ \\
         & Age, Education              & Age & 7 & $-15499\ (217)$ & $300\ (109)$ \\
         & Age, Education, Race/Ethnicity       & Age & 8 & $-15602\ (220)$ & $403\ (115)$ \\
        \midrule
        \multirow{8}{*}{Wave 3}
         & Age, Sex, Race/Ethnicity       & Age & 1 & $-18095\ (199)$ & $0\ (0)$      \\
         & Age, Sex              & Age & 2 & $-18185\ (201)$ & $90\ (20)$    \\
         & Age, Race/Ethnicity             & Age & 3 & $-18218\ (202)$ & $123\ (18)$   \\
         & Age         & Age & 4 & $-18372\ (204)$ & $277\ (28)$   \\
         & Age, Sex, Education       & Age & 5 & $-19125\ (325)$ & $1029\ (241)$ \\
         & Age, Education             & Age & 6 & $-19165\ (320)$ & $1070\ (236)$ \\
         & Age, Sex, Education, Race/Ethnicity & Age & 7 & $-19176\ (336)$ & $1080\ (256)$ \\
         & Age, Education, Race/Ethnicity       & Age & 8 & $-19256\ (333)$ & $1161\ (252)$ \\
        \bottomrule
    \end{tabular}
    \vspace{0.1cm}
    \caption{Selection of features in the generalized contact matrix that best explain the  BICS survey data in terms of leave-one-out Expected Log Posterior Density (ELPD). For each wave, \texttt{g-mix} models with different candidate features for the respondents were fitted to the partial data. For illustration purposes, the contacts of respondents were only characterized by 1-year age bands. We report the model rank, numerically estimated ELPD with its standard error in parentheses, and the numerically estimated difference in ELPD relative to the best model (Diff) with its standard error in parentheses. Models were ranked within each wave from best (rank 1) to worst.}
    \label{tab:feature-selection-bics}
\end{table}
\section{Discussion and Conclusion}

In this work, we propose a Bayesian modeling framework for inferring generalized contact matrices from data collected in social contact surveys. 
%
%
%
The intention of our work is to allow epidemiologists to move beyond traditional age-only social contact matrices and capture additional heterogeneity, for example by race/ethnicity and socioeconomic status. 
%


Our framework has key advantages, and several limitations. Perhaps most importantly, we are able to infer generalized contact matrices that satisfy all inherent constraints using existing survey data with sample sizes in the range of $1,500-3,000$ respondents. Our model-based approach also provides a principled approach for feature modeling and selection through multiplicative and interaction effects. 
\texttt{g-mix} also provides a framework to predict  complete generalized contact matrices from partially observed data, ingest external data, and propagate uncertainty in these predictions, also harnessing feasible mixing bounds that are based on earlier literature on contingency tables.
In terms of limitations, our proposed framework does not incorporate survey sampling weights, and ignoring these could introduce notable bias into our contact intensity estimates. 
Furthermore, we assumed that the reported age of contacts is provided and accurate to a 1-year resolution. In reality, respondents may struggle to accurately recall exact ages, prompting recent large-scale surveys to collect data using coarse age ranges \citep{tomori_individual_2021, gimma_changes_2022}. 
Empirical contact data are also susceptible to recall bias, reporting fatigue, and outliers, phenomena that must be carefully accounted for during the modeling process \citep{dan_addressing_2025}. 
%
%
We also exclusively focused on characterizing average contact patterns, while infections may be driven by a small group of individuals with very large number of contacts \citep{ling_contact_2026}.
Finally, although the feasible mixing bounds limit the space of prediction possibilities, we found it is challenging to accurately predict complete generalized contact matrices from partial data. The tightness of these bounds is highly dependent on both the local demographic structure and the true underlying contact patterns. 
Future work could leverage auxiliary data streams such as mobile phone telemetry \citep{rudiger_predicting_2021}, to further narrow the feasible mixing limits, or provide external data for predicting generalized contact matrices. 
Moreover, future data collection strategies could benefit from modular designs, where respondents are asked to report separate information on the contacts (e.g., asking one group to report contact sex and another to report contact socioeconomic status), to systematically reduce respondent burden and preserve privacy. 
Here, the posterior estimates of the established bounds could be particularly useful to actively learn the optimal allocation of survey modules and adjust these appropriately over time, based on established contingency table theory \citep{dobra_bounds_2000, cacciarelli2024active}.

In conclusion, infectious diseases often spread along structural and socioeconomic fault lines within populations. There is a clear need for methodologies that can capture social contact patterns along dimensions such as deprivation, race/ethnicity or socioeconomic status. We hope our \texttt{g-mix} framework, alongside its open-source implementation, will facilitate the estimation of generalized contact matrices, which is a cornerstone to monitoring the impact of non-pharmaceutical interventions, estimation of $R_0$, and vaccine allocation.

\begin{acks}[Acknowledgments]
We thank members of the Machine Learning \& Global Health network (MLGH) (\url{https://mlgh.net}) for their feedback throughout this project, and the London Mathematical Society for funding of MLGH meetings during which this study was discussed and refined.
We are grateful to Dr. Dennis M. Feehan and Dr. Ayesha Mahmud at the University of California Berkeley for sharing the BICS data, and to Dr. Andr\'e Karch and Dr. Veronika Jaeger at the University of Münster for sharing the COVIMOD data. 
We acknowledge the Research Computing Service at Imperial College London (DOI: 10.14469/hpc/2232) for providing the high-performance computing resources used in part of this work, and Zulip for sponsoring team communications through the Zulip Cloud Standard chat app. 
\end{acks}
\begin{funding}
SD is funded by a doctoral studentship to the EPSRC Center for Doctoral Training in Modern Statistics and Statistical Machine Learning at Imperial and Oxford (EP/S023151/1 to Prof. Axel Gandy). OR acknowledges funding from the UK Health Security Agency, and the Moderna Charitable Foundation. ZL is funded by the National Medical Research Council, Singapore Ministry of Health (PREPARE-S2-2024-002).
\end{funding}

\putbib[definitions,references]
\end{bibunit}

\clearpage

\begin{bibunit}

\renewcommand\thesection{\Alph{section}}
\setcounter{section}{0}
\setcounter{equation}{0}
\numberwithin{figure}{section}
\numberwithin{table}{section}

\section{Derivations and Proofs} \label{sec:appendix-proofs}


\begin{table}[htpb!]
    \centering
    \renewcommand{\arraystretch}{1.3}
    \begin{tabular}{@{}lp{0.8\textwidth}@{}}
        \toprule
        \textbf{Notation} & \textbf{Description} \\
        \midrule
        
        \multicolumn{2}{@{}l}{\textbf{Sets and Indices}} \\
        $\mcA$ & Set of all age groups. \\
        $a, b$ & Indices representing specific age groups ($a, b \in \mcA$). \\
        $\mcX$ & Set of categories for a specific stratification variable. \\
        $k, \ell, t$ & Indices representing specific categories or strata ($k, \ell, t \in \mcX$). \\
        $\mcS$ & Set of composite strata across all stratification variables ($\mcS = \mcX_1 \times \cdots \times \mcX_J$). \\
        \addlinespace
        
        \multicolumn{2}{@{}l}{\textbf{Population Quantities}} \\
        $P_a$ & Total population size for individuals of age $a$. \\
        $P^k_a$ & Population size for individuals of age $a$ within stratum $k$. \\
        \addlinespace
        
        \multicolumn{2}{@{}l}{\textbf{Contact Parameters}} \\
        $\gamma_{a,b}$ & Baseline age-specific contact rate between age $a$ and age $b$. \\
        $\delta^{k,\ell}_{a,b}$ & Full contact rate modifier for contacts from stratum $k$, age $a$ to stratum $\ell$, age $b$. \\
        $\delta^{k,.}_{a,b}$ & Partial contact rate modifier for stratum $k$, age $a$ to age $b$. \\
        $z^{k,\ell}_{a,b}$ & Shorthand for $\EE[Z^{k,\ell}_{a,b}]$, the expected total number of contacts from stratum $k$, age $a$ to individuals in stratum $\ell$, age $b$. \\
        $z^{k,.}_{a,b}$ & Shorthand for $\EE[Z^{k,.}_{a,b}]$, the expected total number of contacts from stratum $k$, age $a$ to individuals of age $b$. \\
        $m^{k,\ell}_{a,b}$ & Fully stratified contact intensity from stratum $k$, age $a$ to stratum $\ell$, age $b$. \\
        $m^{k,.}_{a,b}$ & Partially stratified contact intensity from stratum $k$, age $a$ to individuals of age $b$. \\
        $\eta^{k,\ell}_{a,b}$ & Attributable fraction: the proportion of contacts made by an individual in stratum $k$, age $a$ that are with individuals in stratum $\ell$, age $b$. \\
        \addlinespace
        
        \multicolumn{2}{@{}l}{\textbf{Latent Variables and Operators}} \\
        $\bOmega^{(j)}$ & Tensor of latent two-dimensional surfaces for the $j$-th stratification variable. \\
        $\bOmega$ & Global tensor of latent surfaces derived via Kronecker sums. \\
        $\Pi, \bP$ & Permutation matrices used to mathematically enforce reciprocity constraints. \\
        $\phi(\cdot)$ & The softmax transformation function. \\
        
        \bottomrule
    \end{tabular}
    \vspace{0.2cm}
    \caption{Summary of mathematical notations used throughout the derivations and proofs.}
    \label{tab:notations}
\end{table}

\begin{remark}
Throughout the derivations and proofs below we use indices $k, \ell \in \mcX$ for the single-feature setting, which is the setting presented in Section 2.2--2.4 of the main manuscript. All results generalize directly to the multi-feature setting by replacing $k, \ell$ with composite strata $s, t \in \mcS = \mcX_1 \times \cdots \times \mcX_J$ as introduced in Section 2.5 of the main manuscript.
\end{remark}

\subsection{Derivation of Constraints 1 and 2}
\label{app:full-cnt-rate-mod-proof}
In closed populations and by definition of our model,
\begin{equation*}
   z^{k,\ell}_{a,b} = \gamma_{a,b} \delta^{k,\ell}_{a,b} P^{k}_a P^{\ell}_b = \gamma_{b,a} \delta^{\ell,k}_{b,a} P^{\ell}_b P^{k}_a = z^{\ell,k}_{b,a}, \quad \forall a,b \in \mcA,\ \forall k,\ell \in \mcX.
\end{equation*}
Since $\gamma_{a,b} = \gamma_{b,a}$, it follows that $\delta^{k,\ell}_{a,b} = \delta^{\ell,k}_{b,a}$.
Also, $z_{a,b} = \sum_{k,\ell} z^{k,\ell}_{a,b}$, hence
\begin{equation*}
    \gamma_{a,b}P_aP_b = \gamma_{a,b} \sum_{k,\ell} \delta^{k,\ell}_{a,b} P^{k}_a P^{\ell}_b.
\end{equation*}
Dividing both sides by $\gamma_{a,b}P_aP_b$ yields
\begin{equation}\label{app:sum-to-one}
    \sum_{k,\ell} \delta^{k,\ell}_{a,b} \frac{P^{k}_a}{P_a} \frac{P^{\ell}_b}{P_b} = 1 \quad \forall a,b\in\mathcal{A}.
\end{equation}

\subsection{Derivation of Constraints 3 and 4}
\label{app:partial-cnt-rate-mod-proof} In closed populations and by definition of our model,
\begin{equation*}
   z^{k,.}_{a,b} = \gamma_{a,b} \delta^{k,.}_{a,b} P^{k}_a P_b = \gamma_{b,a} \delta^{.,k}_{b,a} P_b P^{k}_a = z^{.,k}_{b,a}, \quad \forall a,b \in \mcA,\ \forall k \in \mcX.
\end{equation*}
Since $\gamma_{a,b} = \gamma_{b,a}$, it follows that $\delta^{k,.}_{a,b} = \delta^{.,k}_{b,a}$. Replace $\delta^{k,\ell}_{a,b}$ with $\delta^{k,.}_{a,b}$ and $P^\ell_b$ with $P_b$ in \eqref{app:sum-to-one} to obtain
\begin{equation*}
    \sum_{k} \delta^{k,.}_{a,b} \frac{P^{k}_a}{P_a} = 1 \quad \forall a,b\in\mathcal{A}.
\end{equation*}

\subsection{Proof of Proposition 1}
\begin{proof}
For case (1): When $\bS$ has full column rank, the unique solution is $\bdelta = (\bS^\top \bS)^{-1} \bS^\top \bm{1}_{A^2}$. Since $\bS \bm{1}_{K^2} = \bm{1}_{A^2}$ and the solution is unique, we have $\bdelta = \bm{1}_{K^2}$.

For case (2): When $\operatorname{rank}(\bS) = r < K^2$, its null space has dimension $K^2 - r > 0$. Since $\bm{1}_{A^2} \in \text{Col}(\bS)$, solutions exist. The general solution is $\bdelta = \bm{1}_{K^2} + \bm{n}$ where $\bm{n} \in \text{Null}(\bS)$.
\end{proof}

\subsection{Proof of Proposition 2}
\label{app:frechet-hoeffding-bounds}
\newcommand{\barZ}[2]{\bar{Z}^{#1}_{#2}}
We provide two versions of the proof. The first is easy to follow but does not prove that the bounds are sharp. The second establishes the theoretical link with contingency tables and decomposable graphs.

We first restate the result for convenience. Let $\eta^{k,\ell}_{a,b} := m^{k,\ell}_{a,b}/m^{k,.}_{a,b} \in [0,1]$ denote the attributable fraction for stratum $(k,a)$ and stratum $(\ell,b)$. For any $k,\ell,t \in \mcX$ and $a,b \in \mcA$, the following inequality is a \emph{sharp} (tightest) bound for $\eta^{k,\ell}_{a,b}$:
\begin{equation}
   \max\qty(
        0,
        \frac{z^{\ell,.}_{b,a}}{z^{k,.}_{a,b}} - \sum_{t \neq k} \frac{z^{t,.}_{a,b}}{z^{k,.}_{a,b}}
    ) \leq \eta^{k,\ell}_{a,b} \leq
    \min\qty(
        1,
        \frac{z^{\ell,.}_{b,a}}{z^{k,.}_{a,b}}
    ).
\end{equation}

\begin{proof}[Proof 1]
In a closed population, expected total contact counts satisfy:
\begin{enumerate}
    \item Non-negativity: $z^{k,\ell}_{a,b} \geq 0$
    \item Marginal Consistency: $z^{k,.}_{a,b} = \sum_\ell z^{k,\ell}_{a,b}$
    \item Reciprocity: $z^{k,\ell}_{a,b} = z^{\ell,k}_{b,a}$
\end{enumerate}

A upper bound on $z^{k,\ell}_{a,b}$ arises from the fact that the number of contacts between strata $(k,a)$ and $(\ell,b)$ cannot exceed the total contacts generated by either strata. This is expressed by the inequalities $z^{k,\ell}_{a,b} \leq z^{k,.}_{a,b}$ and $z^{k,\ell}_{a,b}=z^{\ell,k}_{b,a} \leq z^{\ell,.}_{b,a}$. Combining the two inequalities lead to
\begin{equation}
    z^{k,\ell}_{a,b} \leq \min\left(z^{k,.}_{a,b},\, z^{\ell,.}_{b,a}\right).
\end{equation}

For the lower bound we start with the marginal consistency for stratum $(\ell, b)$ with age $a$:
\begin{equation*}
    z^{\ell,.}_{b,a} = z^{\ell,k}_{b,a} + \sum_{t \neq k} z^{\ell,t}_{b,a}.
\end{equation*}
Rearranging for $z^{k,\ell}_{a,b}$ gives us
\begin{equation*}
    z^{k,\ell}_{a,b} = z^{\ell,.}_{b,a} - \sum_{t \neq k} z^{\ell,t}_{b,a}.
\end{equation*}
The lower bound is obtained using the fact that $z^{\ell,t}_{b,a}$ is bounded above by $z^{t,.}_{a,b}$ since $z^{\ell,t}_{b,a} = z^{t,\ell}_{a,b} \leq z^{t,.}_{a,b}$. Substituting this into the summation above gives us
\begin{equation*}
    z^{k,\ell}_{a,b} \geq \max\left(0, z^{\ell,.}_{b,a} - \sum_{t \neq k} z^{t,.}_{a,b}\right)
\end{equation*}
where we take the maximum with $0$ because $z^{k,\ell}_{a,b}$ must be non-negative. Dividing both sides by $z^{k,.}_{a,b}$ gives the desired inequality in terms of $\eta^{k,\ell}_{a,b}$.
\end{proof}

\begin{proof}[Proof 2]
Consider $k,\ell \in \mcX$ and $(a,b) \in \mcA \times \mcA$ to be vertices in an undirected graph. The edge set consist of the edge connecting $k$ to $(a,b)$ and the edge connecting $(a,b)$ to $\ell$. This graph contains two cliques $\{k, (a,b)\}$ and $\{(a,b),\ell\}$ and is \emph{decomposable} since removing the vertex $(a,b)$ would split the graph in two. Applying Theorem 4 of \citet{dobra_bounds_2000} gives us the inequality
\begin{equation*}
    \max\qty(0,\,z^{k,.}_{a,b} + z^{.,\ell}_{a,b} - \sum_{t \in \mcX} z^{t,.}_{a,b}) \leq z^{k,\ell}_{a,b} \leq \min\qty(z^{k,.}_{a,b},\, z^{.,\ell}_{a,b}).
\end{equation*}
Applying the reciprocity principle to substitute $z^{.,\ell}_{a,b} = z^{\ell,.}_{b,a}$ and dividing both sides by $z^{k,.}_{a,b}$ gives us the inequality in terms of $\eta^{k,\ell}_{a,b}$.
\end{proof}

\subsection{Proof that the Softmax Transform is Permutation Equivariant}
\label{app:softmax-perm-equiv-proof}
For $\bx \in \RR^{D}$ the softmax function is defined as
$$
\phi(\bx)_i = \frac{\exp(x_i)}{\sum^D_{j=1} \exp(x_j)}, \quad \text{for } i=1,\ldots,D.
$$
Let $\pi$ be a permutation of $\{1,\ldots,D\}$.
$$
\pi(x_i) := x_{\pi(i)}.
$$
Equivalently , we can express this permutation in the form a matrix $\Pi \in \RR^{D \times D}$ such that
$$
\qty(\Pi \bx)_i = x_{\pi(i)}
$$
We want to show that the softmax is equivariant to permutations, i.e.,
$$
\phi(\Pi \bx) = \Pi\phi(\bx) \quad \text{for all } \bx \in \RR^D.
$$

\begin{proof}
By definition,
$$
\phi(\Pi \bx)_i = \frac{
    \exp(\pi(x_i))
}{
    \sum^D_{j=1} \exp(\pi (x_j))
} = \frac{
    \exp(x_{\pi(i)})
}{
    \sum^D_{j=1} \exp(x_{\pi(j)})
} = \frac{
    \exp(x_{\pi(i)})
}{
    \sum^D_{j=1} \exp(x_j)
}.
$$
where the last equality comes from the fact that the sum of exponential remains the same regardless of ordering. This is equivalent to the $i$-th element of $\Pi \phi(\bx)$:
$$
\qty(\Pi \phi(\bx))_i = \pi \qty( \frac{\exp(x_i)}{\sum^D_{j=1} \exp(x_j)} ) = \frac{\pi(\exp(x_i))}{\sum^D_{j=1} \exp(x_j)} = \frac{\exp(x_{\pi(i)})}{\sum^D_{j=1} \exp(x_j)}.
$$
Hence, the forward transform is equivariant to an arbitrary permutation of the elements.
\end{proof}

\subsection{Proof of Proposition 3}
\label{app:kronecker-symm-proof}

For each $j=1,\ldots,J$, let
\begin{equation*}
    \bOmega^{(j)} \in \RR^{K^2_j \times A \times A}
\end{equation*}
denote the tensor corresponding to the latent 2-dimensional surfaces for the $j$-th stratification variable. Assume that, for each $j$, there exists a permutation
\begin{equation*}
    \pi_j : \qty{1,\ldots,K^2_j} \rightarrow \qty{1,\ldots,K^2_j}
\end{equation*}
such that the following symmetry holds:
\begin{equation*}
    (\bOmega^{(j)})_{k,a,b} = (\bOmega^{(j)})_{\pi_j(k),b,a} \quad \text{for all }k,a,b.
\end{equation*}
Here, $\pi_j(k)$ corresponds to the transpose operation on a $K_j \times K_j$ matrix (viewed as a rearrangement of the first-mode index).

Define the global tensor
\begin{equation*}
    \bOmega = \bOmega^{(1)} \oplus_1 \cdots \oplus_1 \bOmega^{(J)},
\end{equation*}
where $\oplus_1$ is the mode-1 Kronecker sum (i.e., the Kronecker product acting on the first index of the tensors). Let $K_* = \prod^J_{j=1} K_j$, so that the first mode of $\bOmega$ has dimension $K^2_*$. Define a permutation
\begin{equation*}
    \pi : \qty{1,\ldots,K^2_*} \rightarrow \qty{1,\ldots,K^2_*}
\end{equation*}
corresponding to the the transpose on a $K_* \times K_*$ matrix, induced in the natural way by the $\pi_j$)s and the Kronecker structure.

We would like to show that $\bOmega$ inherits the same symmetry:
\begin{equation*}
    \bOmega_{k,a,b} = \bOmega_{\pi(k),b,a} \quad \text{for all } k, a, b.
\end{equation*}

\begin{proof}
For $j \in \qty{1,\ldots,J}$. For each pair of indices $(a,b)$, define the vector 
\begin{equation*}
    \bomega^{(j)}_{a,b} \in \RR^{K^2_j}, \quad [\bomega^{(j)}_{k,a,b}] := \bOmega^{(j)}_{k,a,b}, \quad k=1,\ldots,K^2_j,
\end{equation*}
and the diagonal matrix
\begin{equation*}
    \bD^{(j)}_{a,b} := \text{diag}(\bomega^{(j)}_{a,b}) \in \RR^{K^2_j \times K^2_j}.
\end{equation*}
Let $\bP_j$ be the permutation matrix corresponding to $\pi_j$; that is,
\begin{equation*}
    (\bP_j \bx)_k = \bx_{\pi_j(k)} \quad \text{for all } \bx \in \RR^{K^2_j}, \ k=1,\ldots,K^2_j.
\end{equation*}
The assumed symmetry
\begin{equation*}
    \bOmega^{(j)}_{k,a,b} = \bOmega^{(j)}_{\pi_j(k),a,b} \quad \text{for all } k,a,b
\end{equation*}
is equivalent to
\begin{equation*}
    [\bomega^{(j)}_{a,b}]_k = \bOmega^{(j)}_{k,a,b} = \bOmega^{(j)}_{\pi_j(k),a,b} = [\bomega^{(j)}_{a,b}]_{\pi_j(k)} = (P_j \bomega^{(j)}_{b,a})_k.
\end{equation*}
Hence
\begin{equation}
    \bomega^{(j)}_{a,b} = \bP_j \bomega^{(j)}_{b,a} \quad \text{for all } a,b.
\end{equation}
Because $\bD^{(j)}_{a,b}$ is diagonal with these entries, we also get the matrix form
\begin{equation*}
    \bD^{(j)}_{a,b} = \text{diag}(\bomega^{(j)}_{a,b}) = \text{diag}(\bP_j \bomega^{(j)}_{b,a}) = \bP_j \text{diag}(\bomega^{(j)}_{b,a}) \bP^\top_j  = \bP_j \bD^{(j)}_{b,a} \bP^\top_j.
\end{equation*}
So
\begin{equation}
    \label{eq:local-symmetry}
    \bD^{(j)}_{a,b} = \bP_j \bD^{(j)}_{b,a} \bP^\top_j \quad \text{for all }a,b.
\end{equation}
Let $K^2_* = \prod^J_{j=1} K^2_j$. For each $(a,b)$, we define a diagonal matrix
\begin{equation*}
    \bD_{a,b} := \bD^{(1)}_{a,b} \oplus \cdots \oplus \bD^{(J)}_{a,b}.
\end{equation*}
Concretely, for $J=2$,
\begin{equation*}
    \bD^{(1)}_{a,b} \oplus \bD^{(2)}_{a,b} := \bD^{(1)}_{a,b} \otimes \bI_{K^2_2} + \bI_{K^2_1} \otimes \bD^{(2)}_{a,b},
\end{equation*}
for general $J$, we iterate this construction. The diagonal entries of $\bD_{a,b}$ define a vector $\bomega \in \RR^{K^2_*}$ by $\bD_{a,b} = \text{diag}(\bomega_{a,b})$. By definition of the global tensor $\bOmega$ via the mode-1 Kronecker sum of $\bOmega^{(j)}$ along the first mode, this $\bomega_{a,b}$ is exactly the global first-mode fibre:
\begin{equation}
    \label{eq:global-mode1-fibre}
    \bOmega_{k,a,b} = [\bomega_{a,b}]_k \quad k=1,\ldots,K^2_*.
\end{equation}
Define
$$
\bP := \bP_1 \otimes \cdots \otimes \bP_J \in \RR^{K^2_* \times K^2_*}.
$$
This is a gain a permutation matrix, corresponding to a permutation $\pi: \{1,\ldots,K^2_*\} \rightarrow \{1,\ldots,K^2_*\}$ such that $(\bP \bx)_k = \bx_{\pi(k)}$ for all $\bx \in \RR^{K^2_*}$, $k=1,\ldots,K^2_*$. To describe $\pi$ explicitly, we introduce the canonical bijection
$$
    \rho: \{1,\ldots,K^2_*\} \rightarrow \prod^J_{j=1} \{1,\ldots,K^2_j\}, \quad k \mapsto (k_1, \ldots, k_J).
$$
Then, for basis vectors,
$$
\bP (\be_{k_1} \otimes \cdots \otimes \be_{k_J}) = (\bP_1 \be_{k_1}) \otimes \cdots \otimes (\bP_J \be_{k_J}) = \be_{\pi_1(k_1)} \otimes \cdots \otimes \be_{\pi_J(k_J)}.
$$
Therefore,
$$
\rho(\pi(k)) = (\pi_1(k_1), \ldots, \pi_J(k_J)) \quad \text{whenever } \rho(k) = (k_1, \ldots, k_J),
$$
so $\pi$ acts component-wise by the $\pi_j$'s. Under the natural identification of the first-mode index with a pair of global stratification indices, this is exactly the transpose of the global $K_* \times K_*$ matrix.

We now show that
\begin{equation}
    \label{eq:global-symmetry-diag}
    \bD_{a,b} = \bP \bD_{b,a} \bP^\top \quad \text{for all } a,b.
\end{equation}
By definition of the Kronecker sum,
$$
\bD_{b,a} = \sum^J_{j=1}\qty(\bI_{K^2_1} \otimes \cdots \otimes \bI_{K^2_{j-1}} \otimes \bD^{(j)}_{b,a} \otimes \bI_{K^2_{j+1}} \otimes \cdots \otimes \bI_{K^2_J}). 
$$
Using the standard Kronecker conjugation property, we obtain
\begin{align*}
\bP \bD_{b,a} \bP^\top &= \sum^J_{j=1} \qty(\bP_1 \otimes \cdots \otimes \bP_J) \qty(\bI \otimes \cdots \otimes \bD^{(j)}_{b,a} \otimes \cdots \otimes \bI) \qty(\bP_1 \otimes \cdots \otimes \bP_J)^\top \\
&= \qty(\bI_{K^2_1} \otimes \cdots \otimes \bI_{K^2_{j-1}} \otimes \qty(\bP_j \bD^{(j)}_{b,a} \bP^\top_j) \otimes \bI_{K^2_{j+1}} \otimes \cdots \otimes \bI_{K^2_J}).
\end{align*}
Using the local symmetry \eqref{eq:local-symmetry},
$$
\bP \bD_{b,a} \bP^\top = \sum^J_{j=1} \qty(\bI \otimes \cdots \otimes \bD^{(j)}_{a,b} \otimes \cdots \otimes \bI) = \bD_{a,b}.
$$
Thus \eqref{eq:global-symmetry-diag} holds. Since $\bD_{a,b}$ and $\bD_{b,a}$ are diagonal, conjugation by $\bP$ simply permutes their diagonal entries. From \eqref{eq:global-symmetry-diag} it follows that
\begin{equation}
    \label{eq:global-symmetry-fibre}
    \bomega_{a,b} = \bP \bomega_{b,a} \quad \text{for all } a,b.
\end{equation}
Finally, unpack \eqref{eq:global-symmetry-fibre} in components, recalling \eqref{eq:global-mode1-fibre}:
$$
\bOmega_{k,a,b} = [\bomega_{a,b}]_k = (\bP \bomega_{b,a})_k = [\bomega_{b,a}]_{\pi(k)} = \bOmega_{\pi(k),b,a}.
$$
So we obtain the desired global symmetry:
$$
\bOmega_{k,a,b} = \bOmega_{\pi(k),b,a} \quad \text{for all } k,a,b.
$$
\end{proof}

\subsection{Proof: Attributable Fractions Derived via Reciprocity Satisfy Mixing Bounds}
\label{app:reciprocal-attributable-fraction-bounds-proof}

Suppose the attributable fraction $\eta^{k,\ell}_{a,b}$ satisfies the feasible mixing limits
\begin{equation}\label{app:feasible-mixing-bounds}
\max \qty(0, \frac{z^{\ell,.}_{b,a}}{z^{k,.}_{a,b}} - \sum_{t\neq k} \frac{z^{t,.}_{a,b}}{z^{k,.}_{a,b}}) \leq \eta^{k,\ell}_{a,b} \leq \min \qty(1, \frac{z^{\ell,.}_{b,a}}{z^{k,.}_{a,b}}).
\end{equation}
We wish to show that the reciprocal attributable fraction $\eta^{\ell,k}_{b,a}$ derived deterministically via the following reciprocity relation also satisfy its feasible mixing limits:
\begin{equation*}
    \eta^{\ell,k}_{b,a} = \frac{z^{k,.}_{a,b}}{z^{\ell,.}_{b,a}} \eta^{k,\ell}_{a,b}.
\end{equation*}
\begin{proof}
We show that the lower and upper bounds of $z^{k,.}_{a,b}/z^{\ell,.}_{b,a} \times \eta^{k,\ell}_{a,b}$ corresponds exactly with the lower and upper bounds for $\eta^{\ell,k}_{b,a}$. Multiply $z^{k,.}_{a,b}/z^{\ell,.}_{b,a}$ to the left side of the inequality \eqref{app:feasible-mixing-bounds}:
\begin{align*}
    \frac{z^{k,.}_{a,b}}{z^{\ell,.}_{b,a}} \max \qty(0, \frac{z^{\ell,.}_{b,a}}{z^{k,.}_{a,b}} - \sum_{t\neq k} \frac{z^{t,.}_{a,b}}{z^{k,.}_{a,b}}) &= \max \qty(0, 1 + \frac{z^{k,.}_{a,b}}{z^{\ell,.}_{b,a}}
    \qty{
       \frac{z^{k,.}_{a,b}}{z^{k,.}_{a,b}} - \frac{z_{a,b}}{z^{k,.}_{a,b}}
    }  ) \\
    &= \max \qty(0, 1 +
       \frac{z^{k,.}_{a,b}}{z^{\ell,.}_{b,a}} - \frac{z_{b,a}}{z^{\ell,.}_{b,a}}
    ) \\
    &= \max\qty(0, \frac{z^{k,.}_{a,b}}{z^{\ell,.}_{b,a}} - \sum_{t\neq\ell} \frac{z^{t,.}_{b,a}}{z^{\ell,.}_{b,a}})
\end{align*}
which corresponds to the lower bound of $\eta^{\ell,k}_{b,a}$. Similarly, multiplying $z^{k,.}_{a,b}/z^{\ell,.}_{b,a}$ to the right hand side of the inequality \eqref{app:feasible-mixing-bounds} yields
\begin{equation*}
    \frac{z^{k,.}_{a,b}}{z^{\ell,.}_{b,a}} \min \qty(1, \frac{z^{\ell,.}_{b,a}}{z^{k,.}_{a,b}}) = \min \qty(\frac{z^{k,.}_{a,b}}{z^{\ell,.}_{b,a}}, 1)
\end{equation*}
which corresponds to the upper bound of $\eta^{\ell,k}_{b,a}$.
\end{proof}
\section{Sampling Attributable Fractions from Truncated Dirichlet Distributions}
To sample attributable fractions from truncated Dirichlet distributions, we implement the conditional sampling method described by \citet{ng_truncated_2011}. Here, we detail the specifics of the sampler using our notation. We also provide an efficient NumPy implementation of this procedure in our Python package, \texttt{cntmosaic}.

\subsection{Sampling from a Truncated Beta Distribution}
Sampling from a truncated beta distribution is a fundamental component of the truncated Dirichlet sampler. The truncated beta distribution has the density function
\begin{equation*}
    \text{TruncBeta}(x \mid \alpha_1, \alpha_2; l, u) = c^{-1}_{\text{TB}} x^{\alpha_1-1} (1-x)^{\alpha_2-1},
\end{equation*}
where $x \in [l,u] \subseteq [0,1]$, and the normalising constant is given by
\begin{equation*}
c_{\text{TB}} = \int^u_l x^{\alpha_1-1} (1-x)^{\alpha_2-1} dx.
\end{equation*}
The inverse transform method can be applied to generate random variables from the truncated beta distribution. Let $F_B$ denote the cumulative distribution function (CDF) of the standard beta distribution, and let $F^{-1}_B$ be its inverse CDF. Furthermore, let $F_B(l)$ and $F_B(u)$ be the values of the CDF evaluated at the lower and upper bounds, respectively. If we draw a standard uniform random variable $W \sim \text{Unif}(0,1)$, a random variable $X$ following the truncated beta distribution can be obtained as:
\begin{equation*}
    X = F^{-1}_B \qty(F_B(l) + W\{F_B(u) - F_B(l)\}).
\end{equation*}

\subsection{Conditional Sampling from a Truncated Dirichlet Distribution}
Let $\{\eta^{k,\ell}_{a,b}\}^K_{\ell=1} \sim \text{TruncDirichlet}(\alpha \{P^\ell_b/P_b\}^K_{\ell=1}; \bm{l}^k_{a,b}, \bm{u}^k_{a,b})$, and let $W_1, \ldots, W_{K-1} \sim \text{Unif}(0,1)$, where $\bm{l}^k_{a,b}$ and $\bm{u}^k_{a,b}$ denote the lower and upper bounds of the feasible mixing limits at coordinate $(k,a,b)$, respectively. Theorem 7.3 of \citet{ng_truncated_2011} provides the following stochastic representation for the fractions $\{\eta^{k,\ell}_{a,b}\}^K_{\ell=1}$:
\begin{align*}
    \eta^{k,K-1}_{a,b} &\overset{d}{=} F^{-1}_{K-1}(W_{K-1}), \\
    \eta^{k,\ell}_{a,b} &\overset{d}{=} \qty(1 - \sum^{K-1}_{j=\ell+1} \eta^{k,j}_{a,b}) F^{-1}_\ell(W_\ell), \quad \text{for } \ell=K-2,\ldots,1, \\
    \eta^{k,K}_{a,b} &= 1 - \sum^{K-1}_{\ell=1} \eta^{k,\ell}_{a,b},
\end{align*}
where $F^{-1}_\ell$ denotes the inverse CDF of the truncated beta distribution
\begin{equation*}
    \text{TruncBeta}\qty(\alpha \frac{P^\ell_b}{P_b}, \alpha \qty( 1 - \sum^{K-1}_{j=\ell} \frac{P^j_b}{P_b} ); \tilde{l}^{k,\ell}_{a,b}, \tilde{u}^{k,\ell}_{a,b} ),
\end{equation*}
with the dynamically adjusted bounds defined as:
\begin{align*}
\tilde{l}^{k,\ell}_{a,b} &= \max\qty(
    \frac{l^{k,\ell}_{a,b}}{1 - \sum^{K-1}_{j=\ell+1}\eta^{k,j}_{a,b}},
    1 - \frac{u^{k,.}_{a,b} - \sum^{K-1}_{j=\ell} u^{k,j}_{a,b}}{1 - \sum^{K-1}_{j=\ell+1}\eta^{k,j}_{a,b}}
), \\
\tilde{u}^{k,\ell}_{a,b} &= \max \qty(
    \frac{u^{k,\ell}_{a,b}}{1 - \sum^{K-1}_{j=\ell+1}\eta^{k,j}_{a,b}},
    1 - \frac{l^{k,.}_{a,b} - \sum^{K-1}_{j=\ell} l^{k,j}_{a,b}}{1 - \sum^{K-1}_{j=\ell+1}\eta^{k,j}_{a,b}}
).
\end{align*}
\section{Details of the Simulation Study}

\subsection{Simulating Stratified Population Demographics} 
We begin by generating the population age distribution for each composite stratum $s \in \mathcal{S} = \mathcal{X}_1 \times \cdots \times \mathcal{X}_J$. Let $\bm{p} \in \mathbb{N}^A$ denote the vector of population sizes by age for a user chosen baseline reference population, and let $\bm{p}_s \in \mathbb{N}^A$ represent the corresponding population vector for stratum $s$. We derive $\bm{p}_s$ by applying smooth, stratum-specific multiplicative deviations to the baseline vector $\bm{p}$.

To ensure that demographic shifts vary smoothly across consecutive ages, we construct an $A \times A$ covariance matrix $\bm{\Sigma}$ by evaluating a squared exponential kernel, configured with an amplitude of $1$ and a lengthscale of $3$, over the discrete age grid. For each category $k_j \in \{1,\dots,K_j\}$ of the $j$-th stratification variable, we draw a random deviation vector $\bm{z}^{k_j} \in \mathbb{R}^A$ from a zero-mean multivariate normal distribution:
$$
\bm{z}^{k_j} \sim \text{MVNormal}(0, \bm{\Sigma}).
$$

To map these unconstrained deviations into valid proportions, we apply a softmax transformation across the $K_j$ categories at each specific age $a$:
$$
\varepsilon^{k_j}_a = \frac{\exp(\alpha_j z^{k_j}_a)}{\sum_{c=1}^{K_j} \exp(\alpha_j z^c_a)},
$$where the scalar parameter $\alpha_j$ can be altered to control the magnitude of the deviation from the reference age distribution. The population count for a specific category is then obtained by scaling the reference population by this proportion and rounding to the nearest integer:
$$p^{k_j}_a = \lfloor \varepsilon^{k_j}_a p_a \rceil.
$$

Finally, assuming conditional independence between the stratification variables for a given age, the population count for a specific combined stratum $s$ is calculated by multiplying the relative proportions of its constituent categories:
$$
p^s_a = \left\lfloor p_a \prod^J_{j=1} \frac{p^{k_j(s)}_a}{p_a} \right\rceil,
$$
where $k_j(s)$ denotes the specific category of the $j$-th variable associated with composite stratum $s$.

\subsection{Simulating a Representative Contact Survey Sample}

To simulate a representative survey, we determine the joint distribution of participants across all possible strata. For each $a$, let $\bm{q}^{(j)}_a$ be the vector of proportions for the $K_j$ categories of variable $j$. We calculate the joint proportion vector $\bm{q}_a$ by taking the Kronecker product of these marginal proportions
\begin{equation*}
    \bm{q}_a = \bm{q}^{(1)}_a \otimes \cdots \otimes \bm{q}^{(J)}_a.
\end{equation*}
This resulting vector $\bm{q}_a$ has $K_* = \prod K_j$ elements, representing the probability that an individual of age $a$ belongs to a specific combined stratum $s$.

Assuming the survey is representative of the total population, we generate a sample of $N$ participants. For each participant, we first sample an age $a$ from the global population distribution and then sample their stratum $s$ from a categorical distribution defined by $\bm{q}_a$. This gives us a synthetic participant pool that reflects the underlying demographic structure.

\subsection{Simulating Contact Patterns}

Real-world contacts consist of interactions that occur across various physical settings. Rather than generating these contact patterns from scratch, we leverage the synthetic contact matrices developed by \citet{mistry_inferring_2021}. Their data-driven approach provides realistic template matrices for contacts occurring in the household $\bT^{(H)}$, at school $\bT^{(S)}$, at work $\bT^{(W)}$ and in the community $\bT^{(C)}$.

We simulate a single baseline contact intensity matrix as a randomly weighted combination of these templates:
\begin{equation} \label{app:sim:template-mixture}
\bT = v^{(H)}\bT^{(H)} + v^{(S)}\bT^{(S)} + v^{(W)}\bT^{(W)} + v^{(C)}\bT^{(C)}
\end{equation}
where the mixing weights $(v^{(H)}, v^{(S)}, v^{(W)}, v^{(C)}) \sim \operatorname{Dirichlet}(1,1,1,1)$. We then scale this matrix by a intensity factor to obtain the contact intensity matrix $\bM = C \times \bT$ where $C > 0$ is the average marginal contact intensity.

By default, the matrix $\bM$ does not satisfy reciprocity, thus we apply the correction:
\begin{equation*}
    \tilde\bM = \frac12 (\bM + \bP \bM^\top \bP^{-1}).
\end{equation*}
where $\bP \in \RR^{A \times A}$ is a diagonal matrix of population sizes for the reference population. The global contact rate matrix is obtained as
\begin{equation*}
    \bGamma = \tilde \bM \bP^{-1}.
\end{equation*}

After establishing the baseline, we introduce variations for specific strata (e.g. how contact patterns differ by income or education). For each stratification variable $j$, we generate deviation matrices $\bD^{k_j,\ell_j}$ using the same mixture logic as the templates. We centre these in log-space to ensure they represent relative changes:
\begin{equation*}
    \bE^{k_j,\ell_j}_{a,b} := \log \bT^{k_j,\ell_j}_{a,b} - \frac{1}{A^2} \sum^A_{a=1} \sum^A_{b=1} \log \bT^{k_j,\ell_j}_{a,b}.
\end{equation*}
The final deviation matrix is
$$
\bD^{k_j,\ell_j} := \exp{\eta_j \bE^{k_j,\ell_j} + \nu_j \mathbb{I}(k_j = \ell_j)},
$$
where $\eta_j$ is a tuning parameter controlling the strength of the stratification effect (e.g. $\eta_j = 0.2$ mild, $\eta_j = 0.5$ moderate) and $\nu_j$ controls the degree of assortativeness within the same group. Reciprocity is enforced by setting the deviation matrices for category pairs where $k_j \neq \ell_j$ to
\begin{equation*}
    \bD^{\ell_j, k_j}_{b,a} \leftarrow \bD^{k_j, \ell_j}_{a,b}
\end{equation*}
and category pairs where $k_j = \ell_j$ to
\begin{equation*}
    \bD^{k_j,k_j}_{a,b} \leftarrow \sqrt{ \bD^{k_j,k_j}_{a,b} \times \bD^{k_j,k_j}_{b,a} }.
\end{equation*}
To find the deviation for a specific pair of combined strata $(s,t)$, we multiply the deviations of their constituent categories
\begin{equation*}
    d^{s,t}_{a,b} = \prod^J_{j=1} \bD^{k_j(s),k_j(t)}_{a,b}
\end{equation*}
This product captures the interaction of multiple stratification factors on contact behaviour. Let $S^{s,t}_{a,b} := P^s_a P^t_b / (P_a P_b)$ denote the product of population proportions. We use this to normalise the deviations
\begin{equation*}
    \delta^{s,t}_{a,b} := \frac{d^{s,t}_{a,b}}{\sum_{u\in \mcS} \sum_{v \in \mcS} d^{u,v}_{a,b} S^{u,v}_{a,b}}.
\end{equation*}
From here, we follow the definition of full contact rate modifiers to construct the stratified contact rates
$\gamma^{s,t}_{a,b} = \gamma_{a,b} \delta^{s,t}_{a,b}$ and subsequently the stratified contact intensities $m^{s,t}_{a,b} = \gamma_{a,b} \delta^{s,t}_{a,b} P^t_b$.

Finally, we generate the specific contact records. For each survey respondent $i = 1, \dots, N$ of age $a_i$ belonging to stratum $s_i$, we model the number of reported contacts with individuals of age $b$ in stratum $t$, denoted $Y^t_{i,b}$, using a Poisson process:
$$Y^t_{i,b} \sim \text{Poisson}( m^{s_i,t}_{a_i,b} \cdot \gamma_i ), \quad \gamma_i \sim \text{Gamma}(5, 5).
$$
Here, $m^{s_i,t}_{a_i,b}$ is the fully stratified contact intensity derived in the previous step, and $\gamma_i$ is a multiplicative individual-level random effect (mean 1) that introduces the characteristic overdispersion typically observed in empirical social contact surveys.

\subsection{Comparing fine-age and coarse-age methods}
\texttt{socialmixr} and Prem's model are designed to operate on data where the participant and contact age is aggregated into disjoint age groups $\mcB$ while van de Kassteele's model and the current framework utilize 1-year fine age records $\mcA$. To facilitate a direct comparison between these models, we must reconcile their resolutions. We can treat coarse-age models as approximations fine-age surfaces where the contact estimate for age pair $(a,b)$ is represented by the estimate of the coarse-age cell $(c,d)$ in which $a$ and $b$ are nested. We define two operations to transition between these resolutions:

Let $\bM \in \RR^{A \times A}$ be a fine-age contact matrix with elements $m_{a,b}$. For a set of coarse age group $\mcB$, the pixilation operation aggregates $\bM$ into a coarse-age matrix $\bW \in \mathbb{R}^{B \times B}$. Each element $w_{c,d}$ is defined as the population-weighted average:
\begin{equation}\label{eq:pixilation}
    w_{c,d} = \sum_{a \in c}\sum_{b \in d} \frac{P_a}{P_c} m_{a,b}
\end{equation}
where $P_a$ is the population of age $a$ and $P_c = \sum_{a \in c}P_a$ is the total population of age group $c$.

Let $\bW \in \mathbb{R}^{B \times B}$ be a coarse-age contact matrix. The depixilation operation projects $\bW$ back onto the fine-age resolution $A \times A$ to created an aligned matrix $\bar{\bM}$. For $a \in c$ and $b \in d$, the elements are defined as:
\begin{equation}\label{eq:depixilation}
    \bar{m}_{a,b} = \frac{P_c}{P_a} \frac{w_{c,d}}{|c||d|}.
\end{equation}
We note that depixilation does not recover lost information; it simply aligns the granularity for comparison. It is easy to verify that applying pixilation to a depixilated $\bar{\bM}$ recovers the original coarse matrix $\bW$.

The depixilation operation allows us to evaluate the performance of all models against a true 1-year age matrix. By depixilating the outputs of 
\texttt{socialmixr} or Prem's model, we can calculate standard error metrics, such as RMSE and MAPE, at the 1-year age resolution. Furthermore, we can assess uncertainty quantification using interval scores and quantile coverage percentages. It is important to note that coarse-age models face an inherent discretization error. This is the unavoidable loss of accuracy incurred by approximating heterogenous fine-age values with a single constant coarse-age estimate. This baseline error highlights the advantage of employing fine-age models when high-resolution data is available.

\subsection{Simulation Results}

Tables~\ref{tab:sim-cint-1}--\ref{tab:sim-cint-3} report contact intensity RMSE ($\times 10^2$), MAPE (\%), interval score (IS), and coverage (\%) for all 16 race/ethnicity-pair sub-matrices and the overall average, one table per assortativity level $\nu \in \{-1.5, 0, 1.5\}$.

\begin{table}    
    \begin{tabularx}{\textwidth}{ c c c c >{\raggedright\arraybackslash}X } 
    \toprule
    Scenario & Cov. Structure & Matrices $K^2$ & $N/K^2$ & Rationale \\ 
    \midrule
    A & 1 & 1 & 1500 & Age by age stratification only. \\
    B & 2 & 4 & 375 & Basic reciprocal pair (e.g., Male-Female).\\
    C & 4 & 16 & 93.8 & Granular reciprocal matrix (e.g. Income).\\
    D & $2 \times 3$ & 36 & 41.7 & Cross-stratification (e.g., Sex $\times$ Educ.).\\
    E & $2 \times 2 \times 2$ & 64 & 23.4 & Multi-variable stratification. \\
    \bottomrule
    \end{tabularx}

    \vspace{0.1cm}
    
    \begin{tabular*}{\textwidth}{@{\extracolsep{\fill}}lccccc@{}} 
    \toprule
    Model & \multicolumn{5}{c}{Scenario, mean (std.)} \\ 
    \cmidrule(lr){2-6}
     & A & B & C & D & E \\ \midrule
    MAPE & 14.2 (2.7) & 18.8 (2.9) & 23.6 (2.8) & 25.3 (3.2) & 26.5 (3.2) \\
    Interval score & 0.06 (<0.01) & 0.09 (0.01) & 0.05 (<0.01) & 0.04 (<0.01) & 0.03 (<0.01) \\
    95\% Coverage & 83.8 (4.9) & 95.6 (1.8) & 94.0 (1.4) & 95.4 (1.2) & 96.1 (1.1) \\
    Inference Time (s) & 132 (26) & 858 (257) & 1156 (215) & 2002 (343) & 2646 (413) \\
    \bottomrule
    \end{tabular*}
    \vspace{0.1cm}
    \label{app:tab:experiment-III}
    \caption{The top panel of the table outlines the various experimental configurations considered and the rationale behind each, with the number of survey respondents fixed at $N=1,500$ across all scenarios. The bottom panel details the performance of the g-BSCF models evaluated across multiple metrics. The number of basis functions we use in our penalized Bayesian spline priors are fixed at $40$.}.
\end{table}

\begin{sidewaystable}
\centering
\caption{Contact intensity matrix estimation errors for the case study with BICS demographics ($\nu = -1.5$). RMSE is scaled by $10^2$. Values shown as mean (standard deviation) across 100 simulation runs.}
\label{tab:sim-cint-1}
\resizebox{\linewidth}{!}{%
\begin{tabular}{lrrrrrrrr}
\toprule
 & \multicolumn{4}{c}{Fully-stratified model} & \multicolumn{4}{c}{Partially-stratified model} \\
\cmidrule(lr){2-5}\cmidrule(lr){6-9}
Contact pair & RMSE $(\times10^2)$ & MAPE (\%) & IS & Cov (\%) & RMSE $(\times10^2)$ & MAPE (\%) & IS & Cov (\%) \\
\midrule
Hispanic $\to$ Hispanic  & 0.38 (0.18) & 31.0 (12.6) & 0.01 (<0.01) & 72.6 (12.3) & 0.93 (0.94)   & 37.9 (6.9)   & 1.59 (0.29) & 100.0 (0.0) \\
Hispanic $\to$ NH Black  & 0.92 (0.45) & 18.6 (3.1)  & 0.02 (<0.01) & 85.1 (5.5)  & 4.50 (1.78)   & 88.2 (1.2)   & 0.58 (0.08) & 100.0 (0.1) \\
Hispanic $\to$ NH Other  & 0.72 (0.49) & 19.4 (3.1)  & 0.02 (<0.01) & 84.4 (5.5)  & 3.54 (1.95)   & 89.2 (1.3)   & 0.36 (0.05) & 99.9 (0.2)  \\
Hispanic $\to$ NH White  & 3.63 (1.95) & 12.1 (2.0)  & 0.10 (0.02) & 87.5 (3.7)  & 15.55 (5.50)  & 57.0 (2.1)   & 0.41 (0.08) & 87.0 (4.0)  \\
\midrule
NH Black $\to$ Hispanic  & 1.34 (0.76) & 18.6 (3.1)  & 0.03 (<0.01) & 85.1 (5.5)  & 6.91 (3.32)   & 88.2 (1.2)   & 1.06 (0.14) & 100.0 (0.1) \\
NH Black $\to$ NH Black  & 0.26 (0.08) & 30.2 (10.0) & 0.01 (<0.01) & 73.2 (10.8) & 0.97 (0.53)   & 82.0 (1.3)   & 1.64 (0.28) & 100.0 (0.0) \\
NH Black $\to$ NH Other  & 0.67 (0.31) & 19.5 (3.3)  & 0.02 (<0.01) & 84.6 (6.7)  & 3.68 (1.46)   & 97.3 (0.2)   & 0.71 (0.12) & 100.0 (0.0) \\
NH Black $\to$ NH White  & 3.63 (1.68) & 12.2 (2.1)  & 0.10 (0.01) & 88.8 (3.7)  & 15.80 (5.26)  & 50.1 (4.7)   & 0.48 (0.09) & 91.0 (3.5)  \\
\midrule
NH Other $\to$ Hispanic  & 1.43 (1.11) & 19.4 (3.1)  & 0.03 (<0.01) & 84.4 (5.5)  & 7.33 (4.67)   & 89.2 (1.3)   & 1.17 (0.21) & 99.9 (0.2)  \\
NH Other $\to$ NH Black  & 0.90 (0.40) & 19.5 (3.3)  & 0.02 (<0.01) & 84.6 (6.7)  & 4.91 (1.95)   & 97.3 (0.2)   & 1.26 (0.20) & 100.0 (0.0) \\
NH Other $\to$ NH Other  & 0.20 (0.06) & 33.3 (12.0) & <0.01 (<0.01) & 71.3 (11.8) & 0.77 (0.32)   & 95.4 (0.4)   & 1.60 (0.38) & 100.0 (0.0) \\
NH Other $\to$ NH White  & 4.29 (3.09) & 13.2 (2.7)  & 0.12 (0.02) & 88.8 (4.7)  & 15.98 (10.22) & 32.6 (4.9)   & 0.59 (0.10) & 96.4 (2.7)  \\
\midrule
NH White $\to$ Hispanic  & 1.26 (0.89) & 12.1 (2.0)  & 0.02 (<0.01) & 87.5 (3.7)  & 5.89 (2.88)   & 57.0 (2.1)   & 0.08 (0.01) & 87.0 (4.0)  \\
NH White $\to$ NH Black  & 0.81 (0.45) & 12.2 (2.1)  & 0.01 (<0.01) & 88.8 (3.7)  & 3.77 (1.52)   & 50.1 (4.7)   & 0.05 (0.01) & 91.0 (3.5)  \\
NH White $\to$ NH Other  & 0.72 (0.61) & 13.2 (2.7)  & 0.01 (<0.01) & 88.8 (4.7)  & 2.89 (2.13)   & 32.6 (4.9)   & 0.04 (<0.01) & 96.4 (2.7)  \\
NH White $\to$ NH White  & 0.69 (0.36) & 11.1 (2.2)  & 0.02 (<0.01) & 91.7 (3.7)  & 10.01 (1.90)  & 210.8 (15.5) & 0.83 (0.28) & 1.4 (1.7)   \\
\midrule
All                      & 2.02 (0.96) & 18.5 (2.4)  & 0.03 (<0.01) & 84.2 (2.8)  & 8.51 (3.02)   & 78.4 (2.0)   & 0.78 (0.08) & 90.6 (0.8)  \\
\bottomrule
\end{tabular}%
}
\end{sidewaystable}

\begin{sidewaystable}
\centering
\caption{Contact intensity matrix estimation errors for the case study with BICS demographics ($\nu = 0$). RMSE is scaled by $10^2$. Values shown as mean (standard deviation) across 100 simulation runs.}
\label{tab:sim-cint-2}
\resizebox{\linewidth}{!}{%
\begin{tabular}{lrrrrrrrr}
\toprule
 & \multicolumn{4}{c}{Fully-stratified model} & \multicolumn{4}{c}{Partially-stratified model} \\
\cmidrule(lr){2-5}\cmidrule(lr){6-9}
Contact pair & RMSE $(\times10^2)$ & MAPE (\%) & IS & Cov (\%) & RMSE $(\times10^2)$ & MAPE (\%) & IS & Cov (\%) \\
\midrule
Hispanic $\to$ Hispanic  & 1.06 (0.79) & 22.0 (5.2)  & 0.02 (<0.01) & 79.4 (10.1) & 4.69 (3.28)   & 78.7 (2.0) & 0.81 (0.18) & 100.0 (0.1) \\
Hispanic $\to$ NH Black  & 0.66 (0.32) & 21.6 (4.0)  & 0.02 (<0.01) & 82.2 (5.9)  & 3.12 (1.45)   & 86.4 (1.2) & 0.42 (0.06) & 100.0 (0.0) \\
Hispanic $\to$ NH Other  & 0.49 (0.33) & 22.2 (4.8)  & 0.01 (<0.01) & 81.3 (7.4)  & 2.43 (1.42)   & 87.9 (1.6) & 0.24 (0.04) & 100.0 (0.1) \\
Hispanic $\to$ NH White  & 2.42 (1.36) & 13.1 (2.3)  & 0.07 (0.01) & 89.1 (4.0)  & 10.80 (4.03)  & 61.5 (4.1) & 0.46 (0.11) & 93.4 (3.4)  \\
\midrule
NH Black $\to$ Hispanic  & 0.96 (0.55) & 21.6 (4.0)  & 0.02 (<0.01) & 82.2 (5.9)  & 4.84 (2.65)   & 86.4 (1.2) & 0.81 (0.15) & 100.0 (0.0) \\
NH Black $\to$ NH Black  & 0.66 (0.33) & 20.9 (5.3)  & 0.02 (<0.01) & 84.2 (8.1)  & 3.37 (1.71)   & 95.4 (0.3) & 0.85 (0.21) & 100.0 (0.0) \\
NH Black $\to$ NH Other  & 0.47 (0.20) & 22.1 (3.8)  & 0.01 (<0.01) & 81.5 (6.9)  & 2.57 (1.09)   & 97.0 (0.3) & 0.49 (0.13) & 100.0 (0.0) \\
NH Black $\to$ NH White  & 2.38 (1.20) & 13.2 (2.4)  & 0.07 (0.01) & 90.3 (3.8)  & 10.92 (4.12)  & 52.0 (5.9) & 0.45 (0.10) & 95.7 (2.6)  \\
\midrule
NH Other $\to$ Hispanic  & 0.98 (0.74) & 22.2 (4.8)  & 0.02 (<0.01) & 81.3 (7.4)  & 5.08 (3.39)   & 87.9 (1.6) & 0.81 (0.20) & 100.0 (0.1) \\
NH Other $\to$ NH Black  & 0.64 (0.26) & 22.1 (3.8)  & 0.02 (<0.01) & 81.5 (6.9)  & 3.42 (1.45)   & 97.0 (0.3) & 0.86 (0.22) & 100.0 (0.0) \\
NH Other $\to$ NH Other  & 0.47 (0.18) & 23.9 (6.6)  & 0.01 (<0.01) & 79.3 (11.4) & 2.49 (1.02)   & 98.9 (0.1) & 0.84 (0.28) & 100.0 (0.0) \\
NH Other $\to$ NH White  & 2.88 (2.29) & 14.3 (2.5)  & 0.08 (0.01) & 89.4 (4.1)  & 11.25 (8.19)  & 35.5 (5.2) & 0.49 (0.09) & 97.3 (2.2)  \\
\midrule
NH White $\to$ Hispanic  & 0.85 (0.61) & 13.1 (2.3)  & 0.01 (<0.01) & 89.1 (4.0)  & 4.19 (2.07)   & 61.5 (4.1) & 0.08 (0.02) & 93.4 (3.4)  \\
NH White $\to$ NH Black  & 0.54 (0.32) & 13.2 (2.4)  & 0.01 (<0.01) & 90.3 (3.8)  & 2.64 (1.16)   & 52.0 (5.9) & 0.05 (0.01) & 95.7 (2.6)  \\
NH White $\to$ NH Other  & 0.49 (0.45) & 14.3 (2.5)  & 0.01 (<0.01) & 89.4 (4.1)  & 2.08 (1.69)   & 35.5 (5.2) & 0.03 (<0.01) & 97.3 (2.2)  \\
NH White $\to$ NH White  & 2.47 (1.43) & 9.8 (1.9)   & 0.03 (<0.01) & 83.7 (5.5)  & 6.40 (2.13)   & 41.1 (4.2) & 0.36 (0.13) & 41.1 (7.7)  \\
\midrule
All                      & 1.57 (0.77) & 18.1 (2.0)  & 0.03 (<0.01) & 84.6 (2.9)  & 6.15 (2.55)   & 72.2 (1.4) & 0.50 (0.08) & 94.6 (0.8)  \\
\bottomrule
\end{tabular}%
}
\end{sidewaystable}

\begin{sidewaystable}
\centering
\caption{Contact intensity matrix estimation errors for the case study with BICS demographics ($\nu = 1.5$). RMSE is scaled by $10^2$. Values shown as mean (standard deviation) across 100 simulation runs.}
\label{tab:sim-cint-3}
\resizebox{\linewidth}{!}{%
\begin{tabular}{lrrrrrrrr}
\toprule
 & \multicolumn{4}{c}{Fully-stratified model} & \multicolumn{4}{c}{Partially-stratified model} \\
\cmidrule(lr){2-5}\cmidrule(lr){6-9}
Contact pair & RMSE $(\times10^2)$ & MAPE (\%) & IS & Cov (\%) & RMSE $(\times10^2)$ & MAPE (\%) & IS & Cov (\%) \\
\midrule
Hispanic $\to$ Hispanic  & 2.24 (2.05) & 19.8 (4.9)  & 0.04 (0.01) & 83.6 (8.3)  & 10.45 (6.98) & 92.2 (0.6)   & 0.19 (0.05) & 97.6 (3.2)  \\
Hispanic $\to$ NH Black  & 0.32 (0.17) & 28.1 (7.4)  & 0.01 (<0.01) & 77.2 (8.8)  & 1.33 (0.85)  & 80.2 (1.7)   & 0.23 (0.06) & 100.0 (0.0) \\
Hispanic $\to$ NH Other  & 0.23 (0.13) & 29.0 (7.0)  & <0.01 (<0.01) & 75.9 (8.8)  & 1.03 (0.68)  & 82.9 (1.8)   & 0.12 (0.04) & 100.0 (0.0) \\
Hispanic $\to$ NH White  & 1.06 (0.55) & 16.7 (2.9)  & 0.03 (<0.01) & 89.5 (4.1)  & 6.26 (1.90)  & 99.8 (11.2)  & 0.53 (0.16) & 98.1 (2.2)  \\
\midrule
NH Black $\to$ Hispanic  & 0.46 (0.28) & 28.1 (7.4)  & 0.01 (<0.01) & 77.2 (8.8)  & 2.07 (1.51)  & 80.2 (1.7)   & 0.49 (0.17) & 100.0 (0.0) \\
NH Black $\to$ NH Black  & 1.30 (0.86) & 19.1 (3.9)  & 0.03 (<0.01) & 87.0 (6.5)  & 6.80 (3.57)  & 98.5 (0.1)   & 0.19 (0.08) & 99.5 (1.3)  \\
NH Black $\to$ NH Other  & 0.23 (0.10) & 28.5 (6.2)  & 0.01 (<0.01) & 76.3 (7.6)  & 1.13 (0.59)  & 95.7 (0.4)   & 0.24 (0.11) & 100.0 (0.0) \\
NH Black $\to$ NH White  & 1.07 (0.54) & 17.4 (3.3)  & 0.03 (0.01) & 89.3 (4.3)  & 5.42 (2.22)  & 78.4 (9.7)   & 0.40 (0.13) & 99.1 (1.1)  \\
\midrule
NH Other $\to$ Hispanic  & 0.46 (0.28) & 29.0 (7.0)  & 0.01 (<0.01) & 75.9 (8.8)  & 2.16 (1.62)  & 82.9 (1.8)   & 0.42 (0.18) & 100.0 (0.0) \\
NH Other $\to$ NH Black  & 0.31 (0.13) & 28.5 (6.2)  & 0.01 (<0.01) & 76.3 (7.6)  & 1.50 (0.79)  & 95.7 (0.4)   & 0.42 (0.20) & 100.0 (0.0) \\
NH Other $\to$ NH Other  & 0.88 (0.45) & 21.1 (4.9)  & 0.02 (<0.01) & 85.3 (7.4)  & 4.92 (2.15)  & 99.7 (0.0)   & 0.19 (0.12) & 99.4 (1.7)  \\
NH Other $\to$ NH White  & 1.29 (1.21) & 18.3 (3.7)  & 0.04 (0.01) & 88.8 (5.2)  & 5.48 (4.73)  & 58.8 (8.8)   & 0.35 (0.09) & 99.0 (1.4)  \\
\midrule
NH White $\to$ Hispanic  & 0.36 (0.24) & 16.7 (2.9)  & 0.01 (<0.01) & 89.5 (4.1)  & 2.41 (0.99)  & 99.8 (11.2)  & 0.09 (0.03) & 98.1 (2.2)  \\
NH White $\to$ NH Black  & 0.24 (0.14) & 17.4 (3.3)  & 0.01 (<0.01) & 89.3 (4.3)  & 1.31 (0.62)  & 78.4 (9.7)   & 0.04 (0.01) & 99.1 (1.1)  \\
NH White $\to$ NH Other  & 0.22 (0.24) & 18.3 (3.7)  & <0.01 (<0.01) & 88.8 (5.2)  & 1.02 (0.96)  & 58.8 (8.8)   & 0.02 (<0.01) & 99.0 (1.4)  \\
NH White $\to$ NH White  & 5.95 (2.98) & 11.6 (1.3)  & 0.04 (<0.01) & 61.7 (7.5)  & 9.44 (8.26)  & 8.7 (1.9)    & 0.08 (0.02) & 87.0 (5.3)  \\
\midrule
All                      & 1.85 (0.88) & 21.7 (2.6)  & 0.02 (<0.01) & 82.0 (3.0)  & 5.31 (2.67)  & 80.7 (2.4)   & 0.25 (0.07) & 98.5 (0.7)  \\
\bottomrule
\end{tabular}%
}
\end{sidewaystable}


\section{socialmixr}
\label{app:socialmixr}
\texttt{socialmixr} is a popular R package for estimating social contact matrices from contact survey data~\citep{funk_socialmixr_2024}. This section details the statistical methods used in the package to derive age-stratified contact estimates. Because it is such a widely used tool for deriving contact estimates, we give a formal description of the uncertainty quantification methodology and give a theoretical discussion of why it may fail. Finally we provide details on how we extend the method to accommodate stratification by multiple variables.

\subsection{Estimation Procedure}

Let $\mcB$ denote a set of $B$ age ranges (e.g. 0-4, 5-9, ..., 75-79, 80-84). For a respondent $i$ belonging to age range $c_i \in \mcB$, we let $y_{i,d}$ denote the number of reported contacts with individuals in age range $d \in \mcB$. The aggregate number of contact from age range $c$ to age range $d$ is given by $y_{c,d} = \sum^N_{i=1} 1_c(c_i) y_{i,d}$ where $1_c(\cdot)$ is an indicator function for membership in age range $c$.

Empirical social contact intensities $\hat{m}_{c,d}$, is defined as the average number of contacts reported by respondents in age range $c$ with members of age range $d$ in the population:
\begin{equation}\label{eq:socialmixr-contact-intensity}
    \hat{m}_{c,d} = y_{c,d} / N_c
\end{equation}
where $N_c$ denotes the number or respondents in age range $c$. Given the population size of age range $d$, $P_d$, the empirical contact rate is estimated as $\hat{\gamma}_{c,d} = \hat{m}_{c,d}/P_d$.

To satisfy the reciprocity constraint, \texttt{socialmixr} applies a post-hoc correction. The adjusted contact intensity, $\tilde{m}_{c,d}$, is computed as:
\begin{equation} \label{eq:socialmixr-adj-contact-intensity}
    \tilde{m}_{c,d} = \frac{1}{2 P_c} (\hat{m}_{c,d}P_c + \hat{m}_{d,c}P_d)
\end{equation} The corrected contact rate $\tilde{\gamma}_{c,d}$, simplifies to the arithmetic mean of the original rates
$$
\tilde{\gamma}_{c,d} = \frac{\tilde{m}_{c,d}}{P_d}= \frac{\hat{m}_{c,d}P_c + \hat{m}_{d,c}P_d}{2 P_c P_d} = \frac12(\hat{\gamma}_{c,d} + \hat{\gamma}_{d,c}).
$$

\subsection{Uncertainty Quantification via Bootstrapping}
Uncertainty in the contact estimates is quantified using a non-parametric bootstrap. The procedure assigns a weight of $1/N$ to each survey respondents and performs a multinomial resampling step at each iteration to select the respondent to include in the bootstrap sample. Let $\mathcal{I}^{(j)}$ be the set of respondent indices for the $j$-th (for $j=1,\ldots,J$) bootstrap sample. The contact intensities and rates are calculated as
\begin{equation*}
    \hat{m}^{(j)}_{c,d} = y^{(j)}_{c,d} / N^{(j)}_c, \quad \hat{\gamma}^{(j)}_{c,d} = \hat{m}^{(j)}_{c,d} / P_d.
\end{equation*}
where $y^{(j)}_{c,d} := \sum_{i \in \mathcal{I}^{(j)}} 1_c(c_i) y_{i,d} $ and $N^{(j)}_c := \sum_{i \in \mathcal{I}^{(j)}} 1_c(c_i)$
The distribution of these bootstrap estimates allows for the calculation of variances and quantiles. For example, the sampling variance of the contact intensity can be estimated as
\begin{equation*}
    \text{Var}(\hat{m}_{c,d}) = \frac{1}{J-1} \sum^J_{j=1} \qty(\hat{m}^{(j)}_{c,d} - \hat{m}_{c,d})^2.
\end{equation*}

\subsection{Bootstrap Failure Probability}
The multinomial resampling step can cause the bootstrap algorithm to fail when the sample size of certain age ranges are small. More specifically, at each bootstrap iteration, there is a non-zero probability that $N^{(j)}_c = 0$ for a subset of $c \in \mcB$, which results in \eqref{eq:socialmixr-contact-intensity} being undefined (the program terminates due to a zero-division error). To reduce the probability of failure, a common pragmatic work-around is to the define the age ranges $\mcB$ such that for every $c \in \mcB$, $N_c$ is large enough for $\PP(N^{(j)}_c = 0)$ to be sufficiently small. Here, we examine the failure probability in detail, and provide a rule-of-thumb for determining the appropriate $N_c$.

For the $j$-th bootstrap iteration, let the vector of sample sizes for age ranges be denoted as $\{N_c^{(j)}\}^B_{c=1}$. This vector follows a multinomial distribution:
\begin{equation*}
    \{N_c^{(j)}\}^B_{c=1} \sim \text{Multinomial}(N, \{N_c/N\}^B_{c=1}).
\end{equation*}
When the sample size $N$ is sufficiently large and $N_c$ is reasonably large for most groups, we can assume that the events $N_c = 0$ are nearly independent. Hence, we can approximate the probability of any $N_c = 0$ using the union bound:
\begin{equation*}
\PP(\exists c: N^{(j)}_c = 0) \approx \sum^B_{c=1} \PP(N^{(j)}_c = 0) = \sum_{c=1}^B  ( 1 - N_c/N )^N \approx \sum_{c=1}^B e^{-N_c}.
\end{equation*}
This probability is dominated by the smallest $N_c$ which means that even if the total sample size $N$ is large, if a single age range has a small sample size, e.g. $N_c=3$, then the probability of failure at any given bootstrap iteration is approximately $e^{-3} \times 100 \approx 4.9\%$. The union bound also provides a practical rule-of-thumb for adjusting the size of age ranges. If the user wants to control the failure rate at $\alpha$ for a bootstrap procedure consisting of $J$ replicates, adjust the size of age ranges such that
\begin{equation}\label{eq:socialmixr-rule-of-thumb}
    \sum^B_{c = 1} e^{-N_c} \leq \frac{\alpha}{J}.
\end{equation}

\subsection{Limitations for Uncertainty Quantification of Social Contacts Estimates}

While the bootstrap is simple to implement and highly scalable, relying on it to quantify uncertainty in age-stratified social contact intensities can be inherently risky in sparse data regimes. The reliability of bootstrap-derived uncertainty is fundamentally limited by how well the empirical distribution of the observed contact counts represents the true underlying data-generating distribution. If the stratum sample size $N_c$ is small, the empirical distribution will almost certainly be a poor representation of the true distribution.

Formally, let $\Pi_{c,d}$ denote the true underlying probability distribution of the contact counts $Y_{i,d}$ between a respondent $i$ and members of age range $d$ in the population. The standard bootstrap procedure approximates this underlying distribution by resampling with replacement directly from the observed data. Because it relies strictly on the support of this empirical distribution, the estimate $\hat{m}^{(j)}_{c,d}$ is confined to the interval $[\min_{i: c_i = c}(y_{i,d}), \max_{i:c_i=c} (y_{i,d})].$
If the original sample size is small, this empirical support is potentially unrepresentative of the true population support. If the survey fails to capture a sufficient range of contact counts, the bootstrap will systematically underestimate $\text{Var}(\hat{m}_{c,d})$, leading to overly narrow confidence intervals.

Furthermore, for asymptotic consistency to hold, the bootstrap distribution of the standardised statistic $(N^{(j)}_c)^{1/2}(\hat{m}^{(j)}_{c,d} - \hat{m}_{c,d})$ must converge in distribution to $(N_c)^{1/2} (\hat{m}_{c,d} - m_{c,d})$ \citep{bickel_asymptotic_1981}. When $N_c$ is small, this asymptotic justification breaks down. The discrete nature of the empirical distribution dominates the resampling process, producing a spiky bootstrap distribution that fails to capture the underlying smoothness of the true sampling distribution. Consequently, the resulting confidence intervals for $m_{c,d}$ may deviate significantly from their nominal $1 - \alpha$ coverage levels.

\subsection{Extensions for Multiple Stratification and Reciprocity}
We aim to estimate contact intensities $m^{s,t}_{c,d}$ for all $c,d \in \mcB$ and $s,t \in \mcS = \mcX_1 \times \cdots \times \mcX_J$. To do so, we independently estimate the sample mean then apply a post-hoc reciprocity correction to enforce reciprocity.

The empirical social contact intensity is estimated as:
\begin{equation*}
    \hat{m}^{s,t}_{c,d} = y^{s,t}_{c,d}/N^s_c
\end{equation*}
provided that $N^s_c > 0$. To ensure that the bootstrap algorithm does not terminate due to zero-division error, we iteratively combine the age range with smallest sample sizes with their neighbouring age range until the threshold in \eqref{eq:socialmixr-rule-of-thumb} is satisfied ($\alpha = 0.05$ and $J = 3000$). If a bootstrap iteration still fails due to zero division error, we skip to the next iteration.

To enforce reciprocity, we apply post-hoc corrections. For the within-stratum case where $s = t$, correction is the same as \eqref{eq:socialmixr-adj-contact-intensity}:
\begin{equation*}
    \tilde{m}^{s,s}_{c,d} = \frac{1}{2 P^s_c} (\hat{m}^{s,s}_{c,d}P^s_c + \hat{m}^{s,s}_{d,c}P^s_d).
\end{equation*}
For the between-stratum case where $s\neq t$, we apply the following correction:
\begin{equation*}
    \tilde{m}^{s,t}_{c,d} = \frac12 \qty(\hat{m}^{s,t}_{c,d} + \hat{m}^{t,s}_{d,c} \frac{P^t_d}{P^s_c}).
\end{equation*}
\section{Details on Prem's model}\label{app:prem}
\subsection{Original Model Definition}
\citet{prem_projecting_2017} proposed a hierarchical Bayesian model to infer contact intensity patterns in different contexts (household, school, workplace, and community). We revisit the original model definition in \citet{prem_projecting_2017}, then describe in detail how we implement and extend their model.

Let $\mcB$ denote a set of $B$ age ranges. For a respondent $i$ in age $c_i \in \mcB$, let $Y_{i,d}$ be the number of reported contacts with members of age range $d$ in the population. The contact counts are modelled using the following Poisson model
\begin{equation*}
    Y_{i,d} \sim \text{Poisson}(m_{c_i, d}\sigma_i), \quad \log m_{c_i,d} = \beta_0 + \beta_{c_i,d}.
\end{equation*}
where $\beta_0$ represents the global contact intensity,  $\beta_{c_i,d}$ are age range specific offsets from $\beta_0$ for $c$ and $d$, and $\sigma_i$ is a random effect capturing individual-level heterogeneity.

The original prior specifications were designed for implementations in JAGS and are as follows:
\begin{align*}
    \exp(\beta_0) &\sim \text{Exp}(0.0001), \\
    \sigma_i &\sim \text{Gamma}(\theta, \theta), \\
    \theta &\sim \text{Exp}(0.0001).
\end{align*}
To enforce smoothness across the contact surface, the age range specific offsets $\beta_{c,d}$ are defined by their neighbours in a two-dimensional lattice
\begin{equation*}
        \beta_{c_i,d} = \frac{1}{|\mathcal{N}_{c_i,d}|} \sum_{(c',d') \in \mathcal{N}_{c_i,d}} \beta_{c',d'}
\end{equation*}
where $\mathcal{N}_{c_i,d}$ denotes the set of age ranges adjacent to $(c_i, d)$. The underlying coefficients $\beta_{c',d'}$ are assigned a diffuse normal prior $\beta_{c',d'} \sim \text{Normal}(0, 100^2)$.

Crucially, the prior on the vector of offsets $\bbeta \in \RR^{B^2}$ is equivalent to a first-order IGMRF expressed as $\bbeta \sim \text{MVNormal}(\bm{0}, (\tau \bQ)^{-1})$. Here, $\tau \in \RR$ represents the precision, fixed at $1/100^2$ in the original work, and $\bQ$ is a precision matrix that encodes the adjacency of age ranges on the 2D lattice. This matrix is constructed via the Kronecker sum $\bQ := \bQ_0 \oplus \bQ_0$ where $\bQ_0 = \bD^\top \bD$ and $\bD$ is a $(B -1)\times B$ first order difference matrix:
\begin{equation*}
    \bD = \begin{pmatrix}
    -1 & 1 & & & \\
       & -1 & 1 & & \\
       & & \ddots &\ddots & \\
       & & & -1 & 1
    \end{pmatrix}.
\end{equation*}

\subsection{Adapted Model and Implementation for Modern Probabilistic Programming Languages}
\label{app:prem-adaptation}
To adapt this model for NumPyro (or other modern probabilistic programming languages such as Stan or PyMC), and ensure numerical stability during NUTS or SVI inference, we modify the prior distributions as such:
\begin{align*}
    \beta_0 &\sim \text{Normal}(0, 2.5^2), \\
    \log(\sigma_i) &\sim \text{Normal}(\mu_\sigma, \tau_\sigma^2), \\
    \mu_\sigma &\sim \text{Normal}(0, 1), \\
    \tau_\sigma &\sim \text{HalfNormal}(1), \\
    \tau &\sim \text{Gamma}(2, 1).
\end{align*}

Sampling from the IGMRF presents a computational challenge because $\bQ$ is singular, preventing the use of standard Cholesky-based multivariate Gaussian sampling. To overcome this, we utilise the eigenvalue decomposition of the precision matrix. We first factorise the 1D component $\bQ_0 = \bU_0 \bm{\Lambda}_0 \bU_0^\top$, where $\bU_0$ is a matrix of eigenvectors and $\bm{\Lambda}_0$ is the diagonal matrix of eigenvalues. Given that $\bQ_0$ has rank $B - 1$, we form the reduced matrices $\tilde{\bU}_0 \in \RR^{B \times (B - 1)}$ and $\tilde \Lambda_0 \in \RR^{(B-1) \times (B-1)}$ by removing the zero eigenvalue and its corresponding eigenvector. The Moore-Penrose pseudo-inverse of $\bQ_0$ is then $\bQ_0^+ = \tilde{\bU}_0 \tilde{\bm{\Lambda}}_0^{-1} \tilde{\bU}^\top_0$.

By defining the Kronecker products $\tilde{\bU} = \tilde{\bU}_0 \otimes \tilde{\bU}_0$ and $\tilde{\bm{\Lambda}} = \tilde{\bm{\Lambda}}_0 \otimes \bI_{B-1} + \bI_{B-1} \otimes \tilde{\bm{\Lambda}}_0$ we can express the desired precision matrix as $\bQ = \tilde{\bU} \tilde{\bm{\Lambda}} \tilde{\bU}^\top$
and its pseudo-inverse as $\bQ^+ = \tilde{\bU} \tilde{\bm{\Lambda}}^{-1} \tilde{\bU}^\top$. 
This decomposition allows us to perform an efficient reparameterised sampling of the IGMRF. We compute $\bbeta = \tau^{-1/2} \tilde{\bU} \tilde{\bm{\Lambda}}^{-1/2} \bm{z}$, where $\bm{z}$ is a vector of independent standard normal draws $\bm{z} \sim \text{MVNormal}(\bm{0}, \bI_{(B-1)^2})$. This approach ensures that the sampled offsets remain within the identifiable subspace of the IGMRF while maintaining the gradient information necessary for efficient NUTS transitions and SVI updates.

\subsection{Extensions for Stratification and Reciprocity}
In the original work, Prem's model does not account for additional covariate stratification nor does it enforce reciprocity. Here, we propose an extension that address both limitations. We aim to estimate contact intensities $m^{s,t}_{c,d}$ for all $c,d \in \mcB$ and $s,t \in \mcS = \mcX_1 \times \cdots \times \mcX_J$.

A defining characteristic of Prem's model is that it infers contact intensity surfaces directly rather than contact rate surfaces. While this approach avoids the necessity of external population size data during the primary inference stage, it precludes the use of the symmetry constraints on contact rates employed in \citet{kassteele_efficient_2017} or \citet{dan_estimating_2023}. Because $\bM^{s,t}$ and $\bM^{t,s}$ are not generally equal, we must independently infer $\bM^{s,t}$ for all pairs. Consequently, we extend the observation model as
\begin{equation*}
    Y^{s_i,t}_{c_i,d} \sim \text{Poisson}(m^{s_i,t}_{c_i,d} \sigma_i), \quad \log m^{s_i,t}_{c_i,d} = \beta^{s_i,t}_0 + \beta^{s_i,t}_{c_i,d}.
\end{equation*}
We apply the same prior structures described previously to these stratified parameters. Notably, this formulation differs from fitting separate models to each $(s,t)$ pair, as the individual random effect terms $\sigma_i$ are shared across all stratifications, allowing the model to pool information regarding individual-level reporting heterogeneity.

For scenarios involving partial or mixed stratification, reciprocity adjustments are not required since the underlying contact rate matrices lack inherent symmetry. However, in the full stratification scenario, we employ a post-hoc reciprocity adjustment for each posterior sample of the contact intensity matrix. For the within-stratum case where $s = t$, the adjustment is defined as:
\begin{equation*}
    \bM^{s,s} \leftarrow \frac12 \qty(\bM^{s,s} + (\bP^s)^{-1} (\bM^{s,s})^\top \bP^s)
\end{equation*}
where $\bP^s$ denotes the diagonal matrix of population sizes for stratum $s$. For the between-stratum case where $s\neq t$, the reciprocal relationship is enforced by setting:
\begin{equation*}
    \bM^{s,t} \leftarrow \frac12 \qty(\bM^{s,t} + (\bM^{t,s})^\top (\bP^s)^{-1} \bP^t).
\end{equation*}
This adjustment ensures that the total number of contacts between any two age ranges across strata is balanced.
\section{Details on van de Kassteele's model}
\label{app:vdKassteele}
\subsection{Original Model Definition}
In \citet{kassteele_efficient_2017} the authors propose a hierarchical Bayesian model for inferring age- and sex- specific contact patterns when 1 year age data is available. Here, we revisit the original model definition and subsequently describe how we extend and implement the framework for our analysis.

Let $\mcA = \{0,\ldots,A\}$ denote the set of 1 year age, and let $\mcS = \{M, F\}$ denote the stratification space comprising male and female sexes. For ages $a,b \in \mcA$, and sexes $s, t \in \mcS$, let $Y^{s,t}_{a,b}$ denote the total number of reported contacts between participants of sex $s$ and age $a$, and contacts of sex $t$ age $b$. The observed contact counts are modelled using the following negative binomial distribution to account for overdispersion:
\begin{equation*}
    Y^{s,t}_{a,b} \sim \text{NegBin}(\mu^{s,t}_{a,b}, \varphi), \quad \log \mu^{s,t}_{a,b} = \beta_0 + \beta^{s,t}_{a,b} + \log N^s_a + \log P^t_b.
\end{equation*}
where $\varphi$ is the overdispersion parameter, $\beta_0$ is the global intercept (baseline contact rate), and $\beta^{s,t}_{a,b}$ are age- and sex- specific deviations from the baseline. The term $N^s_a$ denotes number of participants of sex $s$ and age $a$, while $P^t_b$ represents the population size of sex $t$ and age $b$. The following priors are assigned to the unknown parameters:
\begin{equation*}
    \log(\varphi) \sim \text{Normal}(0, 1000), \quad \beta_0 \sim \text{Normal}(0, 1000).
\end{equation*}

The prior on the interaction terms $\beta^{s,t}_{a,b}$ is a custom second-order IGMRF that imposes smoothness and reciprocity. Let $\bD \in \RR^{A-2 \times A}$ be the second-order difference operator matrix
\begin{equation}\label{app:second-order-difference-operator}
    \bD = \begin{pmatrix}
    1 & -2 & 1 &  &  & \\
    & 1 & -2 & 1  &  &\\
    & & \ddots & \ddots  & \ddots & \\
    & & & 1 & -2 & 1
    \end{pmatrix}.
\end{equation}
Difference operator matrices that operates on the participant (source) and contact (target) age dimensions can be constructed using Kronecker producs:
\begin{equation*}
    \bD_1 = \bI_A \otimes \bD, \quad \bD_2 = \bD \otimes \bI_A.
\end{equation*}
Here, $\bD_1$ applies smoothing across the participant ages (rows), while $\bD_2$ applies smoothing across the contact ages (columns). From these operators, we construct the precision matrix for the IGMRF.

For non-symmetric cases where $s \neq t$ (e.g., Male-to-Female contacts), the precision matrix $\bQ^{s,t}$ is given by
\begin{equation}\label{app:construct-precision-matrix}
    \bQ^{s,t} = \bD_1^\top\bD_1 + \bD^\top_2 \bD_2 \quad \text{for } s \neq t.
\end{equation}
Let $\bB^{s,t} \in \RR^{A \times A}$ denote the matrix of parameters $\beta^{s,t}$ and $\bbeta^{s,t} = \text{vec}(\bB^{s,t})$ be its vectorization. The prior on $\bbeta^{s,t}$ is specified as:
\begin{equation*}
    \bbeta^{s,t} \sim \text{MVNormal}(\bm{0}, (\tau \bQ^{s,t})^{-1})
\end{equation*}
where $\tau$ is a global precision parameter. Due to the reciprocity of the contact rates, we only need to infer one off-diagonal matrix (e.g. $\bB^{M,F}$), as the reciprocal matrix is determined by $\bB^{t,s} = (\bB^{s,t})^\top$.

For the symmetric cases where $s = t$ (Male-to-Male or Female-to-Female contacts), reciprocity implies that $\beta^{s,s}_{a,b} = \beta^{s,s}_{b,a}$. Therefore, the inference is restricted to the lower-triangular elements (including the diagonal) of $\bB^{s,s}$, defined by the index set $\mathcal{I}_{\text{tril}} = \{ (a,b): 1 \leq b \leq a \leq A \}$.
To construct the reduced precision matrix $\bQ^{s,s}_{\text{tril}}$, we modify the difference operators $\bD_1$ and $\bD_2$ by removing specific rows and columns.
\begin{itemize}
\item Column Removal: We retain only the columns corresponding to the indices in $\mathcal{I}_{\text{tril}}$. This reduces the parameter vector $\bbeta^{s,s}$ from length $A^2$ to $A(A+1)/2$.
\item Row Removal: A row in the full difference operator corresponds to a second-order difference constraint centered at a specific coordinate $(a,b)$. We retain a row if and only if all three nodes involved in its finite difference stencil lie within $\mathcal{I}_{\text{tril}}$. For $\bD_1$, the stencil involves $\{(a-1, b), (a, b), (a+1, b)\}$. We keep the row corresponding to $(a,b)$ if $(a-1, b) \in \mathcal{I}_{\text{tril}}$ and $(a+1, b) \in \mathcal{I}_{\text{tril}}$. For $\bD_2$, the stencil involves $\{(a, b-1), (a, b), (a, b+1)\}$. We keep the row corresponding to $(a,b)$ if $(a, b-1) \in \mathcal{I}_{\text{tril}}$ and $(a, b+1) \in \mathcal{I}_{\text{tril}}$.
\end{itemize}
Consequently, smoothness constraints are not enforced across the diagonal boundary, effectively treating the diagonal as a natural boundary of the random field. The reduced operators $\bD_{\text{tril},1}$ and $\bD_{\text{tril},2}$ are then used to construct $\bQ^{s,s}_{\text{tril}} = \bD_{\text{tril},1}^\top \bD_{\text{tril},1} + \bD_{\text{tril},2}^\top \bD_{\text{tril},2}$. The prior is then:
\begin{equation*}
    \bbeta^{s,s}_\text{tril} \sim \text{MVNormal}(\bm{0}, (\tau \bQ_\text{tril})^{-1}).
\end{equation*}
The full matrix $\bB^{s,s}$ is reconstructed by mirroring the lower-triangular elements to the upper triangle. In the original implementation, the global precision parameter was assigned a prior $\tau \sim \text{Gamma}(1, 0.0001)$ and inference was performed using Integrated Nested Laplace Approximations (INLA) \citep{rue_approximate_2009}.

\subsection{Adpated Model and Implementation for Modern Probabilistic Programming Languages}
To adapt this model for NumPyro (or other Hamiltonian Monte Carlo-based PPLs), and ensure numerical stability during NUTS or SVI inference, we modify the prior distributions as follows:
\begin{equation*}
    \varphi \sim \text{Exp}(1), \quad \beta_0 \sim \text{Normal}(- \log \bar{P}, 2.5^2), \quad \tau \sim \text{Gamma}(2.0, 0.1).
\end{equation*}
where $\log \bar{P}$ is the logarithm of the average population size across all strata and ages, serving as a centering constant. As noted in Section \ref{app:prem-adaptation}, sampling from an IGMRF with a rank-deficient precision matrix is an issue. For the full non-symmetric matrices, we utiliz the eigen-decomposition technique described in \ref{app:prem-adaptation}. For the symmetric matrices subject to the lower-triangular constraint, we first construct the precision matrix $\bQ_\text{tril}$ and then apply the same eigen-decomposition trick to sample the free parameters efficiently.

\subsection{Extensions for Multiple Stratification}
van de Kassteele's model extends naturally to accommodate stratification by additional variables. Instead of being defined by a single variable, the stratification space $\mcS$ consists of composite strata defined by multiple variables. For any two strata $s, t \in \mcS$, if $s \neq t$, we sample a standard non-symmetric matrix from the second-order IGMRF. If $s = t$, we sample from the lower triangular constrained prior and enforce symmetry by mirroring the elements to the upper triangle. In the full stratification scenario for a stratification space $\mcS$ consisting of $K$ strata, we sample $(K-1)K/2$ non-symmetric matrices and $K$ symmetric matrices. The remaining $(K-1)K/2$ are determined via the mirroring trick.

\section{Additional Details for BICS}

\subsection{Data Preprocessing}

\subsubsection{Participant Data} 
To maintain consistency in age resolution within our fine-age modeling framework, participants who reported their age only as a age range were assigned a 1-year age drawn uniformly from within their reported range. Furthermore, to ensure the integrity of the stratification covariates, the analysis was restricted to complete cases; any participant missing age, sex, education, or race/ethnicity information was excluded. Finally, the study sample was restricted to adults aged 18 to 84 years.

\subsubsection{Contact Data} 
The theoretical reciprocity constraints underpinning our statistical framework mathematically require the contact intensity matrices to be square. Consequently, to conform with the respondent age domain, the contact records were similarly restricted to contacted individuals aged 18 to 84 years.

\subsubsection{Population Data} 
Stratified population size estimates for the year 2020 were obtained from the American Community Survey via IPUMS USA \citep{ruggles_ipums_2025}. To mitigate the effects of sampling noise, we stabilised the population counts by applying a 5-year moving average across the age dimension for each composite stratum.

\subsection{Modeling of Group Contacts}

\begin{figure}
    \centering
    \includegraphics[width=\linewidth]{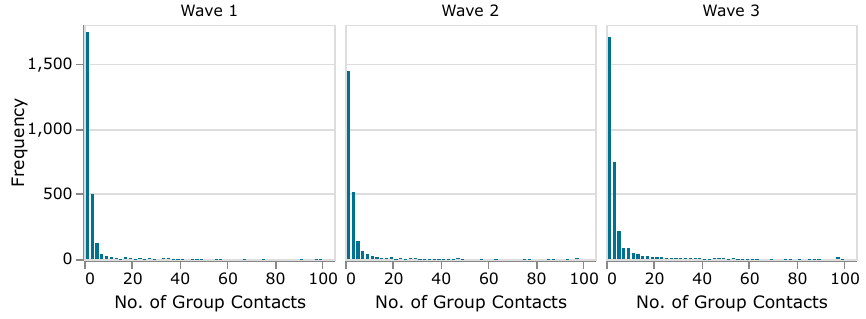}
    \caption{The distribution of group contact report in the Berkeley Interpersonal Contact Study by survey wave.}
    \label{fig:bics-group-contacts-distribution}
\end{figure}

\begin{figure}
    \centering
    \includegraphics[width=\linewidth]{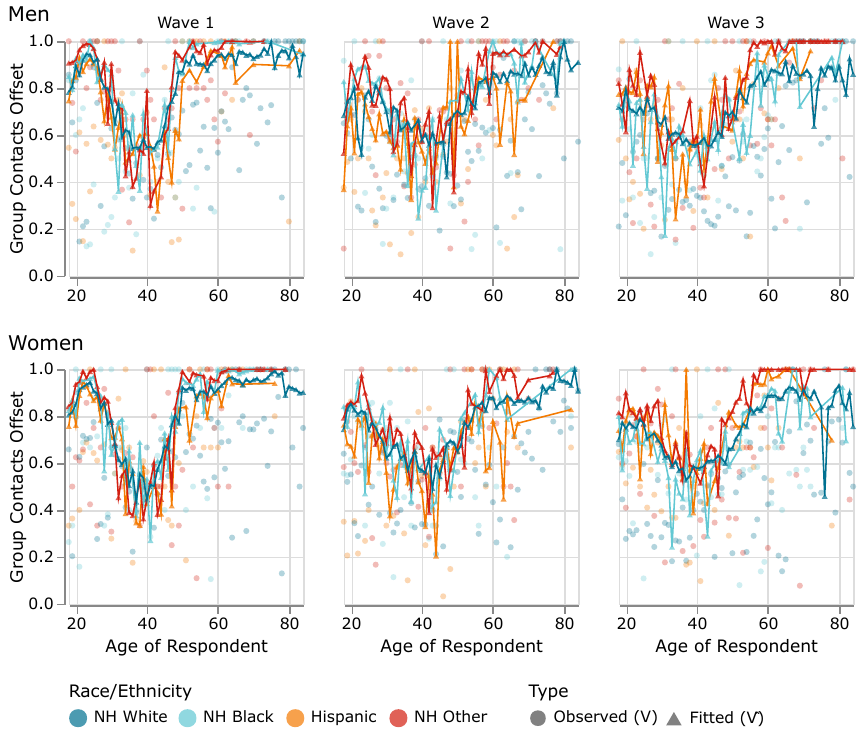}
    \caption{The value of the group contact offset terms $V^k_a$ by respondent age, sex, and race/ethnicity, faceted by BICS survey wave. Round points denote values calculated from the raw data. Triangular points connected by line segments denote values calculated from the predictions of a quantile Generalized Additive Model for individual group contact reports.}
    \label{fig:bics-group-contacts-modelled}
\end{figure}

The Berkeley Interpersonal Contact Study (BICS) allowed participants to report \emph{group contacts}, which are bulk counts of non-household contacts without any demographic details. Because the inclusion of these records significantly impacts the estimated contact intensities, they must be appropriately accounted for in the modeling process.

Following \citet{dan_estimating_2023}, we incorporate these group contacts into our statistical framework via an offset term in the linear predictor. Let $Y^k_a$ and $U^k_a$ denote the number of contacts with attribute information and group contacts by respondents in stratum $k$ and age $a$, respectively. To scale the marginal contact intensity $m^k_a$ so that it reflects the true total contact volume, we first calculate the proportion of contacts that have attribute information:
\begin{equation}\label{app:group-contact-offset}
   V^{k}_a = \frac{Y^k_a}{Y^k_a + U^k_a} \in [0,1]. 
\end{equation}
By dividing the estimated intensity by $V^k_a$, we inflate the contact intensity to account for the group contacts. This is achieved by adding $\log V^k_a$ as an offset to the log-linear predictor for the observed detailed contacts. This ensures that model parameters target the inflated contact intensity:
$$\log \mathbb{E}[Y^{k,.}_{a,b} | N^k_a] = \log \gamma_{a,b} + \log \delta^{k,.}_{a,b} + \log N^k_a + \log P_b + \log V^k_a.
$$
As illustrated in \Cref{fig:bics-group-contacts-distribution}, the distribution of group contacts across survey waves is heavily right-skewed, with a small subset of participants reporting extremely large counts. Without appropriate preprocessing, these outliers severely distort the final contact intensity estimates. A common preprocessing procedure is to truncate the group contacts at a chosen threshold (e.g., 90\%, 95\%, or 99\% quantile) \citep{mossong_social_2008, feehan_quantifying_2021}. However, the choice of threshold is difficult to justify and does not account for patterns in the group contacts. To robustly handle this overdispersion and estimate the typical group contact volume, we model the group contacts using a quantile Generalized Additive Model targeting the median \citep{fasiolo_fast_2021}.

Specifically, for a respondent $i \in \{1,\ldots,N\}$ of age $a_i$, we specify the linear predictor for the median group contacts as:
$$\eta_i = \beta_0 + f(a_i) + \bm{\beta}^\top\bm{x}_i,
$$
where $\beta_0$ is the global intercept, $f(a_i)$ is a non-linear effect over the respondent's age modeled via a thin plate regression spline, and $\bm{x}_i$ is a vector of categorical covariates including sex, education, race/ethnicity, and urban-rural residency.

We implement this using the \texttt{qgam} package in R \citep{fasiolo_qgam_2021}. Because smooth quantile GAMs replace the non-differentiable pinball loss with a smooth approximation, the model requires a learning rate to appropriately balance the loss function against the smoothing priors. We first calibrate this optimal learning rate dynamically using a data-driven routine. We then pass this calibrated parameter into the final median quantile regression model to estimate the regression coefficients.

Finally, we replace the raw observed unstructured contact counts for each individual, $U_i$, with the median predictions, $\hat{U}_i$, generated by the fitted quantile model. These smoothed predictions are then aggregated by stratum and age to compute $\hat{U}^k_a$. We substitute the raw $U^k_a$ with the predicted $\hat{U}^k_a$ in the calculation of the offset terms $V^k_a$ \eqref{app:group-contact-offset}. In \Cref{fig:bics-group-contacts-modelled}, we compare the values of the group contact offsets calculated using the raw observed values and predicted values from the quantile GAM model.

\subsection{Additional Results} Here, we provide additional figures that supplement the main analysis. In \Cref{fig:bics-marginal-contact-intensity-diff} and \Cref{fig:bics-marginal-contact-intensity} we show the posterior deviations in marginal contact intensity from the population weighted average and marginal contact intensity stratified by biological sex and race/ethnicity for BICS waves 1 to 3, respectively. In \Cref{fig:bics-partial-contact-intensity} we show the posterior mean of the partially stratified contact matrices inferred by our model.

\begin{figure}[!h]
    \centering
    \includegraphics[width=\linewidth]{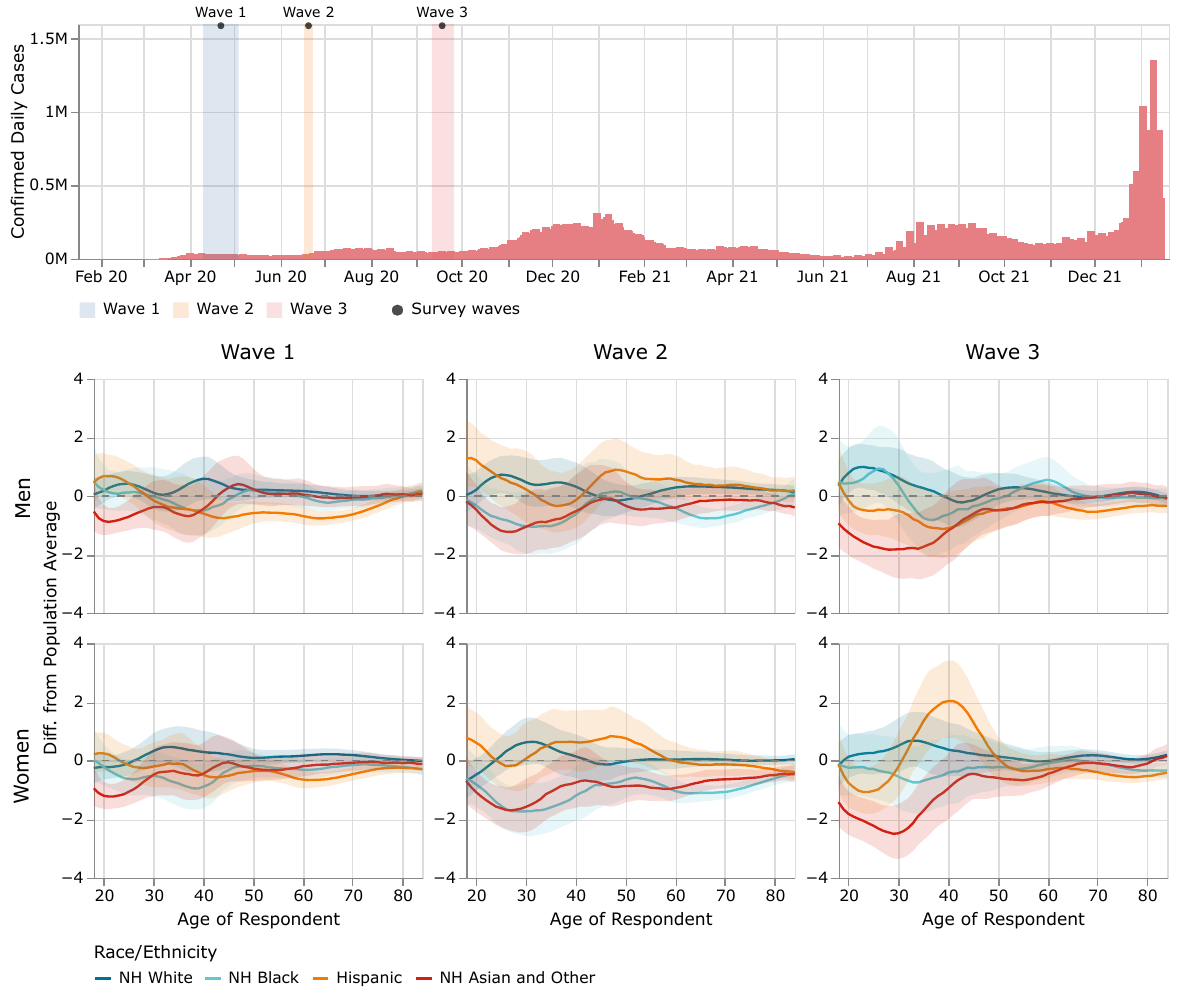}
    \caption{Top row: Confirmed daily cases of COVID-19 in the United States obtained from \citet{dong_interactive_2020}, overlaid with the data collection periods for BICS waves 1 to 3. Rows 2 and 3: Deviations in marginal contact intensity from the population-weighted average, stratified by biological sex and race/ethnicity across BICS waves 1 to 3. Solid lines represent the posterior means, and shaded bands represent the 95\% CI.}
    \label{fig:bics-marginal-contact-intensity-diff}
\end{figure}

\begin{figure}[!h]
    \centering
    \includegraphics[width=\linewidth]{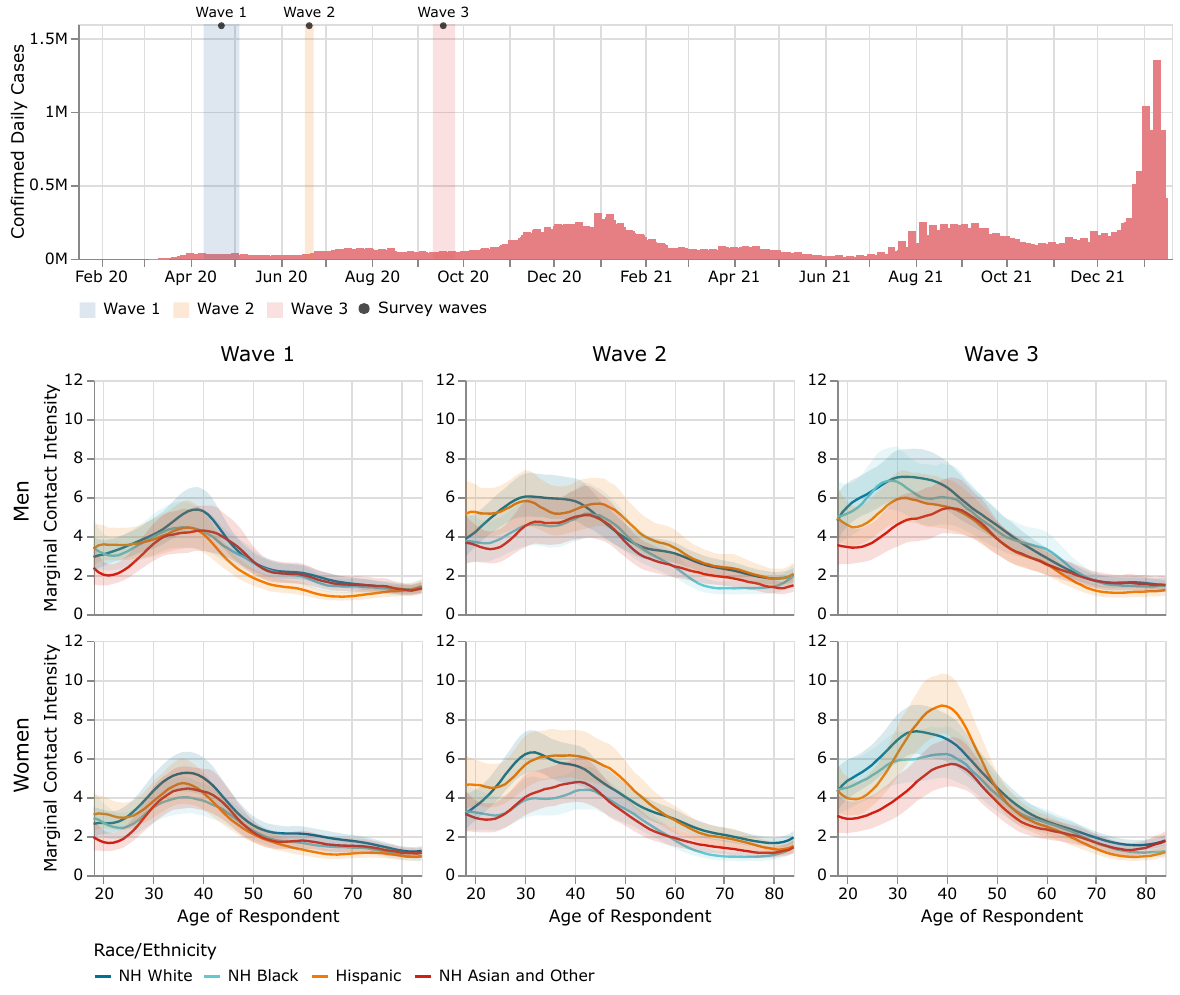}
    \caption{Top row: Confirmed daily cases of COVID-19 in the United States obtained from \citet{dong_interactive_2020}, overlaid with the data collection periods for BICS waves 1 to 3. Rows 2 and 3: Marginal contact intensity stratified by biological sex and race/ethnicity across BICS waves 1 to 3. Solid lines represent the posterior means, and shaded bands represent the 95\% CI.}
    \label{fig:bics-marginal-contact-intensity}
\end{figure}

\begin{figure}[!h]
    \centering
    \includegraphics[width=\linewidth]{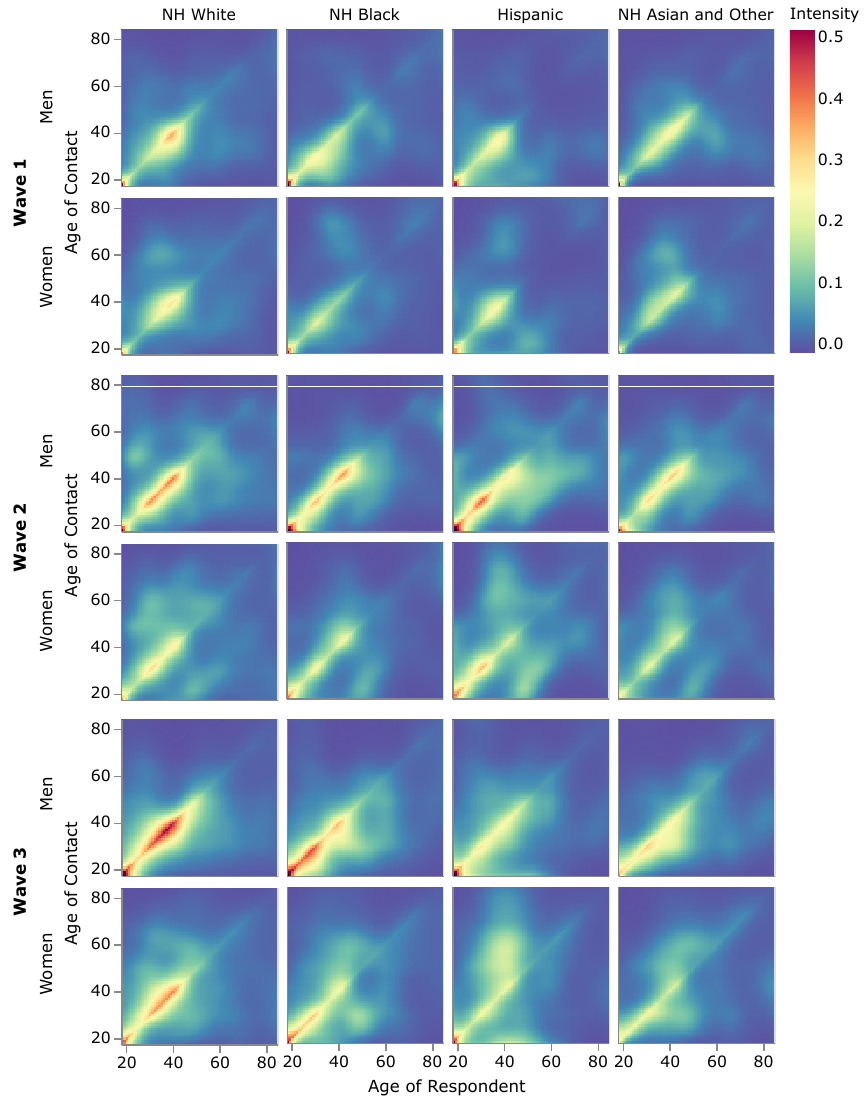}
    \caption{Posterior mean of partially stratified contact intensity matrices stratified by sex and race/ethnicity across BICS waves 1 to 3. Warmer colors indicate higher intensity with values capped at 0.5 for visualization purposes.}
    \label{fig:bics-partial-contact-intensity}
\end{figure}

\begin{figure}[!h]
    \centering
    \includegraphics[width=\linewidth]{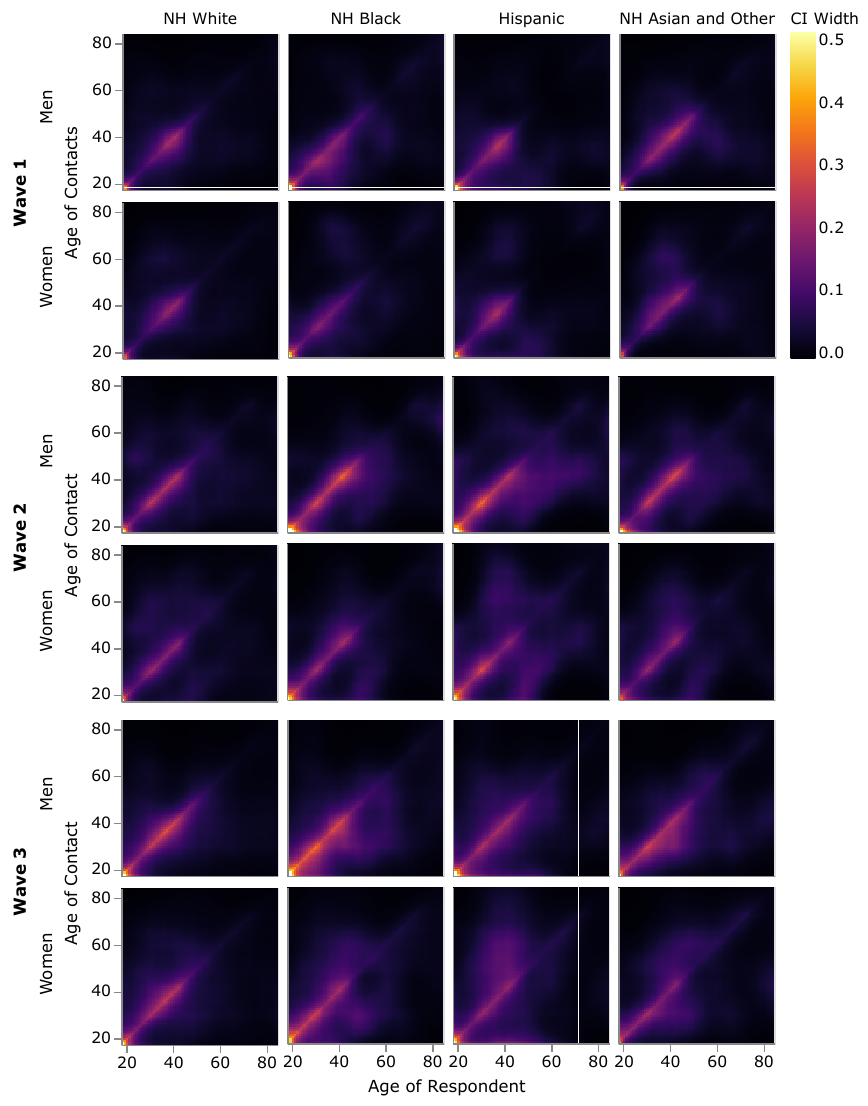}
    \caption{Width of posterior 95\% credible intervals of partially stratified contact intensity matrices stratified by sex and race/ethnicity across BICS waves 1 to 3. Warmer colors indicate higher  with values capped at 0.5 for visualization purposes.}
    \label{fig:bics-partial-contact-intensity-ci}
\end{figure}
\section{Additional Analysis Details for COVIMOD}

\subsection{Data Preprocessing for Participant and Contact Data}
\subsubsection*{Participant Data}
To maintain consistency in age resolution within our modelling framework, participants who reported their age only as an age range were assigned 1-year age drawn uniformly from within their reported range. The study sample was restricted to individuals aged 15 to 84 (above the legal working age). We cap the number of group contacts at the 95\% quantile and apply Gaussian smoothing to mitigate the effect of outliers and sampling noise in our final estimates.

\subsubsection*{Contact Data}
The theoretical reciprocity constraints underpinning our statistical
framework mathematically require the contact intensity matrices to be square. Consequently,
to conform with the respondent age domain, the contact records were similarly restricted to
contacted individuals aged 15 to 84 years.

\subsection{Population Data}

\subsubsection*{Data Sources}
Population size estimates by 1-year age resolution for the SES categories defined in COVIMOD (Managerial, Clerical, Retired/Unemployed, Self-employed, and Manual) are not directly available. The estimates used in our analysis are constructed using a synthetic triangulation method, relying on four specific datasets obtained from the Eurostat Data Browser (\url{https://ec.europa.eu/eurostat/databrowser}). All data is filtered for the reference year 2019.

The baseline population denominator (the absolute population counts stratified by 1-year age) is derived from Eurostat's demographic database \citep{eurostat_demo_pjan}. To stratify this baseline into the five SES categories, we utilise three subsets of the European Union Labour Force Survey. Employment, unemployment, and inactivity totals are extracted to establish the non-working baseline \citep{eurostat_lfsa_pganws}. The active workforce is then partitioned into empolyees and self-employed individuals \citep{eurostat_lfsa_egaps}. Finally, we apply ISCO-08 occupational classifications to the employee subset, allowing us to derive the managerial, clerical and manual labour population estimates \citep{eurostat_lfsa_egais}.

\subsubsection*{Derivation of Stratified Estimates}
Population sizes in the European Union Labour force surveys (tables with prefix \texttt{lfsa}) are reported in broad age groups (e.g., 5-year bands, denoted here by $g$), we first calculate the ratios of each category for a given age group. We then apply these ratios to the 1-year population counts $P_{a}$ from \texttt{demo\_pjan}, where $a \in g$.

To determine the baseline non-working distribution, we utilise the labour status dataset (\texttt{lfsa\_pganws}). We extract the total reference population count ($P_{\text{total}, g}$), the unemployed count ($P_{\text{une}, g}$), and the inactive count ($P_{\text{inac}, g}$) for age group $g$. The ratio for the retired/unemployed stratum is calculated as follows:
$$
R_{\text{non-working}, g} = \frac{P_{\text{une}, g} + P_{\text{inac}, g}}{P_{\text{total}, g}}
$$

Next, we calculate the self-employed ratio. Because the professional status dataset (\texttt{lfsa\_egaps}) lacks a total population denominator, we must first compute the ratio of the employed workforce ($P_{\text{emp},g}$) relative to the total population ($P_{\text{total},g}$) using \texttt{lfsa\_pganws}. We then multiply this by the self-employment share derived from \texttt{lfsa\_egaps}:
\begin{align*}
R_{\text{workforce}, g} &= \frac{P_{\text{emp}, g}}{P_{\text{total}, g}} \\
R_{\text{self-emp}, g} &= R_{\text{workforce}, g} \times \frac{P_{\text{self}, g}}{P_{\text{emp}, g}}
\end{align*}

Similarly, we calculate the salaried employee (i.e., non-self-employed) share ($R_{\text{sal}, g}$) within the general population using the corresponding salaried employee count ($P_{\text{sal}, g}$):
\begin{equation}
    R_{\text{sal}, g} = R_{\text{workforce}, g} \times \frac{P_{\text{sal}, g}}{P_{\text{emp}, g}}
\end{equation}

We then distribute the salaried employees into distinct occupational strata. Using the \texttt{lfsa\_egais} dataset filtered strictly for salaried employees ($SAL$), we map the International Standard Classification of Occupations (ISCO-08) groups to our SES classifications. The ISCO-08 framework categorises occupations into nine major groups: Managers (Group 1), Professionals (Group 2), Technicians and associate professionals (Group 3), Clerical support workers (Group 4), Service and sales workers (Group 5), Skilled agricultural, forestry and fishery workers (Group 6), Craft and related trades workers (Group 7), Plant and machine operators and assemblers (Group 8), and Elementary occupations (Group 9).

To find the occupational sub-shares, we sum the relevant ISCO group counts ($P_{\text{ISCO}i, g}$) and divide by the total reported ISCO employees ($P_{\text{ISCO\_total}, g}$), before multiplying by the overall salaried employee ratio ($R_{\text{sal}, g}$). The managerial stratum comprises Groups 1 and 2:
$$
R_{\text{managerial}, g} = R_{\text{sal}, g} \times \frac{P_{\text{ISCO}1, g} + P_{\text{ISCO}2, g}}{P_{\text{ISCO\_total}, g}}
$$
The clerical comprises Groups 3, 4, and 5:
$$
R_{\text{clerical}, g} = R_{\text{sal}, g} \times \frac{P_{\text{ISCO}3, g} + P_{\text{ISCO}4, g} + P_{\text{ISCO}5, g}}{P_{\text{ISCO\_total}, g}}
$$
The manual labor stratum aggregates Groups 7, 8, and 9:
$$
R_{\text{manual}, g} = R_{\text{sal}, g} \times \frac{P_{\text{ISCO}7, g} + P_{\text{ISCO}8, g} + P_{\text{ISCO}9, g}}{P_{\text{ISCO\_total}, g}}
$$
Finally, we apply these ratios to the precise 1-year age population data. For each individual age $a$ belonging to group $g$, the final population count $P$ for a given socioeconomic stratum $s \in \{\text{managerial, clerical, non-working, self-emp, manual}\}$ is computed as:
$$
P^s_a = P_a \times R_{s, g}
$$
The sum of all stratum ratios ($R_{s, g}$) equals 1.0, ensuring that the sum of our estimated SES strata strictly recovers the official population counts, such that $\sum_{s} P^s_a = P_{a}$.

\subsection{Preprocessing of Group Contacts}
The group contacts in COVIMOD were not as extreme or noisy as those in BICS, hence the did not require additional modelling. However, to prevent outliers and influential points from overly influencing the estimates we apply smoothing with a Gaussian kernel to the group contact offsets. An appropriate value for the variance parameter of the Gaussian kernel was selected using leave-one-out cross-validation.

\subsection{Additional Modeling Considerations}
In COVIMOD, the age information of the contacts are reported in coarse age ranges:
$$
\mcB = \qty{\text{15-19},\ \text{20-24},\ \text{25-34},\ \text{35-44},\ \text{45-54},\ \text{55-64},\ \text{65-69},\ \text{70-74},\ \text{75-79},\ \text{80-84}}.
$$
We handle this using the framework in \citep{dan_estimating_2023} where for age range $g \in \mcB$ the likelihood function is altered to
\begin{equation*}
Y^{k,.}_{a,g} \sim \text{NegBin}\qty(\sum_{b \in g}\EE[Y^{k,.}_{a,b}], \varphi).
\end{equation*}

It is known that in longitudinal social contact surveys such as COVIMOD where a subset of participants are enrolled multiple times, contact reports are biased by response fatigue \citep{wong_social_2023, dan_addressing_2025}. To address this, we follow the modeling approach in \citep{dan_addressing_2025}. Let $r \in \mathbb{N}$ denote the number of times a participant has been enrolled to answer the survey. We include an additional response fatigue adjustment term in our linear predictor such that: 
\begin{align*}
Y^{k,.}_{a,g,r} &\sim \text{NegBin}\qty(\sum_{b \in g}\EE[Y^{k,.}_{a,b,r}], \varphi) \\
\log \EE[Y^{k,.}_{a,b,r}] &= \log \gamma_{a,b} + \log \delta^{k,.}_{a,b} + \rho(r) + \log N^k_{a,r} + \log P_b + \log V^k_{a,r}.
\end{align*}
Here, $\rho(r)$ is the Hill function used to model dose-response relationships and has the following functional form:
$$
\rho(r) = -\gamma \frac{e^\zeta r^\eta }{1 + e^\zeta r^\eta}, \quad \gamma, \eta \in \RR_+, \zeta \in \RR.
$$
We assign the following priors to the parameters:
$$
\gamma \sim \text{Half-Normal}^+(0,1), \quad \eta \sim \text{Exponential}(1), \quad \zeta \sim \text{Normal}(0,1).
$$
This term down weights the value of the log expectation to match the reduction in the data. As a result, the contact intensity calculated as $m^{k,.}_{a,b} = \gamma_{a,b}\delta^{k,.}_{a,b}P_b$ is inflated. We fit this fatigue and group contact adjusted model independently to each of the five phases of COVIMOD.

\subsection{Additional Results}
We provide additional figures that supplement the main analysis. In \Cref{fig:covimod-marginal-contact-intensity} we show the posterior marginal contact intensity stratified by socioeconomic status for the 5 COVIMOD phases. In \Cref{fig:covimod-partial-contact-intensity} we show the posterior mean of the partially stratified contact matrices inferred by our model for each phase.

\begin{figure}[!h]
    \centering
    \includegraphics[width=\linewidth]{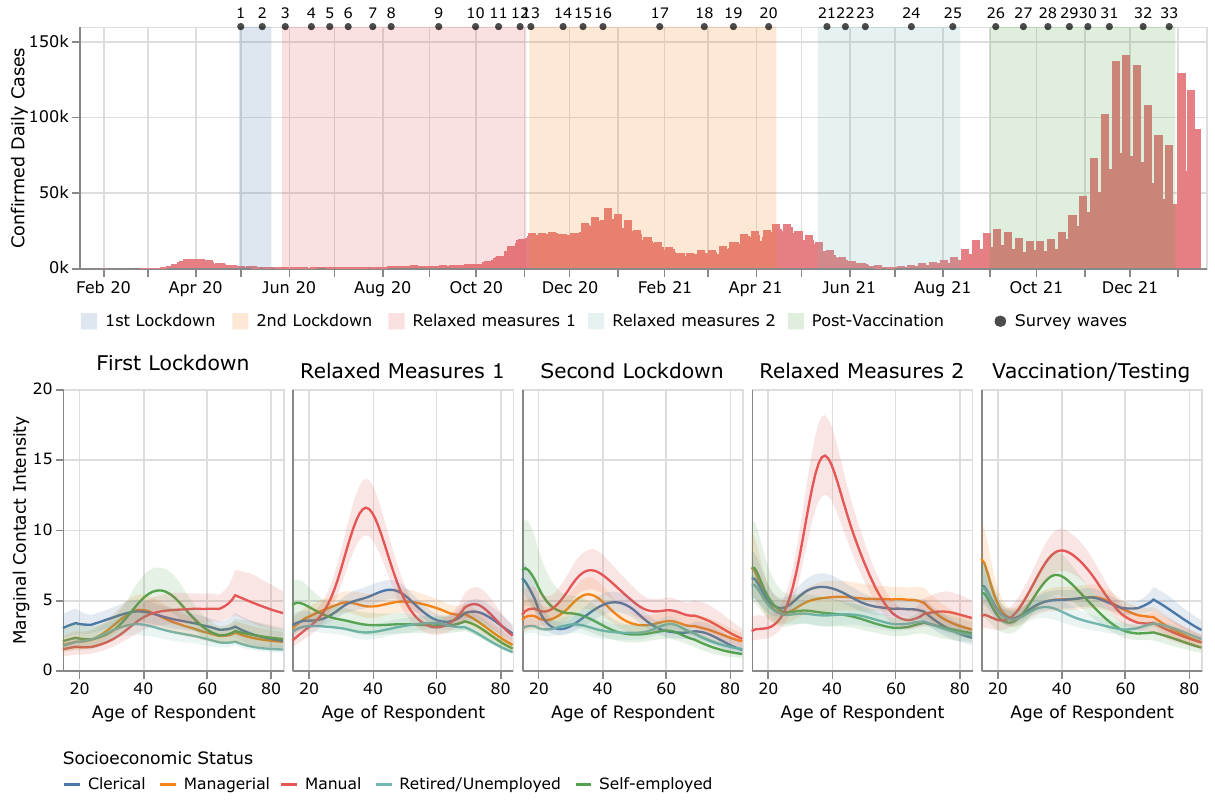}
    \caption{Top row: Confirmed daily cases of COVID-19 in the United States obtained from \citet{dong_interactive_2020}, overlaid with the data collection periods for COVIMOD waves 1 to 33. Rows 2 and 3: Marginal contact intensity stratified by socioeconomic status (SES) for COVIMOD waves 1 to 33. Solid lines represent the posterior means, and shaded bands represent the 95\% CI.}
    \label{fig:covimod-marginal-contact-intensity}
\end{figure}

\begin{figure}[!h]
    \centering
    \includegraphics[width=\linewidth]{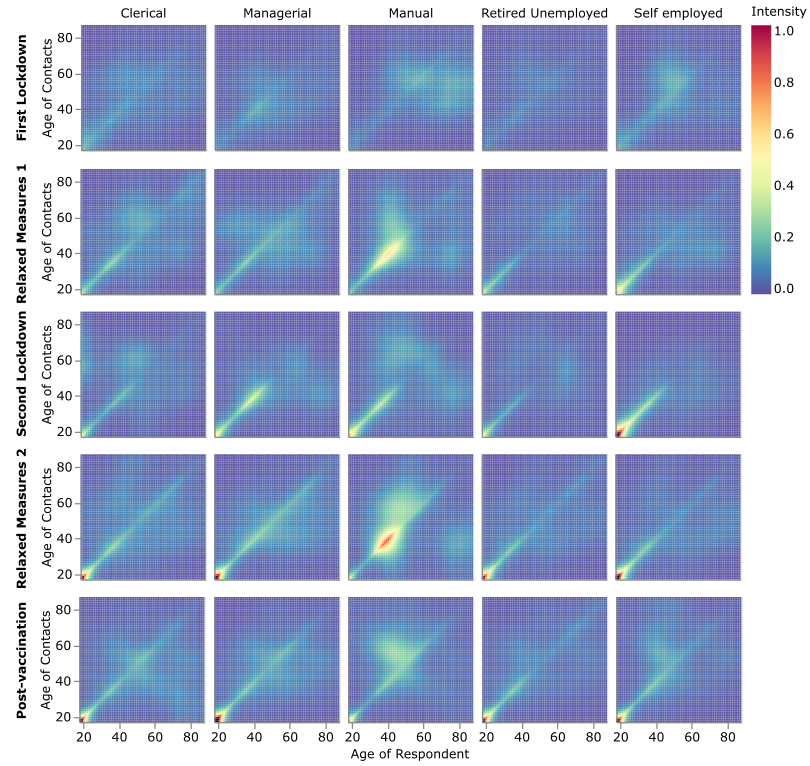}
    \caption{Posterior mean of partially stratified contact intensity matrices by socioeconomic status across COVIMOD phases. Warmer colors indicate higher intensity with values capped at 1.0 for visualization purposes.}
    \label{fig:covimod-partial-contact-intensity}
\end{figure}

\begin{figure}[!h]
    \centering
    \includegraphics[width=\linewidth]{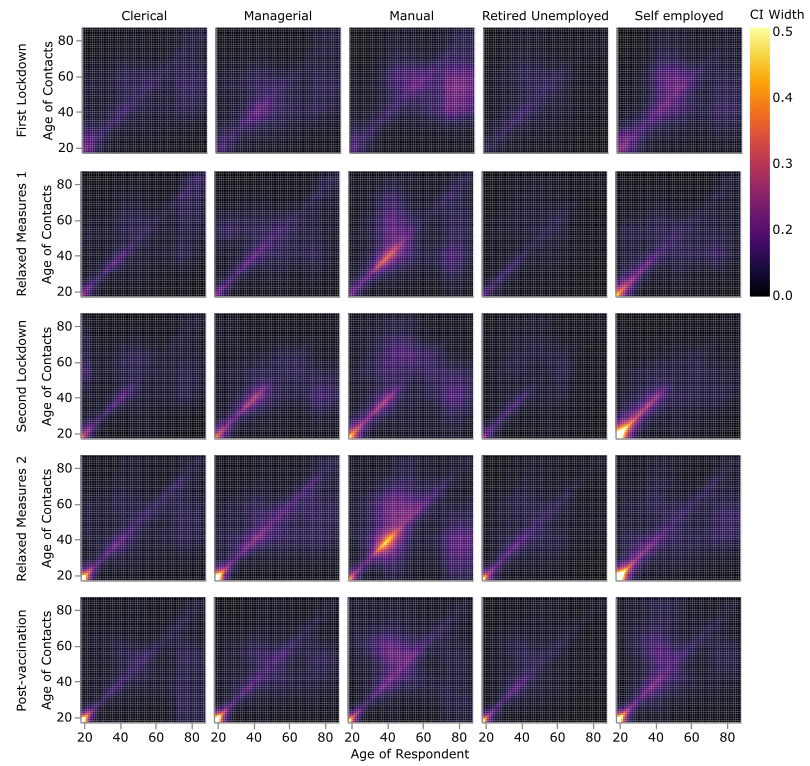}
    \caption{Width of posterior 95\% credible intervals (CI) of partially stratified contact intensity matrices by socioeconomic status across COVIMOD phases. Warmer colors indicate higher uncertainty. Values capped at 0.5 for visualization purposes.}
    \label{fig:covimod-partial-contact-intensity-ci}
\end{figure}

\putbib[definitions,references]
\end{bibunit}

\end{document}